\theoremstyle{plain}
\newtheorem{theorem}{Theorem}[section]
\newtheorem{definition}[theorem]{Definition}
\newtheorem{proposition}[theorem]{Proposition}
\newtheorem{lemma}[theorem]{Lemma}
\newtheorem{assumption}[theorem]{Assumption}
\newtheorem{example}[theorem]{Example}
\newtheorem{remark}[theorem]{Remark}
\numberwithin{equation}{section}
\numberwithin{theorem}{section}
\newcommand{\lb}{\left\{}
\newcommand{\rb}{\right\}}
\newcommand{\Def}{\overset{\text{def}}{=}}
\newcommand{\R}{\mathbb{R}}
\newcommand{\N}{\mathbb{N}}
\newcommand{\Err}{\mathcal{E}}
\newcommand{\vkap}{\varkappa}
\newcommand{\eps}{\varepsilon}
\newcommand{\Borel}{\mathscr{B}}
\newcommand{\Pspace}{\mathscr{P}}
\newcommand{\BP}{\mathbb{P}}
\newcommand{\BE}{\mathbb{E}}
\newcommand{\filt}{\mathscr{F}}
\newcommand{\dilt}{\mathscr{D}}
\newcommand{\Pint}{{\Pspace[0,1]}}
\newcommand{\OO}{\mathcal{O}}
\DeclareMathOperator{\dist}{dist}
\DeclareMathOperator{\supp}{supp}
\DeclareMathOperator{\dom}{\text{dom}}
\DeclareMathOperator{\ridom}{\text{ri dom}}
\newcommand{\XX}{\mathsf{X}}
\newcommand{\UU}{\mathsf{U}}
\newcommand{\KComp}{\mathcal{K}}
\newcommand{\ppp}{\mathsf{\gamma}}
\newcommand{\BB}{\mathcal{B}}
\newcommand{\SubProb}{\mathscr{M}_1(\XX)}
\newcommand{\pt}{\star}
\newcommand{\ZZ}{\mathcal{Z}}
\newcommand{\IRec}{\mathcal{R}}
\newcommand{\aff}{\text{aff}}
\newcommand{\qq}{\text{q}}
\newcommand{\wprho}{\varrho}
\newcommand{\esquare}{\begin{flushright}$\square$\end{flushright}}
\begin{document}
\title{Recovery Rates in investment-grade pools of credit assets:\\A large deviations analysis}
\author{Konstantinos Spiliopoulos}
\address{Division of Applied Mathematics\\
Brown University\\
Providence, RI 02912}
\email{kspiliop@dam.brown.edu}

\author{Richard B. Sowers}
\address{Department of Mathematics\\
    University of Illinois at Urbana--Champaign\\
    Urbana, IL 61801}
\email{r-sowers@illinois.edu}

\date{\today.}

\begin{abstract} We consider the effect of recovery rates on a pool of
credit assets. We allow the recovery rate to depend on the defaults in a general way.  Using the theory of large deviations, we study the structure of losses
in a pool consisting of a continuum of types. We derive the corresponding rate function and show that it has a natural interpretation as the favored way to rearrange recoveries and losses among the different types. Numerical examples are also provided.

\noindent \textbf{Keywords} Recovery rates, Default rates, Credit assets,  Large deviations

\noindent \textbf{Mathematics Subject Classification}  60F05$\cdot$60F10$\cdot$91G40

\noindent \textbf{JEL Classification}  G21$\cdot$G33

\end{abstract}

\maketitle

\section{Introduction}

Understanding the behavior of \emph{large pools of credit assets} is currently a problem of central importance.  Banks often hold such large pools
and their risk-reward characteristics need to be carefully managed.  In many cases, the losses in the pool are (hopefully) \emph{rare} as a consequence of diversification.
In  \cite{SowersCDOI},
we have used the theory of large deviations to gain some insight
into several aspects of rare losses in pools of credit assets.  Our interest here is
the effect of \emph{recovery}.  While a creditor either defaults or
doesn't (a Bernoulli random variable), the amount recovered may in
fact take a continuum of values.  Although many models assume that
recovery rate is constant---i.e., a fixed deterministic percentage
of the par value, in reality the
statistics of the amount recovered should be a bit more complicated.
The statistics of the recovered amount should
depend on the number of defaults; a large number of defaults
corresponds to a bear market, in which case it is more difficult to
liquidate the assets of the creditors. Our goal is to understand how to include
this effect in the study of rare events in large pools.
We would like to look at these rare events via
some ideas from statistical mechanics, or more accurately the theory
of \emph{large deviations}.  Large deviations formalizes the idea
that nature prefers ``minimum energy'' configurations when rare
events occur.  We would like to see how these ideas can be used
in studying the interplay between default rate and recovery rate.

Our work is motivated by the general challenge of understanding the effects
of nonlinear interactions between various parts of complicated financial systems.
One of the strengths of the theory of large deviations is exactly that it allows one
to focus on propagation of rare events in networks.  Our interest here
is to see how this can be implemented in a model for recovery rates
which depend on the default rate.

This work is part of a growing body of literature which applies the theory of large deviations to problems of rare losses in credit assets and tries to approximate the tail distribution of total losses in a portfolio consisting of many positions.
For a survey on some existing large deviations methods applied to finance and credit risk see \cite{MR2384674}. In \cite{GKS07,GL05} rare event asymptotics for the loss distribution in the Gaussian copula model of portfolio credit risk and related  importance sampling questions are studied. In \cite{Gordy02} the author considers
saddle point approximations to the tails of the loss distribution.
Measures such as Value-at-Risk (VaR) and expected shortfall have been developed in order to characterize the risk of a portfolio as a whole, see for example \cite{DS03, Gordy03, MFE05,FM02,FM03}. Issues of recovery have also been considered in the works
\cite{ABRS05,Das-Hanouna} and in the references therein. Our work is most closely related to \cite{MR2022976, leijdekker-2009}, where the dynamics of a configuration
of defaults was studied.  In \cite{MR2022976} it is assumed that  a "macro-environmental" variable $Y$, to which all the obligors react, can be chosen so that conditional on $Y$, the recovery and default rates in a pool of finite number of "types" are independent. In  \cite{leijdekker-2009}  the recovery rate is assumed to be independent
of the defaults in a pool of one type with a brief treatment of a pool with finite types.  Our work here is explicitly interested in the \emph{dependence} of the recovery rate on the fraction of the number of defaults and the framework of our efforts is a continuum of
types.    The case of a continuum
of types requires slightly more topological sophistication. The dependence assumption and the continuum of types unavoidably complicate the proof since several auxiliary
technical results need to be proven.

The novelty of our result stems from the following two things.
The first novelty is that the distribution of the recoveries for the defaulted assets depends upon the number of defaults in a fairly general way.  This allows for consideration of the
case when recovery rates are affected by the number of defaults. Note however that
 the individual defaults are assumed independent. The second novelty is more
technical in nature. In particular, we prove the large deviations principle  for the joint family of random variables $(L_{N}, \nu_{N})$
where $L_N$ is the average loss and $\nu_N$ is the  empirical measure on type-space determined
by the names which default.  In addition to
the large deviations principle, we derive various equivalent  representations for the rate function which give some more insight to the favored way
to rearrange recoveries and losses among the different types and also ease its numerical computation.

The  paper is organized as follows. In Section \ref{S:Model} we introduce our model and establish some notation.
In Section \ref{S:TypicalEvents} we study the ``typical'' behavior of the loss of our pool; we need to understand this before we can identify
what behavior is ``atypical''. In Section \ref{S:Heuristics}, we present some formal sample calculations and  our main results, Theorems \ref{T:Main} and \ref{T:AlternativeExpression}. These calculations are indicative of the range
of possibilities and they illustrate the main results.  In Section \ref{S:ExamplesDiscussion} we present some numerical examples which
 illustrate some of the main aspects of our analysis and conclude with a discussion and remarks on future work.
 The proof of Theorem \ref{T:Main} is in Sections \ref{S:lsc}, \ref{S:Lower}
 and \ref{S:Upper}. Section \ref{S:AlternateExpression} contains an alternative expression of the rate function, which is a variational formula
which optimizes over all possible configuration of recoveries and defaults, and which leads to a Lagrange multiplier approach which can be numerically implemented.
Lastly, Section \ref{S:AuxiliaryLemmas} contains the proofs of several necessary technical results.

The model at the heart of our analysis is in fact very stylized.  Since our primary interest is the interaction between default rates and recovery rates, our model focuses on this effect, but simplifies a number of other effects.  In particular, we assume that the defaults themselves are independent.
Our work complements the existing literature  and hopefully contributes to our  understanding of the interplay and interaction of recovery and default rates. It seems plausible that more realistic models (e.g., which include
a systemic source of risk) can be analyzed by techniques which are extensions of those developed here.

\section{The Model}\label{S:Model}

In this section we introduce our model and explain our goals.
Let's start by considering a single bond (or ``name'').
For reference, let's assume that all bonds have par value of $\$1$.
If the bond defaults, the assets underlying the bond
are auctioned off and the bondholder recovers $r$ dollars, where $r\in [0,1]$.
We will record the default/survival coordinate as an element of $\{0,1\}$, where $1$ corresponds
to a default and $0$ to survival.

  The pool suffers a loss
when a bond defaults, and the amount of the loss is $\$1-r$, where $r\in[0,1]$ is the recovery amount (in dollars). Define
\begin{equation*}
 \Delta_{n} \Def \begin{cases} 1 &\text{if the $n^{\text{th}}$ name has defaulted}\\
0 &\text{otherwise} \end{cases}
\end{equation*}
and let $\ell_{n}$ be the non-zero loss that occurs in the event of default of the $n^{\text{th}}$ name, i.e., when $\Delta_{n}=1$.
Equivalently, $\ell_{n}$ is the non-zero exposure of the $n^{\text{th}}$ name that occurs when $\Delta_{n}=1$. Clearly if $r_{n}$ is the recovery of the $n^{\text{th}}$ name
then $\ell_{n}=1-r_{n}$.

The default and loss rates in the pool are then
\begin{equation}\label{Eq:DefaultRecoveryRates}
 D_N\Def \frac{1}{N}\sum_{n=1}^N\Delta_n \qquad \text{and}\qquad L_N \Def  \frac{1}{N}\sum_{n=1}^N\ell_n \Delta_{n}.
\end{equation}

We denote by $p^{N,n}$ the risk-neutral probability
of default of the $n^{\text{th}}$ name in a pool of $N$ names. Moreover, let $\wprho^{N,n}(D_{N},dr)$ be the distribution of the recovery of the $n^{\text{th}}$ name in a pool of $N$ names. Notice that we allow $\wprho^{N,n}(D_{N},dr)$ to depend on the default rate $D_{N}$.

Let us make now the aforementioned discussion rigorous by introducing the probability space that we consider.
The minimal state space for a single bond is the set $E_\circ \Def \{0\}\cup \left(\{1\}\times [0,1]\right)$.
Since we want to
consider a pool of bonds, the state space in our
model will be $E \Def E_\circ^\N$, where $\N \Def \{1,2\dots \}$.
$E_\circ$, as a closed subset of $\{0,1\}\times [0,1]$
is also Polish, and thus $E$ is also Polish.  We endow $E$ with the natural $\sigma$-algebra $\filt \Def \Borel(E)$.


The only remaining thing to specify is a probability measure on $(E,\filt)$.  For each $N\in \N$, fix $\{p^{N,n}: n\in \{1,2\dots N\}\}\subset [0,1]$.  These are the risk-neutral
default probabilities of the names
when the pool has $N$ names.  We next fix $\{\wprho^{N,n}: n\in \{1,2\dots N\}\}\subset C([0,1];\Pint)$\footnote{We shall write $\wprho\in C([0,1];\Pint)$ as a map from $[0,1]\times \Borel[0,1]$ into $[0,1]$
such that for each $D\in [0,1]$, $A\mapsto \wprho(D,A)$ is an element of $\Pint$ and such that $D\mapsto \wprho(D,\cdot)$ is weakly continuous.}; i.e., a collection of probability
measures on $[0,1]$, the range of the recovery,  indexed by the default rate $D_{N}$.  Namely, for each $N\in\N$ and $n\in\{1,2,\cdots,N\}$, $\wprho^{N,n}$ is a probability measure that depends on the default rate $D_{N}$. Some concrete examples are in Subsection  \ref{SS:NumericalExamples}.

For each $n\in \N$ and $\omega=(\omega_1,\omega_2\dots )$, define the coordinate random variable
$X_n(\omega)=\omega_n$.  For each $N\in \N$,
we then fix our risk-neutral probability measure $\BP_N\in \Pspace(E)$ (with associated expectation
operator $\BE_N$) by requiring that
\begin{equation*} \BE_N\left[\prod_{n=1}^Nf_n(X_n)\right]=\BE_N\left[\prod_{n=1}^N\lb (1-p^{N,n})f_n(0)+ p^{N,n}\int_{r\in [0,1]}f_n(1,r)\wprho^{N,n}(D_N,dr)\rb\right]\end{equation*}
for all $\{f_n\}_{n=1}^N\subset B(E_\circ)$.  In particular, the aforementioned construction implies that for each $n\in \N$,
$\{\Delta_n\}_{n=1}^N$ is an independent collection of random variables with $\BP_N\{\Delta_N=1\}=p^{N,n}$
under $\BP_N$.


With this probabilistic structure in place, we will clearly want to be able to condition on the default rate so that we can then focus on the recovery rates.
To this end we define the $\sigma$-algebra
$$\dilt \Def \sigma\{\Delta_n: n\in \N\}.$$

We are interested in the \emph{typical behavior} and \emph{rare events} in this system.  We seek to understand the structure of these rare events in our model.
We are interested in the following questions:
\begin{itemize}
\item What is the typical behavior of the system? In other words we want to characterize $\bar L\Def \lim_{N\to \infty}L_N$.
\item What can we say about rare events in this general setting? In particular we  compute the asymptotics of $\BP_N\{L_N\ge l\}$ as $N\to \infty$, particularly for $l>\bar L$. Then $\{L_n\ge l\}$ is an ``atypical'' event.
\end{itemize}
Understanding the structure of rare events may give guidance in how to optimally control how rare events propagate in large financial networks.

\begin{remark}
Our formulation  allows for a fairly general dependence of the recovery distribution on the default rate. This is also partially exploited in Subsection
 \ref{SS:NumericalExamples} where we study several concrete examples. On the other hand, the general treatment of this paper unavoidably complicates the proof of the
 law of large numbers and especially of the large deviations principle.
\end{remark}

\section{Typical Events: A Law of Large Numbers}\label{S:TypicalEvents}

Let's start our analysis by identifying the ``typical'' behavior of $L_N$ as $N\to \infty$ defined by equation (\ref{Eq:DefaultRecoveryRates}). In order to motivate the calculations, we first
investigate how  $D_N$ and $L_N$ behave as $N\to \infty$ in a homogeneous pool, Example \ref{Ex:onetype}, and in a heterogeneous pool of two types, Example \ref{Ex:twotypes}.

\begin{example}[Homogeneous Example]\label{Ex:onetype}
Fix $p\in [0,1]$ and $\wprho\in C([0,1];\Pint)$ and let $p^{N,n}=p$
and $\wprho^{N,n}=\wprho$ for all $N\in \N$ and $n\in \{1,2\dots N\}$.

Let's see what $D_{N}$ and $L_N$ look like in this setting.  By the law of large numbers $\lim_{N\to \infty}D_N=p$.  Thus in distribution
\begin{equation*}L_N \approx \frac{1}{N}\sum_{1\le n\le pN} (1-\xi_n) \end{equation*}
where the $\xi_n$'s are i.i.d. with distribution $\wprho(p,\cdot)$.
We should consequently have that
\begin{equation*} \lim_{N\to \infty}L_N = p\int_{r\in [0,1]}(1-r) \wprho(p,dr). \end{equation*}
\esquare
\end{example}

\begin{example}[Heterogeneous Example]\label{Ex:twotypes}  Fix $p_A$ and $p_B$ in $[0,1]$ and fix
$\wprho_A$ and $\wprho_B$ in $C([0,1];\Pint)$.
Every third bond will be of type $A$ and have default probability $p_A$
and recovery distribution
governed by $\wprho_A$, and the remaining bonds will have
default probability $p_B$ and recovery distribution governed by $\wprho_B$. In other words,   the rate of default among the $A$ bonds is $p_A$
and the rate of default among the $B$ bonds is $p_B$.

For future reference, let's separate the defaults and into the
 two types.  If we define
\begin{equation*} D^A_N \Def \frac{1}{\lfloor N/3\rfloor}\sum_{\substack{1\le n\le N \\ n\in 3\N}}\Delta_n \qquad \text{and}\qquad D^B_N \Def \frac{1}{N-\lfloor N/3\rfloor}\sum_{\substack{1\le n\le N \\ n\not \in 3\N}}\Delta_n.\end{equation*}
we have
\begin{equation*} D_N = \frac{\lfloor N/3\rfloor}{N}D^A_N+\frac{N-\lfloor N/3\rfloor}{N}D^B_N \approx \frac13 D^A_N + \frac23 D^B_N. \end{equation*}
Thus $\lim_{N\to \infty}D^A_N=p_A$ and $\lim_{N\to \infty}D^B_n=p_B$, so
\begin{equation*} \lim_{N\to \infty} D_N = \bar D \Def \frac{p_A}{3} + \frac{2p_B}{3}. \end{equation*}
Thus we should roughly have
\begin{equation*}L_N \approx \frac{1}{N}\lb \sum_{1\le n\le p_AN/3}(1-\xi^A_n)
+\sum_{1\le n\le 2p_BN/3}(1-\xi^B_n)\rb \end{equation*}
where the
$\xi^A_n$'s have law $\wprho_A(\bar D,\cdot)$, the $\xi^B_n$'s have
distribution $\wprho_B(\bar D,\cdot)$ and the $\xi^A_n$'s and $\xi^B_n$'s
are all independent.  Consequently, it seems natural that
\begin{equation*} \lim_{N\to \infty}L_N = \frac{p_A}{3}\int_{r\in [0,1]}(1-r) \wprho_A(\bar D,dr)+\frac{2p_B}{3}\int_{r\in [0,1]}(1-r)\wprho_B(\bar D,dr), \end{equation*}
the first term being the limit of the losses from the type A names,
and the second term being the limit of the losses from the type B names.
\esquare
\end{example}

In view of our examples, it seems reasonable that we should be able
to describe the average loss in the pool in terms of a frequency count
of
\begin{equation*}
\ppp^{N,n}\Def (p^{N,n},\wprho^{N,n}).
\end{equation*}
 We note that $\ppp^{N,n}$ takes
values in the set $\XX \Def [0,1]\times C([0,1];\Pint)$.  Since $\Pint$ is Polish,
so is $C([0,1];\Pint)$, and thus $\XX$ is Polish.
We will henceforth refer to elements of $\XX$ as \emph{types}.
For each $N\in \N$, we now define $\UU_N\in \Pspace(\XX)$ as
\begin{equation*} \UU_N \Def \frac{1}{N}\sum_{n=1}^N\delta_{\ppp^{N,n}}. \end{equation*}

Assumption \ref{A:LimitExists} below is our main standing assumption. It is a natural assumption since it makes the family
$\{\UU_{N}\}_{N\in\mathbb{N}}$ relatively compact. It will be assumed throughout the paper, even though this may not be always stated explicitly.

\begin{assumption}\label{A:LimitExists} We assume that $\UU\Def \lim_{N\to \infty}\UU_N$ exists \textup{(}in $\Pspace(\XX)$\textup{)}.
\esquare
\end{assumption}
\begin{remark} In the case of Example \ref{Ex:onetype}, we have that $\UU = \delta_{(p,\wprho)}$, while in the case of Example \ref{Ex:twotypes}, we have that
$\UU = \frac13 \delta_{(p_A,\wprho_A)} + \frac23 \delta_{(p_B,\wprho_B)}$.
\end{remark}

We can now identify the limiting behavior of $L_N$.  Define
\begin{equation}\label{E:typdefrate} \bar D \Def \int_{\ppp=(p,\wprho)\in \XX} p\UU(d\ppp) \qquad \text{and}\qquad
\bar L \Def \int_{\ppp=(p,\wprho)\in \XX} \lb p\int_{r\in [0,1]}(1-r)\wprho(\bar D,dr)\rb \UU(d\ppp). \end{equation}
To see that these quantities are well-defined, note that
\begin{equation}\label{E:contmaps} (p,\wprho)\mapsto p \qquad \text{and}\qquad \nu\mapsto \int_{r\in [0,1]}(1-r)\nu(dr) \end{equation}
are continuous mappings from, respectively, $\XX$ and $\Pint$, to $[0,1] \subset \R$.  The continuity of the first map of \eqref{E:contmaps} implies that $\bar D$
is well-defined.  Combining the continuity of both maps of \eqref{E:contmaps}, we get that the map
\begin{equation*} (p,\wprho) \mapsto p\int_{r\in [0,1]}(1-r)\wprho(\bar D,dr) \end{equation*}
is a continuous map from $\XX$ to $[0,1]\subset \R$; thus $\bar L$ is also well-defined.

\begin{lemma}\label{L:LLN}Let Assumption \ref{A:LimitExists} hold and consider $\bar{L}$ given by (\ref{E:typdefrate}). Then, for each $\eps>0$, we have that
\begin{equation*} \lim_{N\to \infty}\BP_N\lb \left|L_N-\bar L\right|\ge \eps\rb = 0. \end{equation*}
\esquare
\end{lemma}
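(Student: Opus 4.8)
The plan is to condition on the default configuration and exploit the conditional independence of the recoveries. Writing $L_N=\frac1N\sum_{n=1}^N(1-r_n)\Delta_n$, the construction of $\BP_N$ says that, given $\dilt$, the non-zero recoveries $\{r_n:\Delta_n=1\}$ are independent with $r_n$ distributed as $\wprho^{N,n}(D_N,\cdot)$, and $D_N$ is $\dilt$-measurable. Introduce
$$\psi\colon\XX\times[0,1]\to[0,1],\qquad \psi\bigl((p,\wprho),D\bigr)\Def\int_{r\in[0,1]}(1-r)\,\wprho(D,dr),$$
so that $\BE_N[L_N\mid\dilt]=\frac1N\sum_{n=1}^N\Delta_n\,\psi(\ppp^{N,n},D_N)$ and $\bar L=\int_{\ppp=(p,\wprho)\in\XX}p\,\psi(\ppp,\bar D)\,\UU(d\ppp)$. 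The first thing to check is that $\psi$ is jointly continuous: if $\wprho_k\to\wprho$ in $C([0,1];\Pint)$ and $D_k\to D$, then combining the uniform convergence of $\wprho_k$ with the weak continuity of $\wprho$ gives $\wprho_k(D_k,\cdot)\to\wprho(D,\cdot)$ weakly, and hence $\int(1-r)\wprho_k(D_k,dr)\to\int(1-r)\wprho(D,dr)$ since $r\mapsto1-r$ is bounded and continuous. In particular $\ppp\mapsto p\,\psi(\ppp,\bar D)$ is bounded and continuous on $\XX$.

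Next I would record that $D_N\to\bar D$ in probability: $\BE_N[D_N]=\int_\XX p\,\UU_N(d\ppp)\to\int_\XX p\,\UU(d\ppp)=\bar D$ by Assumption \ref{A:LimitExists} and continuity/boundedness of $(p,\wprho)\mapsto p$, while $\mathrm{Var}_N(D_N)=\frac1{N^2}\sum_n p^{N,n}(1-p^{N,n})\le\frac1{4N}$, so Chebyshev's inequality finishes it. Then, since given $\dilt$ the summands of $L_N$ are independent and take values in $[0,1]$,
$$\BE_N\bigl[(L_N-\BE_N[L_N\mid\dilt])^2\bigr]=\BE_N\bigl[\mathrm{Var}_N(L_N\mid\dilt)\bigr]=\BE_N\Bigl[\tfrac1{N^2}\sum_{n=1}^N\Delta_n\,\mathrm{Var}_N(r_n\mid\dilt)\Bigr]\le\tfrac1{4N},$$
so it suffices to prove $\BE_N[L_N\mid\dilt]\to\bar L$ in probability. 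I split this as $A_N+B_N$ with $A_N\Def\frac1N\sum_n\Delta_n\,\psi(\ppp^{N,n},\bar D)$ and $B_N\Def\frac1N\sum_n\Delta_n\bigl[\psi(\ppp^{N,n},D_N)-\psi(\ppp^{N,n},\bar D)\bigr]$. For $A_N$, the weights $\psi(\ppp^{N,n},\bar D)\in[0,1]$ are deterministic, so $\BE_N[A_N]=\int_\XX p\,\psi(\ppp,\bar D)\,\UU_N(d\ppp)\to\bar L$ (bounded continuity of $\ppp\mapsto p\,\psi(\ppp,\bar D)$ and $\UU_N\to\UU$) and $\mathrm{Var}_N(A_N)\le\frac1{4N}$; Chebyshev gives $A_N\to\bar L$ in probability.

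The main obstacle is controlling $B_N$, because $\psi$ need not be uniformly continuous in $D$ over the non-compact type space $\XX$; here I would invoke tightness. Using $\Delta_n\le1$ and $|\psi|\le1$, $|B_N|\le\frac1N\sum_n|\psi(\ppp^{N,n},D_N)-\psi(\ppp^{N,n},\bar D)|$. Fix $\eta>0$. Since $\UU_N\to\UU$ in $\Pspace(\XX)$ and $\XX$ is Polish, the family $\{\UU_N\}$ is tight (Prokhorov's theorem), so there is a compact $K\subset\XX$ with $\#\{n\le N:\ppp^{N,n}\notin K\}=N\,\UU_N(\XX\setminus K)<\eta N$ for every $N$. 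On the compact set $K\times[0,1]$ the function $\psi$ is uniformly continuous, so choose $\delta>0$ with $|\psi(\ppp,D)-\psi(\ppp,\bar D)|<\eta$ whenever $\ppp\in K$ and $|D-\bar D|\le\delta$. Splitting the sum over $\{\ppp^{N,n}\in K\}$ and its complement, on the event $\{|D_N-\bar D|\le\delta\}$ one gets $|B_N|<\eta+\eta=2\eta$; since $\BP_N(|D_N-\bar D|>\delta)\to0$ by the previous step, this yields $B_N\to0$ in probability. Combining the pieces gives $\BE_N[L_N\mid\dilt]\to\bar L$ and hence $L_N\to\bar L$ in probability, which is the assertion. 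I expect the tightness argument to be the only non-routine point; everything else is a second-moment estimate.
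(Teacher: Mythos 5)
Your proof is correct and follows essentially the same route as the paper: condition on $\dilt$, use second-moment (Chebyshev) bounds for the conditional fluctuation and the Bernoulli sums, and handle the $D$-dependence of the recovery law via tightness of $\{\UU_N\}$ plus uniform continuity of $\psi$ (the paper's $\Gamma$) on a compact set. The only differences (centering at $\bar D$ rather than $\bar D_N$, taking the compact set in $\XX$ rather than in $C([0,1];\Pint)$) are cosmetic.
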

\begin{proof}  For $\wprho\in C([0,1];\Pint)$ and $D\in [0,1]$, let's first define
\begin{equation*} \Gamma(D,\wprho) \Def \int_{r\in [0,1]}(1-r)\wprho(D,dr). \end{equation*}
Again we can use \eqref{E:contmaps} and show, by the same arguments used to show that $\bar D$ and $\bar L$ of \eqref{E:typdefrate}
are well-defined, that $\Gamma$ is well-defined, and furthermore that it is continuous on $[0,1]\times C([0,1];\Pint)$.
For each $N\in \N$, define
\begin{gather*} \hat{L}_N \Def \frac{1}{N}\sum_{n=1}^N \Delta_n\Gamma(D_N,\wprho^{N,n}), \qquad
\bar D_N \Def \frac{1}{N}\sum_{n=1}^N p^{N,n} = \int_{\ppp=(p,\wprho)\in \XX}p\UU_N(d\ppp)\\
\bar L_N \Def \frac{1}{N}\sum_{n=1}^N p^{N,n}\Gamma(\bar D,\wprho^{N,n}) = \int_{\ppp=(p,\wprho)\in \XX}p\Gamma(\bar D,\wprho)\UU_N(d\ppp); \end{gather*}
note that $\hat{L}_N$ is a random variable but $\bar D_N$ and $\bar L_N$
are deterministic.  Note also
that by weak convergence,
\begin{equation*}
\lim_{N\to \infty}\bar L_N = \bar L \text{ and } \lim_{N\to \infty}\bar D_N=\bar D.
\end{equation*}
The claim will follow if we can prove that
\begin{equation}\label{E:goal} \lim_{N\to \infty}\BE_N\left[|L_N-\hat{L}_N|^2\right]=0 \qquad \text{and}\qquad \lim_{N\to \infty}\BE_N\left[|\hat{L}_N-\bar L_N|^2\right]=0. \end{equation}

To see the first part of \eqref{E:goal}, we calculate that
\begin{equation*} L_N-\hat{L}_N = \frac{1}{N}\sum_{n=1}^N \left(\ell_n-\Gamma(D_N,\wprho^{N,n})\right)\Delta_n. \end{equation*}
Conditioning on $\dilt=\sigma\{\Delta_n: n\in \N\}$, we have that
\begin{equation*} \BE_N\left[\left(L_N-\hat L_N\right)^2\Big|\dilt\right]
= \frac{1}{N^2}\sum_{n=1}^N \lb \int_{r\in [0,1]}\left((1-r)-\Gamma(D_N,\wprho^{N,n})\right)^2\wprho^{N,n}(D_N,dr)\rb \Delta_n^2
\le \frac{1}{N}. \end{equation*}
This implies the first part of \eqref{E:goal}.

To see the second part of \eqref{E:goal}, we write that $\hat{L}_N - \bar L_N = \Err^1_N + \Err^2_N$ where
\begin{align*} \Err^1_N &= \frac{1}{N}\sum_{n=1}^N \left[ \Gamma(D_N,\wprho^{N,n})-\Gamma(\bar D_N,\wprho^{N,n})\right] \Delta_n \\
\Err^2_N &= \frac{1}{N}\sum_{n=1}^N \Gamma(\bar D_N,\wprho^{N,n})\left[ \Delta_N-p^{N,n}\right] \end{align*}
We first calculate that
\begin{equation*}\BE_N\left[\left|\Err^2_N\right|^2\right] = \frac{1}{N^2}\sum_{n=1}^N \left(\Gamma(\bar D_N,\wprho^{N,n})\right)^2p^{N,n}\left(1-p^{N,n}\right) \le \frac{1}{4N}. \end{equation*}
Thus $\lim_{N\to \infty}\BE_N\left[|\Err^2_N|^2\right]=0$.  Similarly,
\begin{equation*}\BE_N\left[\left|D_N-\bar D_N\right|^2\right] = \frac{1}{N^2}\sum_{n=1}^N p^{N,n}\left(1-p^{N,n}\right) \le \frac{1}{4N}. \end{equation*}
Therefore
\begin{equation*}\lim_{N\to\infty}\BE_N\left[\left|D_N-\bar D\right|^2\right]=0. \end{equation*}
To bound $\Err^1_N$, fix $\eta>0$. Due to Assumption \ref{A:LimitExists},   Prohorov's theorem implies that
$\{\UU_N;\, N\in \N\}$ is tight, so there is a $K_\eta\subset \subset C([0,1];\Pint)$ such that
\begin{equation*} \sup_{n\in \N} \UU_N([0,1]\times K_\eta^c)<\eta. \end{equation*}
Defining
\begin{equation*}  \omega_\eta(\delta) \Def \sup_{\substack{\wprho\in K_\eta \\D_1, D_2\in [0,1]\\ |D_1-D_2|<\delta}}\left|\Gamma(D_1,\wprho)-\Gamma(D_2,\wprho)\right| \end{equation*}
for all $\delta>0$, compactness of $K_\eta$ and $[0,1]$ and continuity of $\Gamma$ imply that $\lim_{\delta \searrow 0} \omega_\eta(\delta)=0$.  Thus
\begin{multline*} |\Err^1_N| \le \BE_N\left[\frac{1}{N}\sum_{n=1}^N\left|\Gamma(D_N,\wprho^{N,n})-\Gamma(\bar D_N,\wprho^{N,n})\right|\right]\\
= \BE_N\left[\int_{\ppp=(p,\wprho)\in \XX}\left|\Gamma(D_N,\wprho)-\Gamma(\bar D_N,\wprho)\right|\UU_N(d\ppp)\right] \\
\le 2\eta + 2 \omega_\eta(\delta) + \tilde \BP_N\lb |D_N-\bar D|\ge \delta\rb. \end{multline*}
Take $N\nearrow \infty$, then let $\delta \searrow 0$.  Finally let $\eta\searrow 0$.  We conclude that indeed $\lim_{N\to \infty}\BE_N[|\Err_N^1|]=0$.
Hence, the second part of (\ref{E:goal}) also holds which concludes the proof of the lemma.\end{proof}

\section{Problem Formulation and Main Results}\label{S:Heuristics}

Let's now set up our framework for considering atypical behavior of the $L_N$'s; i.e., large deviations. We motivate the main result, Theorem \ref{T:Main} through a formal discussion
related to the setting of Examples \ref{Ex:onetype} and \ref{Ex:twotypes}. The proof of Theorem \ref{T:Main} is in
 Sections \ref{S:lsc}, \ref{S:Lower}  and \ref{S:Upper}. In Section \ref{S:lsc} we prove compactness of the level sets of the rate function. In Sections \ref{S:Lower} and \ref{S:Upper} we prove
the large deviations lower and upper limit respectively. An equivalent insightful representation of the rate function is given in Theorem \ref{T:AlternativeExpression} and its proof is
in Section \ref{S:AlternateExpression}. This  representation   is a variational formula
which optimizes over all possible configuration of recoveries and defaults, and which leads to a Lagrange multiplier approach which can be numerically implemented.

For the convenience of the reader we here recall the concept of the large deviations principle and the associated rate function.
\begin{definition}\label{Def:LDP}
If $X$ is a Polish space and $\BP$ is a probability measure on $(X,\Borel(X))$, we say that a collection $(\xi_n)_{n\in \N}$
of $X$-valued random variables has a large deviations principle with rate function $I:X\to [0,\infty]$ if
\begin{enumerate}
\item For each $s\ge 0$, the set $\Phi(s) \Def \lb x\in X: I(x)\le s\rb$
is a compact subset of $X$.
\item For every open $G\subset X$,
\begin{equation*}
\varliminf_{n\nearrow\infty}\frac{1}{n}\ln \BP\lb \xi_n\in G\rb \geq -\inf_{x\in G} I(x)
\end{equation*}
\item For every closed $F\subset X$,
\begin{equation*}
\varlimsup_{n\nearrow\infty}\frac{1}{n}\ln \BP\lb \xi_n\in F\rb\leq -\inf_{x\in F} I(x).
\end{equation*}
\end{enumerate}
\end{definition}

One thing which is clear from Example \ref{Ex:twotypes} is that we need to keep track of the type associated with each default (but not the types associated
with names which do not default).  To organize this, let $\SubProb$ be the collection of measures $\nu$ on $(\XX,\Borel(\XX))$ such that $\nu(\XX)\le 1$ (i.e.,
the collection of subprobability measures).

As it is discussed in Section \ref{S:lsc}, $\SubProb$ is a Polish space.  For each $N\in \N$, define a random element $\nu_N$ of $\SubProb$
as
\begin{equation}
 \nu_N \Def \frac{1}{N}\sum_{n=1}^N \Delta_n\delta_{\ppp^{N,n}}\label{Eq:EmpiricalMeasure}
 \end{equation}
Thus, $\nu_N $ is the  empirical measure on type-space determined by the names which default.

Next, define a sequence $(Z_N)_{N\in \N}$ of $[0,1]\times \SubProb$-valued random variables as
\begin{equation*}  Z_N\Def (L_N,\nu_N) \qquad N\in \N. \end{equation*}

Since $[0,1]$ and $\SubProb$ are both Polish, $[0,1]\times \SubProb$ is also Polish.  We seek a large deviations principle for the $Z_N$'s.  Since projection
maps are continuous, the contraction principle will then imply a large deviations principle for the $L_N$'s.  Note furthermore that
\begin{equation*} D_N =\nu_N(\XX); \end{equation*}
the map $\nu\mapsto \nu(\XX)$ is continuous in the topology on
$\SubProb$, so the recovery statistics depend continuously on
$\nu_N$.

Before proceeding with the analysis, we need to define some more quantities.
 \begin{definition}\label{E:LFT}  For $p\in [0,1]$, define
\begin{equation*}  \begin{aligned} \hbar_p(x) &\Def \begin{cases} x\ln \frac{x}{p} + (1-x)\ln \frac{1-x}{1-p} &\text{for $x$ and $p$ in $(0,1)$} \\
\ln \frac{1}{p} &\text{for $x=1$, $p\in (0,1]$} \\
\ln \frac{1}{1-p} &\text{for $x=0$, $p\in [0,1)$} \\
\infty &\text{else.}\end{cases} \end{aligned}\end{equation*}
For $\nu\in \SubProb$, define
 \begin{equation*}H(\nu) \Def \begin{cases} \int_{\ppp=(p,\wprho) \in \XX}\hbar_p\left(\frac{d\nu}{d\UU}(\ppp)\right)\UU(d\ppp) &\text{if $\nu \ll \UU$} \\
 \infty. &\text{else}\end{cases}\end{equation*}
\esquare
\end{definition}

To understand the behavior of $L_N$, we need to construct its moment generating function. For this purpose, we have the following definition.
 \begin{definition}\label{Def:MomentGeneratingFunction} For $\nu\in \SubProb$, define
 \begin{align*} \Lambda_\nu(\theta) &\Def \int_{\ppp=(p,\wprho)\in \XX} \lb \ln \int_{r\in [0,1]}e^{\theta(1-r)}\wprho(\nu(\XX),dr)\rb \nu(d\ppp) \qquad \theta\in \R \\
 \Lambda^*_\nu(\ell) &\Def \sup_{\theta\in \R}\lb \theta \ell - \Lambda_\nu(\theta)\rb \qquad \ell\in [0,1]\end{align*}
$\Lambda^*_\nu(\ell)$ is the Legendre-Fenchel transform of $\Lambda_\nu(\theta)$.
 \esquare
 \end{definition}

Let's now see if we can formally identify a large deviations principle for the $Z_N$'s in the setting of Examples \ref{Ex:onetype} and \ref{Ex:twotypes}. In both examples we want to find a map
$I:[0,1]\times \SubProb\to [0,\infty]$ such that, at least formally,
\begin{equation*} \BP_N\lb Z_N\in dz^*\rb \asymp \exp\left[-N I(z^*)\right] \qquad N\to \infty \end{equation*}
for each fixed $z^*= (\ell^*,\nu^*)\in [0,1]\times \SubProb$. Assume that $\nu^* \ll \UU$. This is done without loss of generality, since, as we shall also see in the sequel,
the rate function $I(z^*)=\infty$ if $\nu^*\not \ll \UU$.

Conditioning on the event that $\nu_N\approx \nu^*$, we have
\begin{equation}\label{E:combine} \BP_N\lb Z_N\in dz^*\rb = \BP\lb L_N\in d\ell^*\big| \nu_N\approx \nu^*\rb \BP_N\lb \nu_N\in d\nu^*\rb. \end{equation}
We want to derive asymptotic expressions for the first and second term in the right hand side of \eqref{E:combine}.
\begin{example}\label{Ex:onetypeLDP}[Homogeneous Example \ref{Ex:onetype} continued].
First, we understand the second term in \eqref{E:combine}, i.e.,  $\BP_N\lb \nu_N\in d\nu^*\rb$.
Recall that
\begin{equation*} \nu_N(\XX) = \frac{1}{N}\sum_{\substack{1\le n\le N }}\Delta_n. \end{equation*}
This is now essentially the focus of Sanov's theorem--i.i.d. Bernoulli random variables. Namely, recalling the definition of $\hbar_p(x)$ and $H(\nu)$ by Definition \ref{E:LFT} we have
 \begin{equation} \label{E:optentHom} \BP_N\{\nu_N\in d\nu^*\} \asymp \exp\left[-N\hbar_{p}\left(\nu^*(\XX)\right)\right]=\exp\left[-N H\left(\nu^*(\XX)\right)\right]\end{equation}

Second, we  understand the first term in (\ref{E:combine}), i.e. $\BP\lb L_N\in d\ell^*\big| \nu_N\approx \nu^*\rb$.
 As in  Example \ref{Ex:twotypes}, we should have
 in law
 \begin{equation}\label{E:lossapproxHom} L_N \approx \frac{1}{N}\lb \sum_{1\le n\le N\nu^*(\XX)}(1-\xi_n)  \rb \end{equation}
 where the $\xi_n$'s are i.i.d. with common law $\wprho(\nu(\XX),\cdot)$. In the homogeneous pool of this example,
the log moment generating function defined in Definition \ref{Def:MomentGeneratingFunction} takes the form
\begin{equation*} \Lambda_{\nu{^*}}(\theta) = \nu^*(\XX)\ln \int_{r\in[0,1]} e^{\theta(1-r)}\wprho(\nu(\XX),dr)  \end{equation*}
 for all $\theta\in \R$. The logarithmic moment generating function of $L_N$ of \eqref{E:lossapproxHom} is approximately $ N\Lambda_{\nu^*}(\theta/N)$.
 Thus we should have that
 \begin{equation*} \BP_N\lb L_N\in d\ell^*\big| \nu_N\approx \nu^*\rb
 \asymp \exp\left[-N \Lambda^*_{\nu^{*}}(\ell)\right] \end{equation*}
 We should then get the large deviations principle for $Z_N$ by combining this, \eqref{E:optentHom}, and \eqref{E:combine}, i.e.,
\begin{equation*}\ \BP_N\lb Z_N\in dz^*\rb \asymp \exp\left[-N [\Lambda^*_{\nu^{*}}(\ell)+H(\nu^{*})]\right]. \end{equation*}

\esquare
\end{example}

\begin{example}\label{Ex:twotypesLDP}[Heterogeneous Example \ref{Ex:twotypes} continued].
We proceed similarly to Example \ref{Ex:onetypeLDP}. First, we understand the second term in \eqref{E:combine}, i.e., $\BP_N\lb \nu_N\in d\nu^*\rb$. Observe, that we explicitly have
\begin{equation*} \frac{d\nu^*}{d\UU}(\ppp_A) = \frac{\nu^*\{\ppp_A\}}{1/3} \qquad \text{and}\qquad \frac{d\nu^*}{d\UU}(\ppp_B) = \frac{\nu^*\{\ppp_B\}}{2/3}. \end{equation*}
Similarly
\begin{align*} \frac{d\nu_N}{d\UU}(\ppp_A) &= \frac{\nu_N\{\ppp_A\}}{\UU\{\ppp_A\}} = \frac{1}{N/3}\sum_{\substack{1\le n\le N \\ n\in 3\N}}\Delta_n \\
 \frac{d\nu_N}{d\UU}(\ppp_B) &= \frac{\nu_N\{\ppp_B\}}{\UU\{\ppp_B\}} = \frac{1}{2N/3}\sum_{\substack{1\le n\le N \\ n\not \in 3\N}}\Delta_n. \end{align*}
 Thus
 \begin{multline*} \BP_N\{\nu_N\in d\nu^*\} = \BP_N\lb \nu_N\{\ppp_A\} \approx \nu^*\{\ppp_A\},\,  \nu_N\{\ppp_B\} \approx \nu^*\{\ppp_B\}\rb\\
 = \BP_N\lb \frac{d\nu_N}{d\UU}(\ppp_A) \approx \frac{d\nu^*}{d\UU}(\ppp_A),\, \frac{d\nu_N}{d\UU}(\ppp_B) \approx \frac{d\nu^*}{d\UU}(\ppp_B)\rb. \end{multline*}

Again, this is  essentially the focus of Sanov's theorem--i.i.d. Bernoulli random variables. Namely, recalling the definition of $\hbar_p(x)$ and $H(\nu)$ by Definition \ref{E:LFT} we have
 \begin{equation} \label{E:optent}\begin{aligned} \BP_N\{\nu_N\in d\nu^*\} &\asymp \exp\left[-\frac{N}{3}\hbar_{p_A}\left(\frac{d\nu^*}{d\UU}(\ppp_A)\right)-\frac{2N}{3}\hbar_{p_B}\left(\frac{d\nu^*}{d\UU}(\ppp_B)\right)\right]\\
 &= \exp\left[-N\int_{\ppp=(p,\wprho)\in \XX}\hbar_p\left(\frac{d\nu^*}{d\UU}(\ppp)\right)\UU(d\ppp)\right]=\exp\left[-N H(\nu^{*})\right]. \end{aligned}\end{equation}

Second, we  understand the first term in (\ref{E:combine}), i.e. $\BP\lb L_N\in d\ell^*\big| \nu_N\approx \nu^*\rb$.
 If $\nu_N\approx \nu^*$, then there are about $N\nu^*\{\ppp_A\}$ defaults of type A, and $N\nu^*\{\ppp_B\}$ defaults of type B.  Thus, as in  Example \ref{Ex:twotypes}, we should have
 in law
 \begin{equation}\label{E:lossapprox} L_N \approx \frac{1}{N}\lb \sum_{1\le n\le N\nu^*\{\ppp_A\}}(1-\xi^A_n) +\sum_{1\le n\le N\nu^*\{\ppp_B\}}(1-\xi^B_n) \rb \end{equation}
 where the $\xi^A_n$'s are i.i.d. with common law $\wprho_A(\nu(\XX),\cdot)$ and the $\xi^B_n$'s are i.i.d. with common law $\wprho_B(\nu(\XX),\cdot)$. In the heterogeneous pool of this example,
the log moment generating function defined in Definition \ref{Def:MomentGeneratingFunction} takes the form
\begin{equation*} \Lambda_{\nu{^*}}(\theta) = \nu^*\{\ppp_A\}\ln \int_{r\in[0,1]} e^{\theta(1-r)}\wprho_A(\nu(\XX),dr) + \nu^*\{\ppp_B\}\ln \int_{r\in[0,1]} e^{\theta(1-r)}\wprho_B(\nu(\XX),dr) \end{equation*}
 for all $\theta\in \R$. The logarithmic moment generating function of $L_N$ of \eqref{E:lossapprox} is approximately $ N\Lambda_{\nu^*}(\theta/N)$.
 Thus we should have that
 \begin{equation*} \BP_N\lb L_N\in d\ell^*\big| \nu_N\approx \nu^*\rb
 \asymp \exp\left[-N \Lambda^*_{\nu^{*}}(\ell)\right] \end{equation*}
 We should then get the large deviations principle for $Z_N$ by combining this, \eqref{E:optent}, and \eqref{E:combine}, i.e.,
\begin{equation*}\ \BP_N\lb Z_N\in dz^*\rb \asymp \exp\left[-N [\Lambda^*_{\nu^{*}}(\ell)+H(\nu^{*})]\right]. \end{equation*}
\esquare
\end{example}

Examples \ref{Ex:onetypeLDP} and \ref{Ex:twotypesLDP} motivate our main result.
 \begin{theorem}[Main]\label{T:Main} Let Assumption \ref{A:LimitExists} hold and recall Definitions \ref{E:LFT} and \ref{Def:MomentGeneratingFunction}. Then, we have that $(Z_N)_{N\in \N}$ has a large deviations principle with rate function
 \begin{equation*} I(z) = \Lambda^*_\nu(\ell) + H(\nu) \end{equation*}
for all $z=(\ell,\nu)\in [0,1]\times \SubProb$.   Secondly $(L_N)_{N\in \N}$ has a large deviations principle with rate function
 \begin{equation}\label{E:varprob} I'(\ell) = \inf_{\nu\in \SubProb}I(\ell,\nu) \end{equation}
 for all $\ell\in [0,1]$.
\esquare\end{theorem}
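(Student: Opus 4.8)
The plan is to establish the joint large deviations principle for $(Z_N)=(L_N,\nu_N)$ on $[0,1]\times\SubProb$ in the three standard pieces --- goodness of $I$, the large deviations lower bound, and the large deviations upper bound --- and then to read off the statement for $(L_N)$ from the contraction principle, since the projection $(\ell,\nu)\mapsto\ell$ is continuous and the infimum in \eqref{E:varprob} is by definition the induced rate function. Everything rests on one structural identity: conditionally on $\dilt$ the measure $\nu_N$ is deterministic, the default rate $D_N=\nu_N(\XX)$ is frozen, and $L_N=\frac1N\sum_{n:\Delta_n=1}(1-r_n)$ is an average of independent bounded random variables with
\[ \tfrac1N\ln\BE_N\big[e^{\theta N L_N}\,\big|\,\dilt\big]=\Lambda_{\nu_N}(\theta),\qquad\theta\in\R, \]
which is \emph{exact}, directly from Definition \ref{Def:MomentGeneratingFunction} and the specification of $\BP_N$. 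Thus the problem decouples: a Sanov-type statement for $\nu_N$ (an empirical sub-measure built from the independent Bernoulli variables $\Delta_n$, with $\UU_N\to\UU$) carries the rate $H$, a conditional Cram\'er statement for $L_N$ carries the rate $\Lambda^*_\nu$, and the only coupling between the two is the continuous dependence of the recovery laws on $D_N=\nu_N(\XX)$.

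For goodness of $I$: first I would show that $H$ is a good rate function on $\SubProb$ --- this is Cram\'er's theorem in the space of measures, with $H$ the Legendre--Fenchel transform of $\psi\mapsto\int_\XX\ln(1-p+pe^{\psi(\ppp)})\,\UU(d\ppp)$, hence lower semicontinuous as a supremum of continuous affine functionals, and with $\{H\le s\}\subseteq\{\nu:\nu\le\UU\}$, a set that is tight because $\UU$ is. Since $\Lambda^*_\nu\ge 0$ (as $\Lambda_\nu(0)=0$) we have $I\ge H$, so $\{I\le s\}\subseteq\{H\le s\}\times[0,1]$, which is compact; it then suffices that $I$ be lower semicontinuous, and this reduces to joint lower semicontinuity of $(\ell,\nu)\mapsto\Lambda^*_\nu(\ell)$ on the \emph{tight} family $\{H\le s\}\times[0,1]$. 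The latter follows because, restricted to a tight set of $\nu$'s, $\nu\mapsto\Lambda_\nu(\theta)$ is continuous for each fixed $\theta$ --- precisely the equicontinuity-in-$D_N$ argument already used in Lemma \ref{L:LLN}, with $\Gamma$ replaced by $(D,(p,\wprho))\mapsto\ln\int_{[0,1]}e^{\theta(1-r)}\wprho(D,dr)$ --- so $\Lambda^*_\nu(\ell)=\sup_\theta(\theta\ell-\Lambda_\nu(\theta))$ is a supremum of continuous functions and $\{I\le s\}$ is a closed subset of a compact set.

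For the lower bound: given open $G$ and $z^*=(\ell^*,\nu^*)\in G$ with $I(z^*)<\infty$ (so $\nu^*\ll\UU$ with density $g=d\nu^*/d\UU\in[0,1]$ and $\Lambda^*_{\nu^*}(\ell^*)<\infty$), I would build a tilted family $\tilde\BP_N$ by (i) replacing each default probability $p^{N,n}$ by $g(\ppp^{N,n})$, after a harmless continuous approximation of $g$ (needed because of the continuum of types), so that under $\tilde\BP_N$ the defaulted empirical measure converges to $\nu^*$, and (ii) exponentially tilting each recovery law $\wprho^{N,n}(D,\cdot)$ by $e^{\theta^*(1-r)}$, where $\theta^*$ realizes $\Lambda^*_{\nu^*}(\ell^*)=\theta^*\ell^*-\Lambda_{\nu^*}(\theta^*)$ (with the usual truncation device when $\ell^*$ is an extreme value and $\theta^*=\pm\infty$). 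A conditional law of large numbers parallel to Lemma \ref{L:LLN}, but under $\tilde\BP_N$, gives $Z_N\to z^*$ in $\tilde\BP_N$-probability, while $\frac{d\tilde\BP_N}{d\BP_N}$ factorizes over names (a product in the $\Delta_n$'s times, given $\dilt$, a product in the recoveries) and yields $\frac1N\ln\frac{d\tilde\BP_N}{d\BP_N}\to H(\nu^*)+\Lambda^*_{\nu^*}(\ell^*)=I(z^*)$; the change-of-measure inequality $\BP_N(Z_N\in G)\ge e^{-N(I(z^*)+\eps)}\,\tilde\BP_N\big(Z_N\in B,\ \frac1N\ln\frac{d\tilde\BP_N}{d\BP_N}\le I(z^*)+\eps\big)$ for a small ball $B\subseteq G$ around $z^*$ then closes the estimate.

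For the upper bound I would first note the free exponential tightness: since $\nu_N(A)\le\UU_N(A)$ for every Borel $A$ and $\{\UU_N\}$ is tight, $\nu_N$ lies in a fixed compact subset of $\SubProb$ for all $N$, so it suffices to prove the local bound that every $z^*$ has a neighborhood $U=(\ell^*-\delta,\ell^*+\delta)\times B(\nu^*,\delta)$ with $\varlimsup_N\frac1N\ln\BP_N(Z_N\in U)\le -I(z^*)+\eps$ (then cover a compact $F$ by finitely many and use the union bound). Conditional Chebyshev with the identity above gives, on $\{\nu_N\in B(\nu^*,\delta)\}$, a bound $e^{-N\inf_{|\ell-\ell^*|<\delta}\Lambda^*_{\nu_N}(\ell)}$ for $\BP_N(L_N\in\cdot\mid\dilt)$ (the infimum is attained at an endpoint by convexity, with the minor care standard at the endpoints of $[0,1]$, and the trivial bound $\BP_N(Z_N\in U)\le\BP_N(\nu_N\in B(\nu^*,\delta))$ covers the degenerate case $\Lambda^*_{\nu^*}(\ell^*)\approx 0$); taking $\BE_N$, using continuity of $\nu\mapsto\Lambda^*_\nu$ on the compact set and the Sanov upper bound $\BP_N(\nu_N\in B(\nu^*,\delta))\le e^{-N(H(\nu^*)-\eps')}$, then letting $\delta\downarrow 0$, yields $\BP_N(Z_N\in U)\lesssim e^{-N(I(z^*)-2\eps)}$. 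The main obstacle throughout all three parts is the self-referential coupling --- the recovery laws are evaluated at the random $D_N=\nu_N(\XX)$, so neither the moment generating function nor the tilt is a clean product, and one must repeatedly replace $D_N$ by its limit uniformly over a tight family of types, the technical heart being the continuity and equicontinuity estimates for $(D,(p,\wprho))\mapsto\ln\int e^{\theta(1-r)}\wprho(D,dr)$; a secondary nuisance is passing from the $\UU$-a.e.\ defined density $g=d\nu^*/d\UU$ to the realized discrete types $\{\ppp^{N,n}\}$ in the lower-bound construction.
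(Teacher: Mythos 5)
Your proposal follows the same architecture as the paper's proof: good rate function via the dual representation of $H$ and the containment $\{H\le s\}\subset\{\nu:\nu\le\UU\}$; a lower bound by tilting the default probabilities to $\tfrac{d\nu^*}{d\UU}$ (which is exactly what the paper's tilt with $\phi=\hbar_p'(\tfrac{d\nu^*}{d\UU})$ accomplishes, after the continuous-density approximation of Lemma \ref{L:approximation}) together with an exponential tilt $e^{\theta(1-r)}$ of the recovery laws at the optimizing $\theta$; an exponential-Chebyshev upper bound; and the contraction principle through the continuous projection and the continuity of $\nu\mapsto\nu(\XX)$. Within that common skeleton, two of your choices deserve comment. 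First, your tightness step is simpler and stronger than the paper's: since $\nu_N(A)\le\UU_N(A)$ for every Borel $A$ and every sample point, and $\{\UU_N\}$ is tight under Assumption \ref{A:LimitExists}, all the $\nu_N$ lie deterministically in a single compact subset of $\SubProb$, so the Chernoff-type exponential tightness of Proposition \ref{P:tightness} can indeed be bypassed. Second, in the local upper bound you split the estimate into a conditional Cram\'er bound for $L_N$ given $\dilt$ and a quoted ``Sanov upper bound'' for $\BP_N\{\nu_N\in B(\nu^*,\delta)\}$. That second ingredient is not off-the-shelf: $\nu_N$ is the empirical sub-measure of a triangular array of independent, non-identically distributed Bernoulli-marked deterministic types, so the bound with rate $H$ must itself be proved, and the natural proof is precisely the paper's device in Proposition \ref{P:UpperComp} --- test functions $\phi\in C(\XX)$, the identity $\BE_N\left[\exp\left[N\int_{\ppp\in\XX}\phi(\ppp)\nu_N(d\ppp)\right]\right]=\exp\left[N\int_{\ppp=(p,\wprho)\in\XX}\lambda_p(\phi(\ppp))\UU_N(d\ppp)\right]$, and the duality of Lemma \ref{L:Hrep}. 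The paper simply runs that Chebyshev argument jointly in $(\theta,\phi)$ over a finite cover, whereas you run it in two stages; your restructuring is viable but does not spare you that estimate, so it should be stated and proved rather than cited as Sanov's theorem.

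One small repair: $\nu\mapsto\Lambda^*_\nu(\ell)$ is not continuous --- the effective support endpoints $\int_{\ppp=(p,\wprho)\in\XX}\alpha_{\pm}(\wprho,\nu(\XX))\nu(d\ppp)$ appearing in Section \ref{S:AlternateExpression} can move discontinuously in $\nu$, so $\Lambda^*_\nu(\ell)$ can jump to $+\infty$. What does hold is joint lower semicontinuity of $(\ell,\nu)\mapsto\Lambda^*_\nu(\ell)$, as a supremum over $\theta$ of the continuous maps $(\ell,\nu)\mapsto\theta\ell-\Lambda_\nu(\theta)$ (continuity of $\nu\mapsto\Lambda_\nu(\theta)$ along convergent, hence tight, sequences being the equicontinuity argument of Lemma \ref{L:LLN} that you already invoke). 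Fortunately lower semicontinuity is all that your level-set argument and your $\delta\downarrow 0$ limit in the local upper bound actually require, so the fix is only to replace ``continuity'' by ``lower semicontinuity'' at those points.
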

\begin{proof} Combine together Propositions \ref{P:LSC}, \ref{P:LowerBound}, and \ref{P:Upper}.  This gives the large deviations
principle for $(Z_N)$.  The large deviations principle for $(L_N)_{N\in \N}$ follows from the contraction
principle and the continuity of the map $\nu\mapsto \nu(\XX)$.\end{proof}

Next we mention a useful representation formula for $H(\nu)$ that is defined in Definition \ref{E:LFT}. Its proof will be given in Section \ref{S:AuxiliaryLemmas}.  Define
\begin{equation*} \lambda_p(\theta) \Def \ln \left(pe^\theta+1-p\right)\end{equation*}
for all $\theta\in \R$ and $p\in [0,1]$.  Then $\lambda_p$ and $\hbar_p$ are convex duals; i.e.,
\begin{equation}\label{Eq:Duals_lambda_h}
\begin{aligned} \hbar_p(x) &= \sup_{\theta\in \R}\lb \theta x-\lambda_p(\theta)\rb \qquad x\in \R\\
\lambda_p(\theta) &= \sup_{x\in \R}\lb \theta x- \hbar_p(x)\rb \qquad \theta\in \R \end{aligned}
\end{equation}
and we have:
\begin{lemma}\label{L:Hrep} Let Assumption \ref{A:LimitExists} hold. Then, we have that
\begin{equation}\label{E:Hrep} H(\nu) = \sup_{\phi\in C(\XX)}\lb \int_{\ppp\in \XX}\phi(\ppp)\nu(dp) - \int_{\ppp=(p,\wprho)\in \XX}\lambda_p(\phi(\ppp)) \UU(d\ppp)\rb \end{equation}
for all $\nu\in \SubProb$.
\esquare
\end{lemma}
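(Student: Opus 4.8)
The plan is to prove \eqref{E:Hrep} by recognizing $H(\nu)$ as a relative entropy on an enlarged space and then invoking the Donsker--Varadhan variational formula for relative entropy; this sidesteps the delicate step --- unavoidable in a hands-on argument --- of approximating the merely measurable ``optimal test field'' $\ppp\mapsto \theta^{*}(p(\ppp),\frac{d\nu}{d\UU}(\ppp))$ by continuous functions. Throughout I read $C(\XX)$ as the bounded continuous functions on $\XX$ (the natural reading, since $\XX$ is Polish but not compact; the pointwise Fenchel inequality $\hbar_p(x)\ge \theta x-\lambda_p(\theta)$ from \eqref{Eq:Duals_lambda_h} shows unbounded $\phi$ cannot enlarge the supremum). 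First suppose $\nu\le \UU$ as measures on $\XX$, so $\nu\ll\UU$ and $g\Def d\nu/d\UU\in[0,1]$ $\UU$-a.e. On the Polish space $\hat\XX\Def\XX\times\{0,1\}$ define probability measures $m$ and $\tilde\nu$ by $m(d\ppp\times\{1\})=p\,\UU(d\ppp)$, $m(d\ppp\times\{0\})=(1-p)\UU(d\ppp)$ and $\tilde\nu(d\ppp\times\{1\})=\nu(d\ppp)$, $\tilde\nu(d\ppp\times\{0\})=(\UU-\nu)(d\ppp)$; both have $\XX$-marginal $\UU$. A coordinatewise Radon--Nikodym computation (with the conventions $0\ln0=0$, etc.) gives $d\tilde\nu/dm=g/p$ on $\XX\times\{1\}$ and $(1-g)/(1-p)$ on $\XX\times\{0\}$, hence the relative entropy equals $R(\tilde\nu\,\|\,m)=\int_{\XX}\hbar_p(g)\,\UU(d\ppp)=H(\nu)$, with $R(\tilde\nu\|m)=\infty$ exactly when $H(\nu)=\infty$. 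The Donsker--Varadhan formula then gives $H(\nu)=R(\tilde\nu\|m)=\sup_{\psi}\{\int\psi\,d\tilde\nu-\ln\int e^{\psi}\,dm\}$ over bounded continuous $\psi$ on $\hat\XX$, i.e. over pairs $\psi=(\psi_0,\psi_1)$ of bounded continuous functions on $\XX$.

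Two bookkeeping reductions then produce \eqref{E:Hrep}. First, replacing $(\psi_0,\psi_1)$ by $(\psi_0-c,\psi_1-c)$ with $c(\ppp)\Def\ln\bigl(p\,e^{\psi_1(\ppp)}+(1-p)e^{\psi_0(\ppp)}\bigr)$ --- again a bounded continuous pair, since $p$ is the continuous first coordinate on $\XX$ and $(p,\theta)\mapsto\lambda_p(\theta)$ is jointly continuous --- changes the functional by $\ln\int_{\XX}e^{c}\,\UU(d\ppp)-\int_{\XX}c\,\UU(d\ppp)\ge0$ (Jensen's inequality for the probability measure $\UU$), so the supremum is unchanged if we restrict to \emph{normalized} $\psi$, those with $p\,e^{\psi_1}+(1-p)e^{\psi_0}\equiv1$. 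Second, for a normalized $\psi$, putting $\phi\Def\psi_1-\psi_0\in C(\XX)$ forces $\psi_0=-\lambda_p(\phi)$ and $\psi_1=\phi-\lambda_p(\phi)$, and conversely every $\phi\in C(\XX)$ arises this way; for such a pair $\ln\int e^{\psi}\,dm=0$ and, using $\int_{\XX}\lambda_p(\phi)\,g\,\UU(d\ppp)=\int_{\XX}\lambda_p(\phi)\,d\nu$, one computes $\int\psi\,d\tilde\nu-\ln\int e^{\psi}\,dm=\int_{\XX}\phi\,d\nu-\int_{\ppp=(p,\wprho)\in\XX}\lambda_p(\phi(\ppp))\,\UU(d\ppp)$. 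Taking the supremum over $\phi$ yields \eqref{E:Hrep} in this case.

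It remains to treat $\nu\not\le\UU$, where $H(\nu)=\infty$ and I must show the right side of \eqref{E:Hrep} is also infinite. By the Hahn--Jordan decomposition of the signed measure $\nu-\UU$ there is a Borel set $B$ with $\beta\Def\nu(B)-\UU(B)>0$. Choose (inner regularity of $\nu$) a compact $K\subset B$ with $\nu(K)>\nu(B)-\beta/4$, and (outer regularity of $\UU$) an open $G\supset K$ with $\UU(G\setminus K)<\beta/4$; Urysohn's lemma on the metrizable space $\XX$ gives $u_t\in C(\XX)$ with $0\le u_t\le t$, $u_t\equiv t$ on $K$ and $u_t\equiv0$ off $G$. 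Since $0\le\lambda_p(t)\le t$ for $t\ge0$, taking $\phi=u_t$ gives $\int\phi\,d\nu-\int\lambda_p(\phi)\,\UU(d\ppp)\ge t\,\nu(K)-t\,\UU(G)\ge t\beta/2\to\infty$, so the supremum is $\infty$.

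The genuine obstacle is only the first step: noticing that $H(\nu)$ --- written as an integral of a Legendre transform --- is precisely the relative entropy of the lifted Bernoulli-type pair $(\tilde\nu,m)$ on $\XX\times\{0,1\}$. Once that is seen, the Donsker--Varadhan formula hands over continuous test functions for free, and the rest is the elementary normalization and change of variables above; in particular one never has to regularize the measurable maps $g$ or $\theta^{*}(p,g)$, and the possible degeneracy of $p$ or of $g$ near $0$ and $1$ --- which would be the technical crux of a direct proof --- never has to be confronted head on.
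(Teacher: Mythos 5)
Your proof is correct, and it reaches \eqref{E:Hrep} by a genuinely different route than the paper. The paper works directly on $\XX$: for $\nu\ll\UU$ it obtains the inequality $\le$ from the pointwise Fenchel bound $\theta x-\lambda_p(\theta)\le \hbar_p(x)$, and the inequality $\ge$ by truncating the supremum defining $\hbar_p$ to $|\theta|\le N$, exhibiting the explicit (merely measurable) maximizer $\phi_N$, passing to the limit by monotone convergence, and then approximating bounded measurable test functions by continuous ones via regularity of $\nu$ and $\UU$; the case $\nu\not\ll\UU$ is handled with the test functions $c\,e^{-n\dist(\cdot,F)}$ for a closed $F$ carrying $\nu$-mass but no $\UU$-mass. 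You instead lift to $\XX\times\{0,1\}$, recognize $H(\nu)$ as the relative entropy $R(\tilde\nu\,\|\,m)$ of two probability measures built from $(\nu,\UU-\nu)$ and $(p\,\UU,(1-p)\UU)$, invoke the Donsker--Varadhan variational formula over bounded continuous test pairs $(\psi_0,\psi_1)$, and reduce to a single $\phi=\psi_1-\psi_0$ via the Jensen normalization $p e^{\psi_1}+(1-p)e^{\psi_0}\equiv 1$; the complementary case $\nu\not\le\UU$ (which also covers densities exceeding $1$, where $H(\nu)=\infty$ directly from the definition of $\hbar_p$) is settled with a Urysohn function, paralleling the paper's unbounded-test-function argument. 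What your route buys is that the measurable-versus-continuous approximation --- the step the paper dispatches in one sentence (``approximate elements of $B(\XX)$ by elements of $C(\XX)$'') --- is absorbed into the standard Donsker--Varadhan lemma, and the degenerate sets $\{p=0\}$ and $\{p=1\}$ need no special treatment; the cost is the auxiliary lifted space, the normalization bookkeeping, and reliance on an external theorem that the paper in effect proves from scratch in this special case. Your reading of $C(\XX)$ as the bounded continuous functions is the intended one (it is what the weak topology on $\SubProb$ and the use of \eqref{E:Hrep} in Propositions \ref{P:LSC} and \ref{P:UpperComp} require), and the points where your identification could have failed --- that both lifted measures have $\XX$-marginal $\UU$, that the slice densities reproduce $\hbar_p$ including at the boundary values of $p$ and of $d\nu/d\UU$, and that every normalized pair corresponds to exactly one $\phi$ with the functional value $\int_{\ppp\in\XX}\phi\,d\nu-\int_{\ppp=(p,\wprho)\in\XX}\lambda_p(\phi(\ppp))\,\UU(d\ppp)$ --- all check out.
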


One way to interpret Theorem \ref{T:Main} is that the rate functions $I$ and $I'$ give the correct way to find the ``minimum-energy" configurations for atypically large
losses to occur.  In general, variational problems involving measures can be computationally difficult, so Section \ref{S:AlternateExpression} addresses some
computational issues.  In particular, we find an alternate expression which takes advantage of the specific structure of our problem.
Define
\begin{equation}\label{Eq:Duals_M_I}
\begin{aligned}
M_\wprho(\theta,D) &\Def \ln \int_{r\in [0,1]}e^{\theta (1-r)}\wprho(D,dr) \qquad \theta\in \R\\
I_\wprho(x,D) &\Def \sup_{\theta\in \R}\lb \theta x- M_\wprho(\theta,D)\rb\qquad x\in \R
\end{aligned}
\end{equation}
for all $D\in [0,1]$. Observe that we can write
\begin{equation}\label{E:LambdaI}  \Lambda_\nu(\theta) = \int_{\ppp=(p,\wprho)\in \XX} M_{\wprho}(\theta,\nu(\XX)) \nu(d\ppp) \end{equation}
for all $\nu\in \SubProb$ and $\theta\in \R$.
Define next $\BB \Def B(\XX;[0,1])$.
\begin{theorem}\label{T:AlternativeExpression} Let Assumption \ref{A:LimitExists} hold. For $\ell\in [0,1]$ and $\UU\in \Pspace(\XX)$, set
\begin{equation} \label{E:IDef}\begin{aligned} J'(\ell) &\Def \inf_{\Phi,\, \Psi \in \BB}\lb \int_{\ppp=(p,\wprho)\in \XX} \lb \Phi(\ppp)
I_\wprho\left(\Psi(\ppp),\int_{\ppp\in \XX}\Phi(\ppp)\UU(d\ppp)\right)+ \hbar_p(\Phi(\ppp))
\rb \UU(d\ppp):\right.\\
&\qquad \qquad \qquad \left.\int_{\ppp \in \XX} \Phi(\ppp)\Psi(\ppp)\UU(d\ppp)=\ell\rb.
\end{aligned}
\end{equation}
We have that $I'(\ell)=J'(\ell)$ for all $\ell\in [0,1]$.  An alternate representation of $J'$ is
\begin{equation} \label{E:IIDef}\begin{aligned} J''(\ell) &\Def \inf_{D\in [\ell,1]}\inf_{\Phi\in \BB}\inf_{\Psi\in \BB}\lb \int_{\ppp=(p,\wprho)\in \XX} \lb \Phi(\ppp)
I_{\wprho}\left(\Psi(\ppp),D\right)+ \hbar_p(\Phi(\ppp))
\rb \UU(d\ppp):\right.\\
&\qquad \qquad \qquad \left.\int_{\ppp\in \XX} \Phi(\ppp)\Psi(\ppp)\UU(d\ppp)=\ell,\, \int_{\ppp\in \XX} \Phi(\ppp)\UU(d\ppp)=D\rb.
\end{aligned}\end{equation}
\esquare
\end{theorem}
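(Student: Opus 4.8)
The plan is to establish the two identities $I'(\ell) = J'(\ell)$ and $J'(\ell) = J''(\ell)$ by unwinding the definition of $I'(\ell) = \inf_{\nu \in \SubProb} \left( \Lambda^*_\nu(\ell) + H(\nu) \right)$ and re-parametrizing the subprobability measure $\nu$ by its density against $\UU$. First I would observe that for any $\nu \in \SubProb$ with $I(\ell,\nu) < \infty$ we have $\nu \ll \UU$, so we may write $\Phi = d\nu/d\UU \in \BB$ (the constraint $\nu \le 1$ pointwise is not needed, only $0 \le \Phi$ $\UU$-a.e.; boundedness of $\Phi$ by $1$ is not actually required, but the formula is insensitive to this since $\hbar_p(\Phi) = \infty$ would be forced anyway outside $[0,1]$ only at $\Phi > 1$, which I will need to check does not occur at the infimum, or alternatively extend $\BB$ — I expect the authors use that the optimal $\Phi$ automatically lies in $[0,1]$ because $\hbar_p$ penalizes $x>1$ by $+\infty$). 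Then $H(\nu) = \int_\XX \hbar_p(\Phi(\ppp))\, \UU(d\ppp)$ by Definition \ref{E:LFT}, and $\nu(\XX) = \int_\XX \Phi\, \UU(d\ppp) =: D$.

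Next I would handle the $\Lambda^*_\nu$ term. Using \eqref{E:LambdaI}, $\Lambda_\nu(\theta) = \int_\XX M_\wprho(\theta, D)\, \Phi(\ppp)\, \UU(d\ppp)$, so $\Lambda^*_\nu$ is the Legendre transform of an integral functional. The key step is a variational/disintegration lemma: for a fixed $\Phi$ (hence fixed $D$),
\begin{equation*}
\Lambda^*_\nu(\ell) = \inf\lb \int_\XX \Phi(\ppp) I_\wprho(\Psi(\ppp), D)\, \UU(d\ppp) : \Psi \in \BB,\ \int_\XX \Phi(\ppp)\Psi(\ppp)\, \UU(d\ppp) = \ell\rb.
\end{equation*}
This is the statement that the convex conjugate of $\theta \mapsto \int_\XX M_\wprho(\theta,D)\,\Phi\,d\UU$ equals the infimal convolution / ``integrated'' conjugate $\int_\XX \Phi\, I_\wprho(\cdot, D)\, d\UU$ subject to the barycenter constraint; it is a standard interchange-of-infimum-and-integral result (Rockafellar-type theorem on integral functionals, e.g. via measurable selection, or a minimax argument using that $M_\wprho(\cdot,D)$ is finite and convex on all of $\R$ since $1-r \in [0,1]$ is bounded). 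Combining this with the formula for $H(\nu)$ and taking the outer infimum over $\Phi \in \BB$ gives exactly $J'(\ell)$, which establishes $I'(\ell) = J'(\ell)$.

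For the second identity $J'(\ell) = J''(\ell)$, I would simply introduce the auxiliary variable $D := \int_\XX \Phi\, d\UU$ as an explicit constraint: $J''$ is obtained from $J'$ by first fixing $D$, minimizing over all $(\Phi,\Psi)$ consistent with $\int \Phi\,d\UU = D$ and $\int \Phi\Psi\,d\UU = \ell$, and then minimizing over $D$. The only nontrivial point is the range of $D$: since $\Phi \in [0,1]$ forces $D = \int \Phi\, d\UU \in [0,1]$, and since $\ell = \int \Phi\Psi\, d\UU \le \int \Phi\, d\UU = D$ (using $\Psi \in [0,1]$ and $\Phi \ge 0$, because $\Psi$ plays the role of a per-name loss, which is bounded by $1$; here $I_\wprho(x,D) = \infty$ for $x \notin [0,1]$ enforces $\Psi \in [0,1]$ at the infimum), we get $D \in [\ell,1]$, matching the range in \eqref{E:IIDef}. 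So $J'' = J'$ is essentially a bookkeeping rewriting of the same infimum.

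The main obstacle is the variational lemma for $\Lambda^*_\nu$ — the interchange of the supremum over $\theta$ (inside the Legendre transform) with the integral over $\XX$, realized as passing to an infimum over the ``dual'' field $\Psi$ with a linear barycenter constraint. Making this rigorous requires care: one direction (that $J'(\ell) \ge I'(\ell)$, i.e. any feasible $(\Phi,\Psi)$ gives an upper bound on $\Lambda^*_\nu + H$) is an easy Jensen / Fenchel--Young estimate, but the reverse direction (attaining $\Lambda^*_\nu(\ell)$ by a measurable $\Psi$) needs either a measurable selection theorem applied to the subdifferential of $M_\wprho(\cdot,D)$ at the optimal $\theta^*$, or an approximation argument; integrability of $\Psi$ and the fact that the optimal $\Psi(\ppp)$ lies in $[0,1]$ (so $\Psi \in \BB$) must be verified. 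I expect this is exactly the kind of ``auxiliary technical result'' the paper defers to Section \ref{S:AuxiliaryLemmas}, and the proof of Theorem \ref{T:AlternativeExpression} proper will cite it and then assemble the pieces as above.
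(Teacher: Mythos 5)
Your skeleton matches the paper's proof: the direction $J'(\ell)\ge I'(\ell)$ is done exactly as you say (Fenchel--Young with a fixed $\theta$, integrate against $\Phi\,d\UU$, use the constraint, then take the supremum over $\theta$, identifying $\nu(d\ppp)=\Phi(\ppp)\UU(d\ppp)$), the reverse direction is obtained by exhibiting, for each $\nu\ll\UU$, the pair $\Phi=\tfrac{d\nu}{d\UU}$, $\Psi(\ppp)=M'_\wprho(\theta^*,\nu(\XX))$ where $\Lambda'_\nu(\theta^*)=\ell$ (your ``measurable selection from the subdifferential'' is in the paper simply this explicit derivative), and the passage $J'=J''$ is the same bookkeeping with $\ell=\int\Phi\Psi\,d\UU\le\int\Phi\,d\UU=D$, so $D\in[\ell,1]$.

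The gap is in the hard direction, which you delegate to a ``standard'' interchange lemma without proving it, and your sketched justification only covers the case in which an optimizing $\theta^*$ exists, i.e.\ $\ell$ in the interior of the range of $\Lambda'_\nu$. For a fixed $\nu$ the supremum defining $\Lambda^*_\nu(\ell)$ need not be attained: writing $\bar\alpha_\pm=\int\alpha_\pm(\wprho,\nu(\XX))\,\nu(d\ppp)$ with $\alpha_\pm$ the essential extremes of $1-r$ under $\wprho(\nu(\XX),\cdot)$, the cases $\ell\ge\bar\alpha_+$ and $\ell\le\bar\alpha_-$ require a separate argument: one must show $\Lambda^*_\nu(\ell)=\infty$ when $\ell$ lies strictly outside $[\bar\alpha_-,\bar\alpha_+]$, and at the endpoints $\ell=\bar\alpha_\pm$ identify $\Lambda^*_\nu(\ell)$ and $I_\wprho(\alpha_\pm(\wprho,\nu(\XX)),\nu(\XX))$ via monotone convergence as $\theta\to\pm\infty$ with the atoms $\wprho\lb\nu(\XX),1-\alpha_\pm\rb$, and check that $\Psi=\alpha_\pm(\wprho,\nu(\XX))$ is feasible. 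This is exactly the content of the paper's Cases 2 and 3, and it is also where a blanket appeal to the Rockafellar-type conjugation formula is delicate: that route gives equality of the biconjugates, and exactness of the constrained infimum at these boundary values (closedness of the epi-composition with the linear constraint) is precisely what has to be verified by hand. Note also that the paper does not defer this to the auxiliary Section \ref{S:AuxiliaryLemmas} (which concerns $H$ only, Lemmas \ref{L:approximation} and \ref{L:Hrep}); the three-case analysis is carried out inside the proof of Theorem \ref{T:AlternativeExpression} itself, so this step cannot be outsourced and must be supplied to complete your argument.
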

\noindent The proof of this is given in Section \ref{S:AlternateExpression}.  The point of the second representation \eqref{E:IIDef} is that the innermost minimization
problem (the one with $\Phi$ and $D$ fixed) involves linear constraints. Namely, that $\Phi$ takes values in $[0,1]$
and that two integrals of $\Phi$ take specific values.  This will be useful in some of our numerical studies in the next section.

\section{Examples and Discussion}\label{S:ExamplesDiscussion}
In Subsection \ref{SS:NumericalExamples} we present some  numerical examples. These examples showcase some of the possibilities and some of the implications
of the dependence of the recovery distribution on the defaults. We conclude this section with Subsection \ref{SS:Conclusions}, where we summarize our conclusions.

\subsection{Numerical Examples}\label{SS:NumericalExamples}
Let's see what our calculations look like in some specific cases. To
focus on the effects of recovery, let's assume a common (relatively high) probability
of default of $8\%$; i.e., $p^{N,n}=0.08$ for all $N\in \N$ and $n\in
\{1,2,\dots N\}$.

In the first group of examples (Cases 1-4), we will consider four specific cases, one with
fixed recovery rate, two homogeneous pools  with stochastic recoveries and one heterogeneous pool with stochastic recovery. For comparison purposes,
the law of large numbers average loss in the pool will be the same in all cases and equal to $\bar{L}=0.064$. We would like to remark the following.
In a more realistic scenario one would
expect to have pools with combinations of high-rated and low-rated counter-parties that have different default probabilities. Our formulation allows for such a scenario, but
since we want to focus on the effects of recovery we assume a common probability of default in all of our examples. Moreover, recall that we have assumed as a reference point
that all bonds have par value of $\$1$. We will see that the
tails (the large deviations principle rate functions) are
significantly different.  Although our theory has primarily focused on the rate function in the large deviations principle for $(L_N)_{N\in \N}$, the solution of the variational
problem \eqref{E:varprob}
(or equivalently \eqref{E:IDef} or \eqref{E:IIDef}) gives useful information.

In the second group of examples (Cases 5-6), we consider two heterogeneous pools, one with recovery whose distribution depends on the default rate (Case 5) and one with recovery whose distribution
does not depend on the default rate (Case 6). As expected,  the dependence on the default rate affects the tails.

In Cases 2-5 we consider stochastic recovery rates.  We want to consider the case that the recovery is in an appropriate sense negatively correlated with the defaults; i.e.,
 that more defaults imply less recovery.  In an economy that experiences recession, recovery rates tend to decrease just as defaults tend to increase.
This property is also  a documented empirical observation, e.g., see \cite{Das-Hanouna}, \cite{ABRS05} and the references therein.

In \textbf{Case 1}, let's assume that the recovery rate is fixed at $20\%$; i.e., $\wprho^{N,n}=\delta_{0.2}$ for all $N\in \N$ and $n\in \{1,2\dots N\}$.
 The LDP here is essentially a straightforward application of Cramer's Theorem. Let's see how our general formulation of Theorem \ref{T:Main} covers this as a special case.
In this case $\ppp= (0.08,\delta_{0.2})$ and $\UU= \delta_{\ppp}$.
Also we have
\begin{equation*}
H(\nu) = \begin{cases} \hbar_{0.08}(\nu(\XX)) &\text{if $\nu=\nu(\XX)\delta_{\ppp}$} \\
\infty &\text{else}\end{cases}
\end{equation*}
and
\begin{equation*}
\Lambda_\nu(\theta) =  0.8\theta \nu(\XX)
\quad\text{and}\quad
\Lambda^*_\nu(\ell) = \begin{cases} 0 &\text{if $\ell=0.8\nu(\XX)$} \\
\infty &\text{else}\end{cases}\end{equation*}
Collecting things together, we have that $(Z_N)_{N\in \N}$ and $(L_N)_{N\in \N}$ are governed, respectively, by the rate functions
\begin{align*} I_1(\ell,\nu) &= \begin{cases} \hbar_{0.08}\left(\frac{\ell}{0.8}\right) &\text{if $\nu= \frac{\ell}{0.8}\delta_{\ppp}$} \\
\infty &\text{else}\end{cases} \\
I_1'(\ell) &= \hbar_{0.08}\left(\frac{\ell}{0.8}\right) \end{align*}
We note that $I'_1(\ell)$ is finite only if $0\le \ell\le 0.8$.

In \textbf{Case 2} we consider a homogeneous pool with the recovery rate following a beta distribution. Essentially, this is a special case of Examples \ref{Ex:onetype} and \ref{Ex:onetypeLDP}. For $\beta>0$, define
\begin{equation*}\mu_\beta(A) \Def \beta\int_{r\in A}(1-r)^{\beta-1}dr, \qquad A\in \Borel[0,1]. \end{equation*}
this is the law of the beta distribution with parameters $1$ and $\beta$.
As $\beta$ increases, the amount of mass near $1$ decreases.  We also have that
\begin{equation*} \int_{r\in [0,1]}r \mu_\beta(dr) = \frac{1}{1+\beta} \end{equation*}
for all $\beta>0$ (as $\beta$ increases, the mean of $\mu_\beta$ decreases). This will allow a number of explicit formulae for the expected recovery (given the default rate).

We here assume that the recovery rate has a beta
distribution whose parameters depend linearly and monotonically on the empirical
default rate. Namely, if the default rate is $D$, then the
recoveries will all have common beta distribution with parameters
$1$ and
\begin{equation*}  \beta=f_{\aff}(D) \Def \frac{1}{0.2-0.1(D-0.08)}-1;\end{equation*}
note that $f:[0,1]\to \R_+$.  Define

\begin{align*} \wprho_{\aff}(D,\cdot) &=
\mu_{f_{\aff}(D)} \text{ for all }D\in [0,1]\\
\ppp_{\aff}&= (0.08,\wprho_{\aff})
\end{align*}
Then $\UU = \delta_{\ppp_{\aff}}$.

This choice of  $f_{\aff}$ results in a conditional expected recovery which is affine in $D$; i.e.,
\begin{equation*}
\int_{r\in [0,1]}r \wprho_{\aff}(D,dr) = 0.2-0.1(D-0.08).
\end{equation*}
Observe that the expected recovery of a single name in the homogeneous pool is decreasing in $D$ and always in  $[0,1]$.
According to \eqref{E:typdefrate},  the law of large numbers average loss is
\begin{equation*} \bar L = 0.08\times \left(1-\frac{1}{1+f_{\aff}(0.08)}\right) = 0.08\times \left(1-0.2\right) =0.064.\end{equation*}

The rate functions in this case are somewhat similar to those in Case 1.   Again we have that
\begin{equation*} H(\nu) = \begin{cases} \hbar_{0.08}(\nu(\XX)) &\text{if $\nu=\nu(\XX)\delta_{\ppp_{\aff}}$} \\
\infty. &\text{else}\end{cases} \end{equation*}
For each $\beta>0$, define
\begin{align*} \check \Lambda_\beta(\theta) &\Def \ln \int_{r\in [0,1]}e^{\theta(1-r)}\mu_\beta(dr) \qquad \theta\in \R \\
\check \Lambda^*_\beta(\ell) &\Def \sup_{\theta\in \R}\lb \theta \ell -\check \Lambda_\beta(\theta)\rb \qquad \ell\in \R. \end{align*}
Then if $\nu=\nu(\XX)\delta_{\ppp_{\aff}}$ where $\nu(\XX)>0$ we have from Definition \ref{Def:MomentGeneratingFunction},
\begin{align*} \Lambda_\nu(\theta) &= \nu(\XX)\check \Lambda_{f_{\aff}(\nu(\XX))}(\theta) \qquad \theta\in \R\\
\Lambda^*_\nu(\ell) &= \nu(\XX)\check \Lambda^*_{f_{\aff}(\nu(\XX))}\left(\frac{\ell}{\nu(\XX)}\right) \qquad \ell\in \R. \end{align*}
Here $(Z_N)_{N\in \N}$ and $(L_N)_{N\in \N}$ are governed, respectively, by the rate functions.
\begin{equation} \label{E:Iprimes} \begin{aligned} I_2(\ell,\nu) &= \begin{cases} \hbar_{0.08}(\nu(\XX)) +\nu(\XX)\check \Lambda^*_{f_{\aff}(\nu(\XX))}\left(\frac{\ell}{\nu(\XX)}\right) &\text{if $\nu=\nu(\XX)\delta_{\ppp_{\aff}}$} \\
\infty &\text{else}\end{cases} \\
I_2'(\ell) &= \inf_{D\in (0,1]}\lb \hbar_{0.08}(D) + D \check \Lambda^*_{f_{\aff}(D)}\left(\frac{\ell}{D}\right)\rb.  \end{aligned}\end{equation}

In \textbf{Case 3}, we replace $f_\aff$ of Case 2 with one that results in a conditional expected recovery which is quadratic in $D$;
this allows us some insight into the effects of convexity in the conditional expected recovery.  We set
\begin{equation*} f_{\qq}(D)=\frac{1}{0.2-0.1(D-0.08)-0.1(D-0.08)^2}-1.\end{equation*}
Again we have that $f_{\qq}:[0,1]\to \R_+$, and we set $\wprho_{\qq}(D,\cdot) = \mu_{f_{\qq}(D)}$, $\ppp_{\qq}= (0.08,\wprho_{\qq})$, and have that $\UU=\delta_{\ppp_{\qq}}$
Here we have that
\begin{equation*}
\int_{r\in [0,1]}r \wprho_{\qq}(D,dr) = 0.2-0.1(D-0.08)-0.1(D-0.08)^2.
\end{equation*}
Observe that the expected recovery of a single name in the homogeneous pool is decreasing in $D$ and always in $[0,1]$.

We again get that $\bar L=0.064$.  The corresponding rate function is $I'_3$.

\textbf{Case 4} involves the beta distribution again. Here, however, we now consider a heterogeneous pool of two types
(Examples \ref{Ex:twotypes} and \ref{Ex:twotypesLDP}). We concentrate on the effect of the heterogeneity in the recovery distribution, so as in
the previous cases all bonds will have default probability of $8\%$.
For all $D\in [0,1]$, every third bond will have recovery distribution
governed by  $\wprho_{\aff}$ and the remaining bonds will have recovery distribution governed by $\wprho_{\qq}$.  We thus have that $\UU = \tfrac13 \delta_{\ppp_{\aff}} + \tfrac23 \delta_{\ppp_{\qq}}$.
It is easy to see that again the law of large numbers average loss is:
\begin{equation*} \bar L = 0.08\times\frac{1}{3}\times \left(1-\frac{1}{1+f_{\aff}(0.08)}\right)+ 0.08\times \frac{2}{3}\times \left(1-\frac{1}{1+f_{\qq}(0.08)}\right)=0.064.
\end{equation*}

For notational convenience and in order to illustrate the usage of Theorem \ref{T:AlternativeExpression} we  use the alternative representation \eqref{E:IIDef}.  If $\wprho(D,\cdot)=\mu_{f(D)}$
for some $f\in C([0,1];\R_+)$, then for all $D\in [0,1]$, we have that $M_{\wprho}(\theta,D) = \check \Lambda_{f(D)}(\theta)$ for all $\theta\in \R$ and $I_\wprho(\ell,D) = \check \Lambda^*_{f(D)}(\ell)$
for all $\ell\in [0,1]$.   Since the support of $\UU$ is exactly $\{\ppp_{\aff},\ppp_{\qq}\}$, we can consider $\Phi$ and $\Psi$ in $\BB=B(\XX;[0,1])$ of the form
\begin{equation*} \Phi= \phi_A\chi_{\{\ppp_{\aff}\}} + \phi_B\chi_{\{\ppp_{\qq}\}}  \qquad \text{and}\qquad  \Psi= \psi_A\chi_{\{\ppp_{\aff}\}}  + \psi_B\chi_{\{\ppp_{\qq}\}} . \end{equation*}
Thus \eqref{E:IIDef} becomes
\begin{equation} \begin{aligned}\label{E:Ihet} I'_4(\ell) &= \inf_{D\in[\ell,1]}\inf_{\psi_A,\psi_B\in [0,1]}\inf_{\substack{\phi_A,\phi_B\in [0,1]\\  \phi_A\psi_A/3+2\phi_A\psi_B/3=\ell \\ \phi_A/3+2\phi_B/3=D}}\lb \frac13 \phi_A I_{\wprho_{\aff}}(\psi_A,D)+ \frac23 \phi_B I_{\wprho_{\qq}}(\psi_B,D)\right.\\
&\qquad\qquad \left.+\frac{1}{3} \hbar_{p_A}(\phi_A) +\frac{2}{3} \hbar_{p_B}(\phi_B)\rb \\
 &= \inf_{D\in[\ell,1]}\inf_{\psi_A,\psi_B\in [0,1]}\inf_{\substack{\phi_A,\, \phi_B\in [0,1]\\  \phi_A\psi_A/3+2\phi_A\psi_B/3=\ell \\ \phi_A/3+2\phi_B/3=D}}\lb \frac13 \phi_A \check \Lambda^*_{f_\aff(D)}(\psi_A)+ \frac23 \phi_B \check \Lambda^*_{f_\qq(D)}(\psi_B)\right.\\
 &\qquad\qquad \left.+\frac{1}{3} \hbar_{p_A}(\phi_A) +\frac{2}{3} \hbar_{p_B}(\phi_B)\rb \end{aligned}\end{equation}


In Figure \ref{F:Figure1}, we plot the rate functions $I'_1$,
$I'_2$, $I'_3$ and $I'_4$.  We use a Monte Carlo procedure to compute $\check \Lambda$ and $\check \Lambda^*$.
As expected, all action functions are nonnegative and zero at the (common) law of large numbers average loss of $\bar L=0.064$.
\begin{figure}[ht]
\begin{center}
\includegraphics[scale=0.4, width=9 cm, height=6 cm]{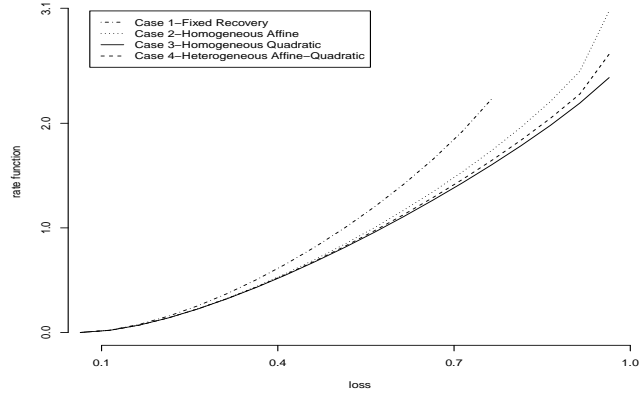}
\caption{Rate functions for fixed recovery, for the homogeneous cases and the heterogeneous case.}
\end{center}\label{F:Figure1}
\end{figure}
We observe that 
$I'_3\leq I'_4 \leq I'_2\leq I'_1$.
 In particular, the heterogeneous case,
which is a mixture of an affine conditional expected recovery and a quadratic conditional expected recovery, is in between the two homogeneous cases
 (cases 2 and 3).  Of course, we should not be surprised that the rate function in Case 1 is larger than that in Cases 2 through 4,
there are in general many more configurations which lead to a given overall loss rate. An interesting observation that seems to be suggested by the examples considered here
is that more convexity results in smaller values for the rate function.

The discussion from now on will be formal. However, we believe that it offers some useful insights in the effect of default rates on the average recovery of large pools.

Another useful insight which we can numerically extract is the ``preferred'' way which losses stem from defaults versus recovery. For each $\ell\in [0,1]$, let $D_*(\ell)$
be the minimizer\footnote{Case 1 is of course degenerate this sense; for a given loss rate $\ell$, the default rate must be very close to $\ell/0.8$.} in the expression
\eqref{E:Iprimes} for $I'_2$, $I'_3$ 
or alternately the expression \eqref{E:Ihet} for $I'_4$. $D_*(\ell)$ can be interpreted  as the ``most likely`` default rate in the pool given that $L_{N}\approx \ell$.
 We make the following assumption.
\begin{assumption}\label{A:MinimizerUniqueness}
Assume that for each $\ell\in [0,1]$ the aforementioned minimizer  $D_*(\ell)$ exists and is unique.
\end{assumption}
Numerically finding $D_*(\ell)$ in the examples considered here, verifies that Assumption \ref{A:MinimizerUniqueness} holds in these cases.
For $\ell>\bar L$ and $\delta>0$, the Gibbs conditioning principle \cite[Section 7.3]{MR1619036} implies that we should have that
\begin{equation}\label{E:Gibbs} \lim_{N\to \infty}\BP_N\lb |D_N-D_*(\ell)|\ge \delta\big| L_N\ge \ell\rb = 0. \end{equation}
In other words, conditional on the pool suffering losses exceeding rate $\ell$, the default rate should converge to $D_*(\ell)$.
Using this information, we can then say something about how the average recovery of the pool is related to the optimal default rate $D_*(\ell)$.
Motivated by (\ref{E:typdefrate}) we write that
\begin{equation*}
\text{Average Loss=Default $\times$ (1-Recovery)}\Rightarrow \text{Recovery=1-$\frac{\text{Average Loss}}{\text{Default}}$}
\end{equation*}
to find an effective recovery rate in terms of the loss rate and the default rate.  This formulation quantifies the fact that losses are due to \emph{both} default and recovery.  For atypically large
losses in a large pool of credit assets, we should combine this with the Gibbs conditioning calculation of \eqref{E:Gibbs}.  Namely, let's define
\begin{equation*}
\IRec(\ell)=1-\frac{\ell}{D_*(\ell)}.
\end{equation*}
We can also formalize the dependence of recovery on default by letting $\IRec^*:[0,1]\to [0,1]$ be such that
\begin{equation}
\IRec^*(D_*(\ell))=\IRec(\ell) \text{ for all }\ell\in [0,1].\label{Eq:OptimalAverageRecovery}
\end{equation}
We refer to $\IRec^*(\cdot)$ as the \emph{effective average recovery} of the pool.

Figure 5.2 is a plot of $\IRec^*$ for the cases which we are studying. As it was expected, we observe that,
for Cases $2,3$ and $4$, the optimal average recovery of the pool and the optimal defaults are negatively correlated. Of course, this is consistent with the corresponding negative
correlation of the individual recovery and default rate that is embedded in the choices of the individual recovery distribution. Notice however that  $\IRec^*$
 is a ''global" quantity in that it represents the effective average recovery in the pool.
 Another interesting observation is that the graph of the heterogeneous case which is a mixture of an affine and a quadratic conditional expected recovery is between
 the graph of the homogeneous cases 2 and 3 which treat the affine and quadratic case respectively.
\begin{figure}[ht]
\begin{center}
\includegraphics[scale=0.4, width=9 cm, height=6 cm]{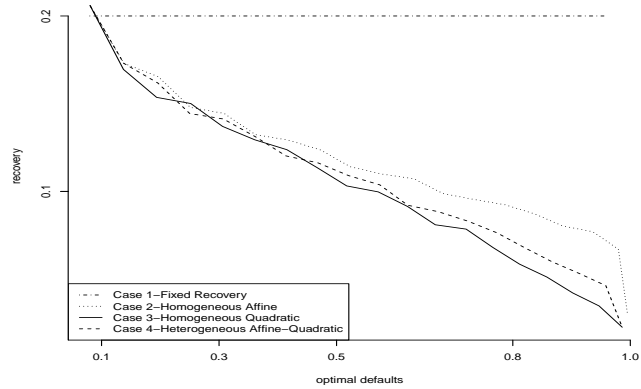}
\caption{Effective average recovery of the pool versus optimal defaults.}
\end{center}\label{F:Figure2}
\end{figure}

We conclude this subsection with a comparison of a heterogeneous portfolio whose distribution of recovery depends on the default rate with one that does not.

\textbf{Case 5.} Consider the setup of Case 4 with the only difference being the definitions of $f_{\aff}(D)$ and $f_{\qq}(D)$. In particular, we consider
\begin{align*}
 f_{\aff}(D) &= \frac{1}{0.1-0.05(D-0.08)}-1, \\
f_{\qq}(D)&=\frac{1}{0.25-0.1(D-0.08)-0.1(D-0.08)^2}-1.
\end{align*}
The corresponding rate function is $I'_5$.

\textbf{Case 6.} Again, consider the setup of Case 4 with the only difference being the definitions of $f_{\aff}(D)$ and $f_{\qq}(D)$. In particular, we consider
\begin{align*}
 f_{\aff} &= \frac{1}{0.1}-1,\\
f_{\qq}&=\frac{1}{0.25}-1.
\end{align*}
Notice that in this case the distribution of the recovery is independent of the default rate $D$. The corresponding rate function is $I'_6$.

In both cases the law of large numbers average loss is the same as before. In Figure 5.3, we plot (a): on the left, the rate functions $I'_5$
and $I'_6$ and (b): on the right, $\IRec^*$ for cases 5 and 6.

\begin{figure}[ht]
\begin{center}
\includegraphics[scale=0.4, width=7 cm, height=5 cm]{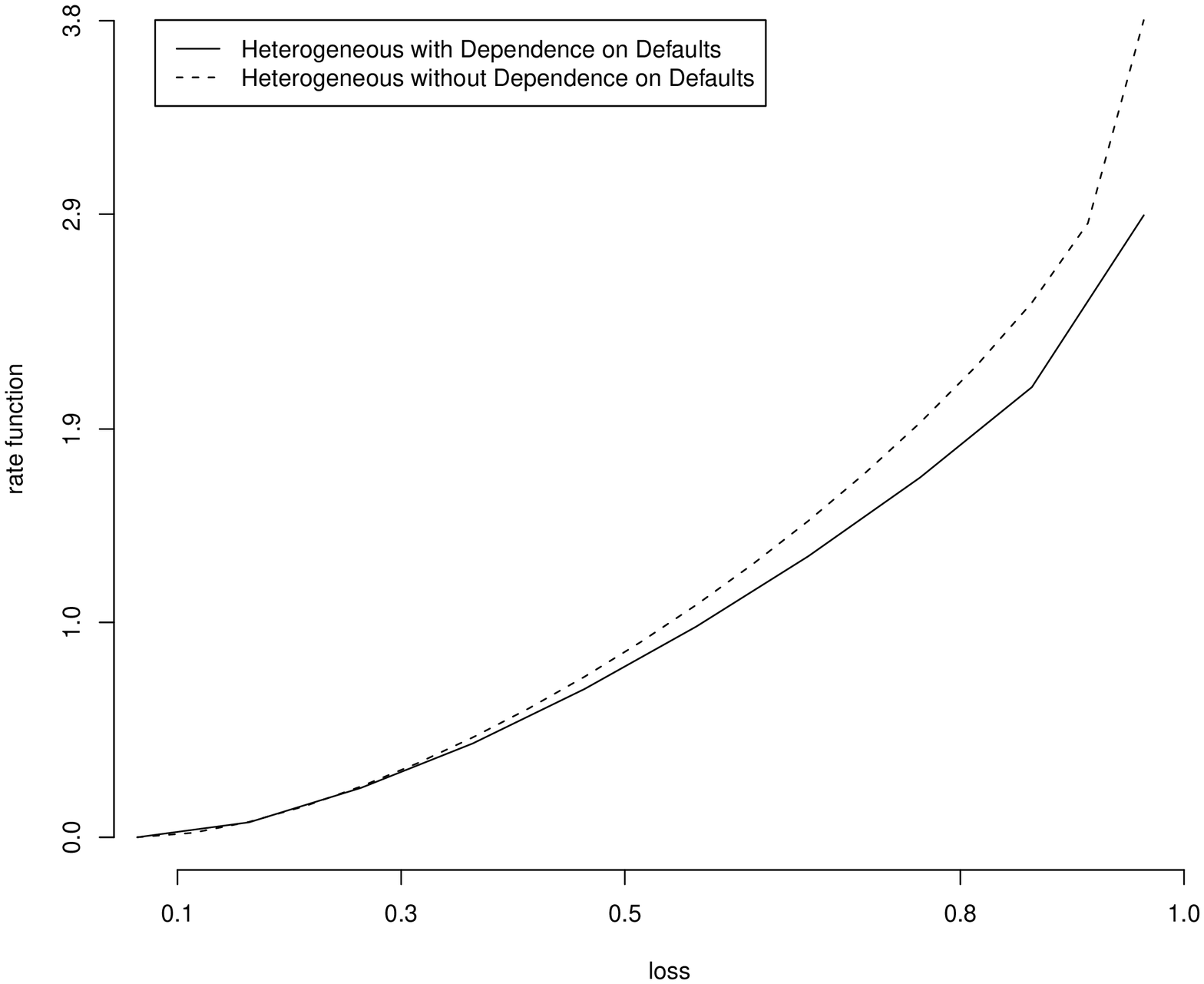}
\hspace{0.1cm}
\includegraphics[scale=0.4, width=7 cm, height=5 cm]{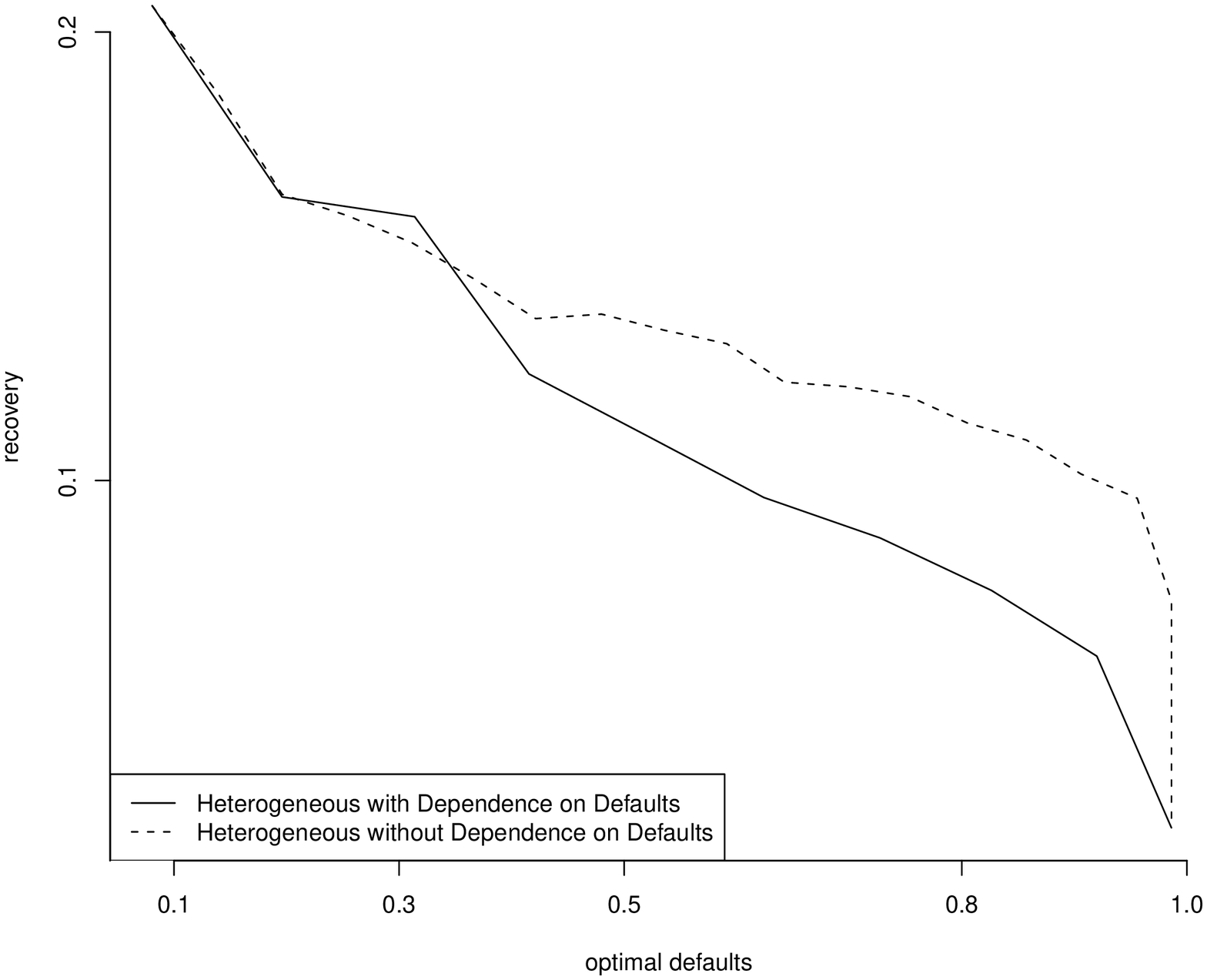}
\caption{Rate functions and $\IRec^*$ when recovery distribution depends on defaults and when it does not.}
\end{center}\label{F:Figure3}
\end{figure}

As it is indicated by the left figure, one of the effects of the dependence of the distribution of the recovery on the defaults is to decrease the values of the rate function, i.e.,
$I'_{5}\leq I'_{6}$. The effective average recovery, $\IRec^*$, is in both cases negatively correlated with the optimal default rate $D_*$. Notice however that for large overall default rates
the effective average recovery of the heterogeneous pool with dependence on defaults (Case 5) is less than the effective average recovery
of the heterogeneous pool without dependence on defaults (Case 6). This is consistent with our intuition, namely that dependence of the recoveries on defaults should affect the recovery
of the pool and in particular that more defaults should decrease recovery.

\subsection{Conclusions and Discussion}\label{SS:Conclusions}
In this subsection we summarize our findings and pose some question  that would be interesting to study.

\begin{itemize}
\item{Assuming that the recoveries for the defaulted assets depend upon the number of defaults in a fairly general way we have characterized the typical (Lemma \ref{L:LLN}) and atypical (Theorems \ref{T:Main} and \ref{T:AlternativeExpression}) behavior of the loss rates in the pool. This allows for consideration of the
     case when recovery rates are affected by the number of defaults.
We prove in Theorem \ref{T:Main} the large deviations principle  for the joint family of random variables $(L_{N}, \nu_{N})$
where $L_N$ is the average loss and $\nu_N$ is the  empirical measure on type-space determined
by the names which default. Then, the large deviations principle for $L_{N}$ follows by the contraction principle.  Furthermore, we derive in Theorem \ref{T:AlternativeExpression} various equivalent  representations for the rate function which give some more insight to the favored way to rearrange recoveries and losses among the different types and also ease its numerical computation.}
\item{Moreover, as we have demonstrated in Subsection \ref{SS:NumericalExamples}, the rate functions can be calculated numerically for both homogeneous and heterogeneous cases.
 The rate function determines the main term in the logarithmic asymptotics of the probability that the average loss in the pool is, let's say, bigger than a specific level. In particular,
 it determines the main term of the tail distribution of total losses on a portfolio consisting of many positions. See also \cite{MR2022976,Gordy02} for a related discussion. Also note that
in the examples considered here, more convexity resulted in smaller rate functions.}
  \item{Our formulation allows to extract some information regarding the optimal default rate in the pool for a given level of average loss in the pool. Moreover, a useful insight into the fact that losses are due to \emph{both} default and recovery is perhaps given by the \emph{effective average recovery} as defined by relation
(\ref{Eq:OptimalAverageRecovery}). 
}
\item{As it is indicated by the comparison of Cases 5 and 6, the effect of the dependence of the distribution of the recovery on the default rate is (a): to reduce the rate function and
(b): to reduce the effective average recovery of the pool especially when the optimal default rate gets larger.}
\end{itemize}

Some interesting questions that naturally arise are  below. These questions will be pursued elsewhere.
\begin{itemize}
\item In this paper we deal with logarithmic asymptotics. It would be interesting to study the exact asymtpotics and characterize the prefactor that appears in the asymptotic
 expansion  of the loss distribution as the pool gets larger. Similar questions have also been studied in \cite{MR2022976,leijdekker-2009} under various assumptions.
\item A question that is in particular relevant for financial applications is to study measures such as Value-at-Risk (VaR) and expected shortfall (ES). VaR at level $a\in(0,1)$ is the a-quantile
of the loss distribution and expected shortfall at level a is defined as $ES_{a}=\BE[L|L\geq VaR_{a}(L)]$. It would be interesting to study the asymptotic behavior of $VaR_{a}(L_{N})$ and
$ES_{a}(L_{N})$, characterize their limits as the pool gets larger and study the corresponding implications of the dependence of the recoveries on defaults. Under certain conditions similar
questions have also been investigated in \cite{Gordy03, FM02,FM03}.
\end{itemize}

\section{Compactness of Level Sets}\label{S:lsc}
The first part of the large deviations claim is that the level sets
of $I$ are compact.  The proof follows along fairly standard lines.

First, however, we need to topologize $\SubProb$. This is done in the usual way. In particular, fix a point $\pt$ that is not in $\XX$ and define $\XX^+\Def \XX\cup \{\pt\}$.
Give $\XX^+$ the standard topology; open sets are those which are open subsets of $\XX$ (with its original topology) or complements in $\XX^+$ of closed subsets
of $\XX$ (again, in the original topology of $\XX$).  Define a bijection $\iota$ from $\SubProb$ to $\Pspace(\XX^+)$ by setting
\begin{equation*} (\iota \nu)(A) \Def \nu(A\cap \XX) + \left(1-\nu(\XX)\right)\delta_{\pt}(A) \end{equation*}
for all $A\in \Borel(\XX^+)$.  The point $\pt$ is introduced because $\nu$  is a subprobability measure. The topology of $\SubProb$ is the pullback of the topology of $\Pspace(\XX^+)$ and the metric on $\SubProb$ is that given by
requiring $\iota$ to be an isometry.

Since $\XX$ is Polish, so is $\XX^+$, and thus $\Pspace(\XX^+)$ is Polish, and thus $\SubProb$ is a Polish space.

\begin{proposition}\label{P:LSC} For each $s\ge 0$, the set
\begin{equation*} \Phi(s) \Def \lb z\in [0,1]\times \SubProb: I(z)\le s\rb \end{equation*}
is a compact subset of $[0,1]\times \SubProb$.
\esquare
\end{proposition}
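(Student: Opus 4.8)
The plan is to exhibit $\Phi(s)$ as a closed subset of a compact set. Since $\Lambda_\nu(0)=\int_\XX \ln 1\,\nu(d\ppp)=0$ we have $\Lambda^*_\nu(\ell)\ge 0$, and $\hbar_p\ge 0$ gives $H\ge 0$; hence $I(\ell,\nu)=\Lambda^*_\nu(\ell)+H(\nu)\ge H(\nu)$, so $\Phi(s)\subseteq[0,1]\times\lb\nu\in\SubProb:H(\nu)\le s\rb$. It thus suffices to prove that $\lb H\le s\rb$ is a compact subset of $\SubProb$ and that $I$ is lower semicontinuous on $[0,1]\times\SubProb$; then $\Phi(s)$ is a closed subset of the compact set $[0,1]\times\lb H\le s\rb$, hence compact. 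For relative compactness of $\lb H\le s\rb$: if $H(\nu)<\infty$ then $\nu\ll\UU$ and, because $\hbar_p(x)=+\infty$ for $x\notin[0,1]$, the density $d\nu/d\UU$ lies in $[0,1]$ $\UU$-a.e., so $\nu(A)\le\UU(A)$ for every Borel $A$. Since the probability measure $\UU$ on the Polish space $\XX$ is tight, for each $\eta>0$ there is a compact $K_\eta\subset\XX$ with $\sup_{H(\nu)<\infty}\nu(\XX\setminus K_\eta)\le\UU(\XX\setminus K_\eta)<\eta$; transported by the isometry $\iota$ (note $\iota\nu(\XX^+\setminus(K_\eta\cup\{\pt\}))=\nu(\XX\setminus K_\eta)$) this makes $\lb\iota\nu:H(\nu)\le s\rb$ tight in $\Pspace(\XX^+)$, hence relatively compact by Prohorov's theorem, and pulling back, $\lb H\le s\rb$ is relatively compact in $\SubProb$.

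For lower semicontinuity of $H$, let $\nu_n\to\nu$ in $\SubProb$ with $\liminf_n H(\nu_n)<\infty$ and pass to a subsequence attaining the liminf, so $\nu_n\le\UU$ for $n$ large. The densities $g_n\Def d\nu_n/d\UU$ form a bounded sequence in $L^\infty(\UU)$; extracting a further subsequence with $g_n\to g^*$ weak-$*$ and testing against $C_b(\XX)\subset L^1(\UU)$ identifies $g^*$ with $g\Def d\nu/d\UU$, so $g_n\rightharpoonup g$ weakly in $L^1(\UU)$ along the subsequence. The functional $g\mapsto\int_\XX\hbar_{p}(g(\ppp))\,\UU(d\ppp)$, whose integrand is a nonnegative normal convex integrand in $g(\ppp)$ for $\UU$-a.e.\ $\ppp=(p,\wprho)$, is weakly $L^1$-lower semicontinuous, so $\liminf_n H(\nu_n)\ge H(\nu)$. (Equivalently one may read lower semicontinuity of $H$ off the dual representation of Lemma \ref{L:Hrep}, observing that the supremum over $C(\XX)$ there may be replaced by a supremum over $C_b(\XX)$, each functional $\nu\mapsto\int\phi\,d\nu-\int\lambda_p(\phi)\,d\UU$ then being weakly continuous.) Together with the preceding paragraph, $\lb H\le s\rb$ is compact.

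It remains to show $(\ell,\nu)\mapsto\Lambda^*_\nu(\ell)$ is lower semicontinuous; since $\Lambda^*_\nu(\ell)=\sup_{\theta\in\R}\lb\theta\ell-\Lambda_\nu(\theta)\rb$, it is enough that $\nu\mapsto\Lambda_\nu(\theta)$ be continuous for each fixed $\theta$. By \eqref{E:LambdaI}, $\Lambda_\nu(\theta)=\int_\XX M_\wprho(\theta,\nu(\XX))\,\nu(d\ppp)$, where $(\ppp,D)=((p,\wprho),D)\mapsto M_\wprho(\theta,D)$ is bounded by $|\theta|$ and jointly continuous on $\XX\times[0,1]$ — joint continuity holding because, if $\wprho_k\to\wprho$ in $C([0,1];\Pint)$ and $D_k\to D$, then $\wprho_k(D_k,\cdot)\to\wprho(D,\cdot)$ weakly and $r\mapsto e^{\theta(1-r)}$ is bounded continuous. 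If $\nu_n\to\nu$ in $\SubProb$, then $\nu_n(\XX)\to\nu(\XX)=:D$ (continuity of $\nu\mapsto\nu(\XX)$, already noted) and $\lb\nu_n\rb$ is tight; writing $\Lambda_{\nu_n}(\theta)-\Lambda_\nu(\theta)$ as $\int_\XX\bigl(M_\wprho(\theta,\nu_n(\XX))-M_\wprho(\theta,D)\bigr)\,\nu_n(d\ppp)$ plus $\bigl(\int_\XX M_\wprho(\theta,D)\,\nu_n(d\ppp)-\int_\XX M_\wprho(\theta,D)\,\nu(d\ppp)\bigr)$, the second term tends to $0$ since $\ppp\mapsto M_\wprho(\theta,D)$ is bounded continuous and $\nu_n\to\nu$ weakly, while the first tends to $0$ since $M$ is uniformly continuous on $K\times[0,1]$ for a compact $K$ carrying all but a negligible part of each $\nu_n$ and $|\nu_n(\XX)-D|\to 0$. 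Hence $\Lambda^*$ is lower semicontinuous, so $I=\Lambda^*+H$ is lower semicontinuous, and the proof concludes as above. I expect the two genuinely non-routine points to be exactly those just used: the weak-$L^1$ lower semicontinuity of the \emph{spatially inhomogeneous} convex integral functional defining $H$ (the $p$-dependence of $\hbar_p$ blocks a one-line scalar-convexity argument, so one invokes a normal-integrand lower semicontinuity theorem, or argues through the dual representation and verifies the reduction to bounded test functions), and the self-referential structure of $\Lambda_\nu(\theta)$, whose integrand depends on $\nu$ through $\nu(\XX)$, which is what forces the use of tightness of convergent sequences and of uniform continuity of $M$ on compacta.
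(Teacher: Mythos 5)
Your proof is correct, and its skeleton matches the paper's: relative compactness of the $H$-level set comes from the pointwise domination $\nu\le \UU$ (forced by $\hbar_p(x)=\infty$ for $x>1$) together with tightness of $\UU$ and the embedding $\iota$ into $\Pspace(\XX^+)$, and compactness of $\Phi(s)$ then reduces to lower semicontinuity of $I$. Where you genuinely diverge is in how lower semicontinuity is obtained. The paper does it in one stroke: using Lemma \ref{L:Hrep} and the definition of $\Lambda^*_\nu$ it writes $([0,1]\times \SubProb)\setminus\Phi(s)$ as a union over $(\theta,\phi)\in\R\times C(\XX)$ of sets $\lb (\ell,\nu):\, \theta\ell+\int_{\XX}\phi\,d\nu-\Lambda_\nu(\theta)>s+\int_{\XX}\lambda_p(\phi)\,d\UU\rb$, i.e.\ it exhibits $I$ as a supremum of (asserted) continuous functionals. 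You instead prove lower semicontinuity of $H$ by extracting a weak-$*$ convergent subsequence of the densities $d\nu_n/d\UU$ from the unit ball of $L^\infty(\UU)$ and invoking weak-$L^1$ lower semicontinuity of the convex normal-integrand functional $g\mapsto\int_{\XX}\hbar_p(g)\,d\UU$; this is heavier machinery than the paper uses, but it is self-contained and correct, and your parenthetical alternative through Lemma \ref{L:Hrep} (with the supremum restricted to bounded test functions, a boundedness point the paper glosses over when it integrates a general $\phi\in C(\XX)$ against $\nu$) is exactly the paper's mechanism. For the $\Lambda^*$ part you and the paper do the same thing---a supremum over $\theta$ of the maps $(\ell,\nu)\mapsto\theta\ell-\Lambda_\nu(\theta)$---except that you actually prove the continuity of $\nu\mapsto\Lambda_\nu(\theta)$ (joint continuity and boundedness of $M_\wprho(\theta,\cdot)$, tightness of the convergent sequence, splitting off the dependence through $\nu(\XX)$), which the paper only asserts; this added detail addresses the genuinely delicate feature that the integrand of $\Lambda_\nu$ depends on $\nu$ through $\nu(\XX)$.
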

\begin{proof} We first claim that $\Phi(s)$ is contained in a compact subset of $[0,1]\times \SubProb$.  Since $[0,1]$ is already compact, it suffices to show that
$\Phi_M(s)\Def \lb \nu\in \SubProb: H(\nu)\le s\rb $ is a compact subset of $\SubProb$.  If $\nu\in \Phi_M(s)$, then $\nu \ll \UU$ and, since $\hbar_p(x)=\infty$ for $x>1$, we have that
\begin{equation*} \UU\lb \ppp\in \XX:\, \frac{d\nu}{d\UU}(\ppp)>1\rb =0, \end{equation*}
so for any $A\in \Borel(\XX)$, $\nu(A)\le \UU(A)$.  Since $\UU$ itself is tight (it is a probability measure on a Polish space), $\Phi_M(s)$ is tight; for every $\eps>0$,
there is a $K_\eps\subset \subset \XX$ such that $\nu(\XX\setminus K_\eps)<\eps$ for all $\nu\in \Phi_M(s)$.  We claim
that thus $\iota(\Phi_M(s))$ is also tight.  Indeed, fix $\eps>0$.  Letting $\iota_\circ:\XX\to \XX^+$ be the inclusion map, we have that $\iota_\circ$ is continuous, and thus $\iota_\circ(K_\eps)$ is
compact.  Since singletons are also compact, $K^*\Def \iota_\circ(K_\eps)\cup \{\pt\}$ is a compact subset of $\XX^+$.  For every $\nu\in \Phi_M(s)$,
$(\iota \nu)(\XX^+ \setminus K^*) = \nu(\XX\setminus K_\eps)<\eps$, so indeed $\iota(\Phi_M(s))$ is tight.  Thus $\overline{\Phi_M(s)}\subset \subset \Pspace(\XX^+)$ and
hence
\begin{equation*} \Phi_M(s)\subset \iota^{-1}\overline{\iota(\Phi_M(s))}\subset \subset \SubProb \end{equation*}
the last claim following since $\iota$ is a homeomorphism.  Gathering things together, we have that $\Phi(s)$ is indeed contained in a compact subset of
$[0,1]\times \SubProb$.

We now want to show that $\Phi(s)$ is closed, or equivalently, that $([0,1]\times \SubProb) \setminus \Phi(s)$ is open.  Using Lemma \ref{L:Hrep}, we have that
\begin{multline*} ([0,1]\times \SubProb) \setminus \Phi(s) \\
= \bigcup_{\substack{\theta\in \R \\ \phi\in C(\XX)}}\lb (\ell,\nu)\in \SubProb:\, \theta \ell + \int_{\ppp\in \XX}\phi(\ppp)\nu(d\ppp) - \Lambda_\nu(\theta) > s + \int_{\ppp=(p,wp)\in \XX}\lambda_p(\phi(\ppp))\UU(d\ppp)\rb. \end{multline*}
For each $\theta\in \R$ and $\phi\in C(\XX)$, then map $(\ell,\nu)\mapsto \theta \ell + \int_{\ppp\in \XX}\phi(\ppp)\nu(d\ppp) - \Lambda_\nu(\theta)$
is continuous, so we have written $([0,1]\times \SubProb) \setminus \Phi(s)$ as a union of open sets.
\end{proof}

\section{Large Deviations Lower Bound}\label{S:Lower}
We next prove the large deviations lower bound.   As with most large deviations lower bounds, the idea is to find a measure
transformation under which the set of interest becomes ``typical''.  In this case, this measure transformation will come from a combination of
Cramer's theorem and Sanov's theorem.

First, we mention an auxiliary approximation result which will be useful in the proof. Its proof is in Section \ref{S:AuxiliaryLemmas}.
\begin{lemma}\label{L:approximation}  Fix $\nu\in \SubProb$ such that $H(\nu)<\infty$.  Then there is a sequence $(\nu_N)_{n\in \N}$ such that
\begin{equation*} \lim_{N\to \infty}\nu_N = \nu \qquad \text{and}\quad \lim_{N\to \infty}H(\nu_N)=H(\nu) \end{equation*}
\textup{(}and thus $\nu_N\ll \UU$ for all $N\in \N$\textup{)} and such that
\begin{equation*} \ppp\mapsto \frac{d\nu_N}{d\UU}(\ppp) \qquad \text{and}\qquad \ppp=(p,\wprho)\mapsto \chi_{(0,1)}(p)\hbar'_p\left(\frac{d\nu_N}{d\UU}(\ppp)\right)\end{equation*}
are both well-defined and in $C(\XX)$ for all $N\in \N$.\end{lemma}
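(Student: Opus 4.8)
The plan is to pass to the convex‑dual ``tilt'' parametrization of densities and then approximate in two stages. Write $g(\ppp)\Def \frac{d\nu}{d\UU}(\ppp)$ for $\ppp=(p,\wprho)$; since $\hbar_p(x)=\infty$ off $[0,1]$ and $H(\nu)<\infty$, we have $g\in[0,1]$ $\UU$-a.e.\ and $\hbar_p(g)\in L^1(\UU)$. The key point is that such a density is always of the form $g(\ppp)=\lambda'_p(\vartheta(\ppp))$ for the measurable $[-\infty,\infty]$-valued function $\vartheta(\ppp)\Def\hbar'_p(g(\ppp))$, with the convention $\vartheta\equiv 0$ on $\{p\in\{0,1\}\}$ (where $g$ is forced to equal $0$, resp.\ $1$), and that, by the Legendre duality \eqref{Eq:Duals_lambda_h}, $\hbar'_p(\lambda'_p(t))=t$ for all $p\in(0,1)$ and $t\in\R$. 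Consequently, if $\vartheta_N\in C(\XX)$ vanishes on $\{p\in\{0,1\}\}$ and we set $g_N\Def\lambda'_p(\vartheta_N)$ and $\nu_N(d\ppp)\Def g_N(\ppp)\UU(d\ppp)$, then $g_N\in C(\XX)$ (it equals $0$, resp.\ $1$, at $p=0$, resp.\ $p=1$, which is exactly what continuity forces), $\nu_N\ll\UU$ with $d\nu_N/d\UU=g_N$, and $\ppp\mapsto\chi_{(0,1)}(p)\hbar'_p(g_N(\ppp))$ equals $\vartheta_N$, hence is continuous as well. So it suffices to produce $\vartheta_N$ of this regularity with $\nu_N\to\nu$ in $\SubProb$ and $H(\nu_N)\to H(\nu)$, where $H(\nu_N)=\int_\XX\psi(p,\vartheta_N(\ppp))\UU(d\ppp)$ with $\psi(p,t)\Def t\lambda'_p(t)-\lambda_p(t)$, which coincides with $\hbar_p(\lambda'_p(t))$ and is jointly continuous and nonnegative on $[0,1]\times\R$.

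Stage one reduces to bounded $\vartheta$. Put $\vartheta^{(M)}\Def(-M)\vee(\vartheta\wedge M)$, $g^{(M)}\Def\lambda'_p(\vartheta^{(M)})$, $\nu^{(M)}(d\ppp)\Def g^{(M)}(\ppp)\UU(d\ppp)$. Since $\partial_t\psi(p,t)=t\lambda''_p(t)$, the function $\psi(p,\cdot)$ is nondecreasing in $|t|$ with minimum value $0$ at $t=0$; hence $\hbar_p(g^{(M)})=\psi(p,\vartheta^{(M)})\le\psi(p,\vartheta)=\hbar_p(g)$ pointwise and $\hbar_p(g^{(M)})\uparrow\hbar_p(g)$, so monotone convergence gives $H(\nu^{(M)})\to H(\nu)$, while $g^{(M)}\to g$ boundedly gives $\nu^{(M)}\to\nu$ in $\SubProb$.

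Stage two mollifies. Given bounded measurable $\vartheta$ with $|\vartheta|\le M$ and $\vartheta\equiv 0$ on $\{p\in\{0,1\}\}$: by Lusin's theorem and tightness of the Borel probability measure $\UU$ on the Polish space $\XX$, choose for each $n$ a compact $K_n\subset\XX$ with $\UU(\XX\setminus K_n)<2^{-n}$ and $\vartheta|_{K_n}$ continuous; extend $\vartheta|_{K_n}$ by the Tietze theorem to $\bar\vartheta_n\in C(\XX)$ with $\|\bar\vartheta_n\|_\infty\le M$; and set $\vartheta_n\Def\rho_n(p)\bar\vartheta_n$ with $\rho_n(p)\Def\min\{1,\,p/\delta_n,\,(1-p)/\delta_n\}$ and $\delta_n\downarrow 0$. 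Then $\vartheta_n\in C(\XX)$, $|\vartheta_n|\le M$, $\vartheta_n$ vanishes on $\{p\in\{0,1\}\}$, and $\vartheta_n\to\vartheta$ $\UU$-a.e.\ (Borel--Cantelli puts $\UU$-a.e.\ $\ppp$ in $K_n$ eventually, and $\rho_n(p)\to1$ for every $p\in(0,1)$, while on $\{p\in\{0,1\}\}$ both sides are $0$). With $g_n\Def\lambda'_p(\vartheta_n)$ and $\nu_n(d\ppp)\Def g_n(\ppp)\UU(d\ppp)$: the required continuity of $g_n$ and of $\chi_{(0,1)}(p)\hbar'_p(g_n(\ppp))=\vartheta_n$ is the computation of the first paragraph; $g_n\to\lambda'_p(\vartheta)$ $\UU$-a.e.\ boundedly gives convergence of $\nu_n$; and $0\le\psi(p,\vartheta_n)\le\sup_{p\in[0,1]}\psi(p,M)<\infty$ (the supremum finite because $\psi(\cdot,M)\in C([0,1])$), so bounded convergence gives convergence of $\int\psi(p,\vartheta_n)\UU(d\ppp)$. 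Applying this to each $\vartheta^{(M)}$ from stage one and extracting a diagonal subsequence produces a single sequence $(\nu_N)_{N\in\N}$ with all the stated properties (and $\nu_N\ll\UU$ throughout).

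The main obstacle is the behaviour at $\{p\in\{0,1\}\}$: there $\hbar_p$ is degenerate ($\hbar_0$ pins the density to $0$, $\hbar_1$ to $1$) and $\hbar'_p$ blows up, so a direct mollification of $g$ cannot simultaneously keep $d\nu_N/d\UU$ continuous, keep $\chi_{(0,1)}(p)\hbar'_p(d\nu_N/d\UU)$ continuous, and leave $H$ undisturbed. The tilt parametrization $g=\lambda'_p(\vartheta)$ is precisely what reconciles the three requirements --- it automatically pins the density correctly on $\{p\in\{0,1\}\}$, makes the second function literally $\vartheta$, and turns the entropy into $\int\psi(p,\vartheta)\,d\UU$ with $\psi$ jointly continuous --- so the only condition one must impose by hand is that $\vartheta$ vanish on $\{p\in\{0,1\}\}$, which the cutoff $\rho_n$ enforces without spoiling the a.e.\ convergence used for $\nu_N$ and $H$.
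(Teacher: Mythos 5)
Your proof is correct, but it takes a genuinely different route from the paper. The paper works directly with the density: it first replaces $\tfrac{d\nu}{d\UU}$ by the truncation $\xi_N$ of \eqref{E:XiDef} (pinning the density to $p$ when $p\notin[N^{-1},1-N^{-1}]$ and to $[N^{-1},1-N^{-1}]$ otherwise), gets $\nu_N\to\nu$ and $H(\nu_N)\to H(\nu)$ by monotonicity of $\hbar_p$ away from $p$ plus dominated convergence, and then approximates $\xi_N$ in $L^1(\UU)$ by continuous functions using inner regularity of $\UU$, patching in the value $p$ near $p\in\{0,1\}$ with a continuous cutoff $\varphi_\eps(p)$ and controlling the change in $H$ through the Lipschitz constant $\vkap$ of $\hbar_p$ on the region bounded away from the singularities. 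You instead pass to the dual tilt $\vartheta=\hbar'_p\left(\tfrac{d\nu}{d\UU}\right)$, so that the density is $\lambda'_p(\vartheta)$, the entropy becomes $\int\psi(p,\vartheta)\,d\UU$ with $\psi(p,t)=t\lambda'_p(t)-\lambda_p(t)$ jointly continuous, and the second map in the statement is literally $\vartheta_N$; truncating $\vartheta$ at level $M$ replaces the paper's density truncation (with the same monotone-convergence mechanism, here via $\partial_t\psi=t\lambda''_p(t)$), and Lusin--Tietze plus the cutoff $\rho_n(p)$ replaces the paper's regularity-plus-$\varphi_\eps$ step, with bounded convergence doing the work that the Lipschitz bound does in the paper. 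What your parametrization buys is that the continuity of $\tfrac{d\nu_N}{d\UU}$ and of $\chi_{(0,1)}(p)\hbar'_p\left(\tfrac{d\nu_N}{d\UU}\right)$, and the convergence of $H$, come essentially for free; what it costs is the extended-value bookkeeping at $\vartheta=\pm\infty$ (where the density hits $0$ or $1$ with $p\in(0,1)$) and the final diagonal extraction, both of which you handle. Two cosmetic points: the bound in stage two should be $\sup_{[0,1]\times[-M,M]}\psi$ rather than $\sup_{p\in[0,1]}\psi(p,M)$, since $\psi(p,-M)$ may exceed $\psi(p,M)$; and in stage one you should say explicitly that $\hbar_p(g)=\psi(p,\vartheta)$ continues to hold in the extended sense when $\vartheta=\pm\infty$ (i.e.\ $\psi(p,t)\to\hbar_p(1)$, resp.\ $\hbar_p(0)$, as $t\to\pm\infty$), which is what makes the monotone convergence land on $H(\nu)$; both are immediate.
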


We start with a simplified lower bound where the measure transformation in Cramer's theorem is fairly explicit.   For each $\nu\in \SubProb$, we make the usual definition \cite{MR1619036}[Appendix A] that
\begin{equation*} \dom \Lambda^*_\nu \Def \{\ell\in [0,1]: \Lambda^*_\nu(\ell)<\infty\}; \end{equation*}
this will of course be an interval; $\ridom \Lambda^*_\nu$ will be the relative interior of $\dom \Lambda^*_\nu$.

\begin{proposition}\label{P:lowerridom}  Fix an open subset $G$ of $[0,1]\times \SubProb$ and $z=(\ell,\nu)\in G$ such that $I(z)<\infty$ and $\ell \in \ridom \Lambda^*_\nu$.  Then
\begin{equation}\label{E:lowerridom} \varliminf_{N\to \infty}\frac{1}{N}\ln \BP_N\lb Z_N\in G\rb  \ge -I(z). \end{equation}
\esquare
\end{proposition}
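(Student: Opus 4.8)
The plan is to establish \eqref{E:lowerridom} by the classical change‑of‑measure (exponential tilting) method: for each $N$ we build a probability measure $\tilde\BP_N$ on $(E,\filt)$ under which $\{Z_N\in G\}$ becomes typical and whose relative entropy against $\BP_N$ grows at rate $I(z)$, and then conclude from the elementary inequality
\[
\BP_N\lb Z_N\in G\rb = \BE_{\tilde\BP_N}\Bigl[\chi_{\{Z_N\in G\}}e^{-\ln(d\tilde\BP_N/d\BP_N)}\Bigr]\ge e^{-N(I(z)+\eps)}\,\tilde\BP_N\Bigl(Z_N\in G,\ \tfrac1N\ln\tfrac{d\tilde\BP_N}{d\BP_N}\le I(z)+\eps\Bigr).
\]
The tilt has two pieces. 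The default probabilities are tilted from $p^{N,n}$ to $q^{N,n}\Def \frac{d\nu}{d\UU}(\ppp^{N,n})\in[0,1]$ (a Sanov‑type tilt driving $\nu_N\to\nu$), and, conditionally on a default of name $n$, the recovery law is tilted from $\wprho^{N,n}(D_N,\cdot)$ to $\tilde\wprho^{N,n}(D_N,dr)\propto e^{\theta^\star(1-r)}\wprho^{N,n}(D_N,dr)$ (a Cramér‑type tilt driving $L_N\to\ell$). The hypothesis $\ell\in\ridom\Lambda^*_\nu$ is exactly what makes the second tilt explicit: since recoveries lie in $[0,1]$, $\Lambda_\nu$ is finite and smooth on all of $\R$, so (outside the degenerate case $\nu(\XX)=0$, handled directly) $\ridom\Lambda^*_\nu$ equals the interior of the range of $\Lambda'_\nu$, and hence there is a $\theta^\star\in\R$ with $\Lambda'_\nu(\theta^\star)=\ell$, for which $\Lambda^*_\nu(\ell)=\theta^\star\ell-\Lambda_\nu(\theta^\star)$.

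Before constructing the tilt, I would reduce to the case in which $g\Def d\nu/d\UU$ is continuous and $\ppp=(p,\wprho)\mapsto \chi_{(0,1)}(p)\hbar'_p(g(\ppp))$ is continuous — precisely the regularity supplied by Lemma \ref{L:approximation}. Given the sequence $\nu^{(k)}\to\nu$ with $H(\nu^{(k)})\to H(\nu)$ from that lemma, one shows (using that $(\ppp,\theta,D)\mapsto M_\wprho(\theta,D)$ is bounded and jointly continuous and that $\nu^{(k)}(\XX)\to\nu(\XX)$) that $\Lambda_{\nu^{(k)}}(\theta)\to\Lambda_\nu(\theta)$ for every $\theta$; since pointwise limits of finite convex functions on $\R$ are locally uniform and the convergence passes to derivatives, it follows that $\ell\in\ridom\Lambda^*_{\nu^{(k)}}$ and $\Lambda^*_{\nu^{(k)}}(\ell)\to\Lambda^*_\nu(\ell)$ for all large $k$, so $I(\ell,\nu^{(k)})\to I(z)$. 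Because $G$ is open and $(\ell,\nu^{(k)})\to z\in G$, proving \eqref{E:lowerridom} for each $(\ell,\nu^{(k)})$ and letting $k\to\infty$ gives it for $z$.

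Working now with continuous $g$, defining $\tilde\BP_N$ as above, and using that the $\Delta_n$ are $\tilde\BP_N$‑independent $\mathrm{Bernoulli}(q^{N,n})$ with the recoveries conditionally independent given $\dilt$, the proof reduces to three laws of large numbers under $\tilde\BP_N$, each obtained from an $O(1/N)$ variance estimate together with weak convergence of $\UU_N$ to $\UU$: first, $D_N=\nu_N(\XX)\to\nu(\XX)$ and $\nu_N\to\nu$, because $\BE_{\tilde\BP_N}[\int_\XX f\,d\nu_N]=\int_\XX fg\,d\UU_N\to\int_\XX fg\,d\UU=\int_\XX f\,d\nu$ for $f\in C_b(\XX)$ (this is where continuity of $g$ enters); second, $L_N\to\ell$, since $\BE_{\tilde\BP_N}[L_N\mid\dilt]=\frac1N\sum_n\Delta_n\,\partial_\theta M_{\wprho^{N,n}}(\theta^\star,D_N)=\int_\XX \partial_\theta M_\wprho(\theta^\star,D_N)\,\nu_N(d\ppp)$, which by the first step and the joint continuity of $\partial_\theta M_\wprho$ converges to $\int_\XX \partial_\theta M_\wprho(\theta^\star,\nu(\XX))\,\nu(d\ppp)=\Lambda'_\nu(\theta^\star)=\ell$; and third, $\tfrac1N\ln\tfrac{d\tilde\BP_N}{d\BP_N}\to I(z)$, since the log‑likelihood ratio splits as $\sum_n\bigl[\Delta_n\ln\tfrac{q^{N,n}}{p^{N,n}}+(1-\Delta_n)\ln\tfrac{1-q^{N,n}}{1-p^{N,n}}\bigr]+\sum_n\Delta_n\bigl[\theta^\star(1-r_n)-M_{\wprho^{N,n}}(\theta^\star,D_N)\bigr]$, whose normalized expectation is $\int_\XX\hbar_p(g(\ppp))\,\UU_N(d\ppp)$ plus a term tending to $\theta^\star\ell-\Lambda_\nu(\theta^\star)$, and which therefore converges to $H(\nu)+\Lambda^*_\nu(\ell)=I(z)$. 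Feeding these three facts into the displayed inequality makes its right‑hand side behave like $e^{-N(I(z)+\eps)}$; taking $\tfrac1N\ln$, then $N\to\infty$, then $\eps\downarrow 0$ yields \eqref{E:lowerridom}.

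The main obstacle I anticipate is running these laws of large numbers cleanly given that the recovery law — and hence $M_{\wprho^{N,n}}(\theta^\star,\cdot)$ and $\partial_\theta M_{\wprho^{N,n}}(\theta^\star,\cdot)$ — is evaluated at the \emph{random} argument $D_N$: one must propagate the convergence $D_N\to\nu(\XX)$ through all the $\nu_N$‑ and $L_N$‑averages, which requires the (locally uniform, after a tightness truncation of $\UU$) continuity of $D\mapsto M_\wprho(\theta^\star,D)$ and $D\mapsto\partial_\theta M_\wprho(\theta^\star,D)$, and one must also control the possible blow‑up of $\hbar_p(g(\ppp))$ and of $\ln(q^{N,n}/p^{N,n})$ near the boundary of $[0,1]$ in order to pass $\int_\XX\hbar_p(g)\,d\UU_N\to H(\nu)$ to the limit — this last point being the reason Lemma \ref{L:approximation} arranges continuity of $\ppp\mapsto\chi_{(0,1)}(p)\hbar'_p(g(\ppp))$ and not merely of $g$. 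A secondary, more bookkeeping‑type difficulty is checking that the reduction via Lemma \ref{L:approximation} does not destroy the relative‑interior hypothesis, which is where the convergence of the functions $\Lambda_{\nu^{(k)}}$ and of their Legendre transforms is needed.
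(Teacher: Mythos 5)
Your overall architecture is the same as the paper's: an exponential tilt that moves the default probabilities toward the (regularized) density of $\nu$ with respect to $\UU$ and tilts the recovery laws by $e^{\theta^\star(1-r)}$ with $\Lambda'_\nu(\theta^\star)=\ell$, regularity supplied by Lemma \ref{L:approximation}, and then laws of large numbers under the tilted measure obtained from $O(1/N)$ conditional variance bounds together with weak convergence of $\UU_N$. The genuine gap is in the step you dismiss as bookkeeping: the reduction to nice $\nu$. From $\Lambda_{\nu^{(k)}}\to\Lambda_\nu$ (locally uniformly, with convergence of derivatives) you may conclude $\Lambda'_{\nu^{(k)}}(\theta^\star)\to\ell$, but \emph{not} that $\ell\in\ridom\Lambda^*_{\nu^{(k)}}$ for large $k$, nor that $\Lambda^*_{\nu^{(k)}}(\ell)\to\Lambda^*_\nu(\ell)$. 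The measures produced by Lemma \ref{L:approximation} are not dominated by $\nu$: their densities are forced to be at least $1/k$ on $\{p\in[1/k,1-1/k]\}$ and are clipped above, and $\nu^{(k)}(\XX)\neq\nu(\XX)$, so the endpoints of $\dom\Lambda^*_{\nu^{(k)}}$ (integrals of the essential bounds $\alpha_\pm(\wprho,\nu^{(k)}(\XX))$ of the recovery supports, which are not continuous under weak convergence or in the argument $D$) can shift; $\ell$ may then sit on or outside $\dom\Lambda^*_{\nu^{(k)}}$, where $\Lambda^*_{\nu^{(k)}}(\ell)$ can jump, even to $+\infty$, so $I(\ell,\nu^{(k)})\not\to I(z)$. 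Pointwise convergence of finite convex functions gives epi-convergence of the conjugates, not pointwise convergence at such boundary points, so the limit passage in the rate function is unjustified as written.

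The gap is repairable in two ways. Staying inside your scheme, apply the ``nice'' case at $(\ell_k,\nu^{(k)})$ with $\ell_k\Def\Lambda'_{\nu^{(k)}}(\theta^\star)$ rather than at $(\ell,\nu^{(k)})$: then $\ell_k$ lies in the range of $\Lambda'_{\nu^{(k)}}$, hence $\ell_k\in\ridom\Lambda^*_{\nu^{(k)}}$, one has $\Lambda^*_{\nu^{(k)}}(\ell_k)=\theta^\star\ell_k-\Lambda_{\nu^{(k)}}(\theta^\star)\to\theta^\star\ell-\Lambda_\nu(\theta^\star)=\Lambda^*_\nu(\ell)$, and $(\ell_k,\nu^{(k)})\to z$ so it lies in $G$ eventually. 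The paper avoids the issue by different bookkeeping: it keeps $\theta$ attached to the original $\nu$, uses the nice $\nu^*$ only as the tilting target through the exponential-family parameter $\phi(\ppp)=\chi_{(0,1)}(p)\hbar'_p\bigl(\tfrac{d\nu^*}{d\UU}(\ppp)\bigr)$, and never solves a Legendre problem for $\nu^*$; on the good event it bounds the loss exponent by $\Lambda^*_\nu(\ell)+O(\eta)$ via continuity of $\tilde\nu\mapsto\Lambda_{\tilde\nu}(\theta)$ and $\tilde\nu\mapsto\Lambda'_{\tilde\nu}(\theta)$, and the default exponent by $H(\nu)+O(\eta)$ via the duality identity $\hbar_p\bigl(\tfrac{d\nu^*}{d\UU}\bigr)=\phi\,\tfrac{d\nu^*}{d\UU}-\lambda_p(\phi)$ together with $|H(\nu^*)-H(\nu)|<\eta$. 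Either route closes the hole; your proposal as stated does not.
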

\begin{proof}
The proof will require a number of tools. For presentation purposes we split the proof in three  steps. In Step 1,
we prove several auxiliary results and are then used to define and analyze the measure change that is used to prove the initial lower bound. In Step 2,
 we prove that under the measure defined by (\ref{Eq:InitalLowerBound3}) in Step 1, the recovery rates for the names that have defaulted are independent and that
the default probabilities are independent as well. In Step 3, we put things together in order to prove the initial lower bound (\ref{E:lowerridom}).

\textbf{Step 1.} Since $\ell\in \ridom \Lambda^*_\nu$, there is a $\theta\in \R$ such that
\begin{equation}\label{E:rid} \Lambda'_\nu(\theta)=\ell \qquad \text{and}\qquad \Lambda^*_\nu(\ell) = \theta \Lambda_\nu'(\theta)-\Lambda_\nu(\theta) \end{equation}
(see \cite{MR1619036}[Appendix A]).
Let's now fix a relaxation parameter $\eta>0$.  Then there is an $\eta_1\in (0,\eta)$ and an open neighborhood $\OO_1$ of $\nu$ such that
$(\ell-\eta_1,\ell+\eta_1)\times \OO_1 \subset G$.  Using the first equality of \eqref{E:rid}, we have that $(\Lambda'_\nu(\theta),\nu)= (\ell,\nu)\in (\ell-\eta_1,\ell+\eta_1)\times \OO_1$.
Since the maps $(\tilde \eta,\tilde \nu)\to (\Lambda'_{\tilde \nu}(\theta)+\tilde \eta,\tilde \nu)$ and $\tilde \nu\mapsto \Lambda_{\tilde \nu}(\theta)$ are continuous, there is an $\eta_2\in (0,1)$ and an open subset $\OO_2$ of $\SubProb$ such that
\begin{equation}\label{Eq:InitalLowerBound1}
\begin{aligned} \lb (\Lambda'_{\tilde \nu}(\theta)+\tilde \eta,\tilde \nu): \tilde \eta\in (0,\eta_2), \, \tilde \nu\in \OO_2\rb &\subset (\ell-\eta_1,\ell+\eta_1)\times \OO_1 \\
\left|\Lambda_{\tilde \nu}(\theta)-\Lambda_\nu(\theta)\right| < \eta \qquad \text{for $\tilde \nu\in \OO_2.$}
\end{aligned}
\end{equation}
We next want to use Lemma \ref{L:approximation}  to choose a particularly nice element of $\OO_2$.  Namely, Lemma \ref{L:approximation} ensures that there is
a $\nu^*\in \OO_2$ such that $\nu^*\ll \UU$ and such that both $\tfrac{d\nu^*}{d\UU}$ and
\begin{equation}\label{E:phidef} \phi(\ppp) \Def \chi_{(0,1)}(p)\hbar'_p\left(\frac{d\nu^*}{d\UU}(\ppp)\right) \qquad \ppp=(p,\wprho)\in \XX\end{equation}
are in $C(\XX)$ and such that $\left|H(\nu^*)-H(\nu)\right|<\eta$.
Let $\OO_3$ be an open subset of $\OO_2$ which contains $\nu^*$ and such that
\begin{equation*}\left|\int_{\ppp\in \XX}\phi(\ppp)\tilde \nu(d\ppp)-\int_{\ppp\in \XX}\phi(\ppp)\nu^*(d\ppp)\right|<\eta \end{equation*}
for all $\tilde \nu\in \OO_3$.

We can now proceed with our measure change.  For each $N\in \N$, define
\begin{equation}
A^{(N)}_1\Def \theta L_N- \Lambda_{\nu_N}(\theta)\qquad \text{and}\qquad A^{(N)}_2 \Def \frac{1}{N}\sum_{n=1}^N\Delta_n \phi(\ppp^{N,n}) - \lambda_{p^{N,n}}(\phi(\ppp^{N,n})) \label{Eq:InitalLowerBound2}
\end{equation}
Standard calculations show that
\begin{equation}\label{Eq:InitalLowerBound2a}
 \BE_N\left[\exp\left[NA^{(N)}_1\right]\bigg|\dilt\right]=1 \qquad \text{and}\qquad \BE_N\left[\exp\left[NA^{(N)}_2\right]\right]=1. \end{equation}
Define a new probability measure as
\begin{equation}
 \tilde \BP_N(A) \Def \BE_N\left[\chi_A\exp\left[N\lb A^{(N)}_1+A^{(N)}_2\rb \right]\right].\qquad A\in \Borel(E)\label{Eq:InitalLowerBound3} \end{equation}
This will be the desired measure change.

Define
\begin{equation}
 S_N \Def \lb \left|L_N-\Lambda'_{\nu_N}(\theta)\right|<\eta_2,\, \nu_N\in \OO_3\rb. \label{Eq:InitalLowerBound4}
 \end{equation}
On $S_N$,
\begin{equation*}(L_N,\nu_N)= \left(\Lambda'_{\nu_N}(\theta)+ \lb L_N-\Lambda'_{\nu_N}(\theta)\rb,\nu_N\right) \in (\ell-\eta_1,\ell+\eta_1)\times \OO_1\subset G \end{equation*}
so in fact $S_N\subset G$.
Thus
\begin{equation*} \BP_N\{Z_N\in G\} \ge  \tilde \BP_N\left[\chi_{S_N}\exp\left[-N\lb A^{(N)}_1 + A^{(N)}_2\rb\right]\right]. \end{equation*}
Let's also assume that $N$ is large enough that
\begin{equation*} \left| \int_{\ppp=(p,\wprho)\in \XX}\lambda_p(\phi(\ppp))\UU_N(d\ppp)- \int_{\ppp=(p,\wprho)\in \XX}\lambda_p(\phi(\ppp))\UU(d\ppp)\right|<\eta. \end{equation*}
Thus by (\ref{E:rid}),(\ref{Eq:InitalLowerBound1}),(\ref{Eq:InitalLowerBound2}) and (\ref{Eq:InitalLowerBound4}) we have that
\begin{equation*} A^{(N)}_1 \le \theta  \ell + |\theta| \eta_1 - \Lambda_\nu(\theta) + \eta = \Lambda^*_\nu(\ell) + \left(|\theta|+1\right)\eta.\end{equation*}

Next we prove that
\begin{equation}\label{E:hope} \hbar_p\left(\frac{d\nu^*}{d\UU}(\ppp)\right) = \frac{d\nu^*}{d\UU}(\ppp)\phi(\ppp)-\lambda_p(\phi(\ppp)) \end{equation}
for $\UU$-almost all $\ppp=(p,\wprho)\in \XX$.  This follows from standard convex analysis and the form \eqref{E:phidef} of $\phi$ when $p\in (0,1)$.  Since $H(\nu^*)<\infty$, \eqref{E:smallb}
implies that, except on a $\UU$-negligible set,
\begin{equation*} \hbar_p\left(\frac{d\nu^*}{d\UU}(\ppp)\right) = \hbar_p(p) = 0 = p\times 0 - \lambda_p(0) = \frac{d\nu^*}{d\UU}(\ppp)\phi(\ppp)-\lambda_p(\phi(\ppp)) \end{equation*}
if $\ppp=(p,\wprho)\in \XX$ is such that $p\in \{0,1\}$.  In other words, \eqref{E:hope} holds except on a $\UU$-negligible set.

Therefore,
\begin{multline}\label{Eq:InitalLowerBound5}
A^{(N)}_2 = \int_{\ppp\in \XX}\phi(\ppp)\nu_N(d\ppp) - \int_{\ppp=(p,\wprho)\in \XX}\lambda_p(\phi(\ppp))\UU_N(d\ppp)\\
\le  \int_{\ppp\in \XX}\phi(\ppp)\nu^*(d\ppp) - \int_{\ppp=(p,\wprho)\in \XX}\lambda_p(\phi(\ppp))\UU(d\ppp) + 2\eta\\
=  \int_{\ppp=(p,\wprho)\in \XX}\lb \phi(\ppp)\frac{d\nu^*}{d\UU}(\ppp)-\lambda_p(\phi(\ppp))\rb \UU(d\ppp) + 2\eta\\
= \int_{\ppp=(p,\wprho)\in \XX}\hbar_p\left(\frac{d\nu^*}{d\UU}(\ppp)\right)\UU(d\ppp) + 2\eta \le H(\nu) + 3\eta. \end{multline}
The last line of (\ref{Eq:InitalLowerBound5}) follows from (\ref{E:hope}).

Thus
\begin{equation}
 \BP_N\{Z_N\in G\} \ge \tilde \BP_N(S_N)\exp\left[-N\lb I(z) - \left(|\theta|+4\right)\eta\rb \right] \label{Eq:InitalLowerBound6}
 \end{equation}

\textbf{Step 2.}  Let's understand the law of $\{\ell_n\}_{1\le n\le N}$ under $\tilde \BP_N\{\cdot | \dilt\}$ defined by (\ref{Eq:InitalLowerBound3}). In particular we prove that under the measure defined by (\ref{Eq:InitalLowerBound3}) the recovery rates for the names that have defaulted are independent and that the default probabilities are independent as well.

 For any $\{\psi\}_{1\le n\le N}\subset \R$, we have that
 \begin{multline*} \BE_N\left[\exp\left[\sqrt{-1}\sum_{n=1}^N \psi_n\ell_n + N\theta L_N\right]\Bigg| \dilt\right]\\
 = \prod_{n=1}^N \lb \Delta_n \int_{r\in [0,1]}\exp\left[\left(\sqrt{-1}\psi_n + \theta\right)(1-r)\right]\wprho_{N,n}(\nu_N(\XX),dr) + 1-\Delta_n\rb. \end{multline*}
 Thus
 \begin{equation*}  \tilde \BE_N\left[\exp\left[\sqrt{-1}\sum_{n=1}^N \psi_n\ell_n\right]\Bigg| \dilt\right]
 = \prod_{n=1}^N \lb \Delta_n \int_{r\in [0,1]}\exp\left[\sqrt{-1}\psi_n(1-r)\right]\tilde \wprho_{N,n}(\nu_N(\XX),dr) + 1-\Delta_n\rb \end{equation*}
 where
 \begin{equation*} \tilde \wprho_{N,n}(D,A) \Def \frac{\int_{r\in [0,1]\cap A}\exp\left[\theta(1-r)\right]\wprho_{N,n}(D,dr)}{\int_{r\in [0,1]}\exp\left[\theta(1-r)\right]\wprho_{N,n}(D,dr)} \qquad A\in \Borel[0,1], \, D\in [0,1]\end{equation*}
 for all $N\in \N$ and $n\in \{1,2\dots N\}$.  In other words, the recovery rates for the names which have defaulted are independent with laws given by the $\tilde \wprho_{N,n}(\nu_N(\XX),\cdot)$'s.  In particular,
 \begin{equation*} \tilde \BE_N\left[L_N\big|\dilt\right] = \frac{1}{N}\sum_{n=1}^N \Delta_n\frac{\int_{r\in [0,1]}(1-r)\exp\left[\theta(1-r)\right]\wprho_{N,n}(\nu_N(\XX),dr)}{\int_{r\in [0,1]}\exp\left[\theta(1-r)\right]\wprho_{N,n}(\nu_N(\XX),dr)} = \Lambda'_{\nu_N}(\theta). \end{equation*}
Moreover,
 \begin{multline}\label{Eq:InitalLowerBound8} \tilde \BE_N\left[\left|L_N-\Lambda_{\nu_N}(\theta)\right|^2\big|\dilt\right] \\
 = \frac{1}{N^2}\sum_{n=1}^N \lb \int_{r\in [0,1]}(1-r)^2\tilde \wprho_{N,n}(\nu_N(\XX),dr)-\left(\int_{r\in [0,1]}(1-r)\tilde \wprho_{N,n}(\nu_N(\XX),dr)\right)^2\rb
 \le \frac{1}{N}. \end{multline}

 In a similar way, we next need to understand the statistics of the defaults under $\tilde \BP_N$.  For $\{\psi\}_{1\le n\le N}\subset \R$,
 \begin{equation*} \BE_N\left[\exp\left[\sqrt{-1}\sum_{n=1}^N \lb \psi_n\Delta_n + \phi(\ppp^{N,n})\Delta_n\rb \right]\right]
 = \prod_{n=1}^N \lb p^{N,n}\exp\left[\sqrt{-1}\psi_n + \phi(\ppp^{N,n})\right] + 1-p^{N,n}\rb. \end{equation*}
 Thus
 \begin{equation}\label{E:CDD}  \tilde \BE_N\left[\exp\left[\sqrt{-1}\sum_{n=1}^N \psi_n\Delta_n\right]\right]
 = \prod_{n=1}^N \lb \tilde p^{N,n}\exp\left[\sqrt{-1}\psi_n\right] + 1-\tilde p^{N,n}\rb \end{equation}
 where
 \begin{equation*} \tilde p^{N,n} = \frac{p^{N,n}e^{\phi(\ppp^{N,n})}}{p^{N,n}e^{\phi(\ppp^{N,n})}+1-p^{N,n}} = \lambda'_{p^{N,n}}(\phi(\ppp^{N,n})) = \frac{d\nu^*}{d\UU}(\ppp^{N,n}) \end{equation*}
 for all $N\in \N$ and $n\in \{1,2\dots N\}$.  In other words, under $\tilde \BP_N$ the defaults are independent with probabilities given by the $\tilde p^{N,n}$'s.

 \textbf{Step 3.} Let us go back to (\ref{Eq:InitalLowerBound6}). We want to show that $\varliminf_{N\to \infty}\tilde \BP_N(S_N)>0$, which will in turn follow if $\varlimsup_{N\to \infty}\tilde \BP_N(S^c_N)=0$.
To organize our thoughts, we write that
\begin{multline}\label{Eq:InitalLowerBound7} \tilde \BP_N(S_n^c) \le \tilde \BP_N\lb \nu_N\not\in \OO_3\rb + \tilde \BE_N\left[\tilde \BP_N\lb |L_N-\Lambda'_{\nu_N}(\theta)|\ge \eta_2\big|\dilt\rb \chi_{\{\nu_N\in \OO_3\}}\right] \\
 \le \tilde \BP_N\lb \nu_N\not\in \OO_3\rb + \frac{1}{\eta_2^2}\tilde \BE_N\left[\tilde \BE_N\left[\left|L_N-\Lambda'_{\nu_N}(\theta)\right|^2|\dilt\right] \chi_{\{\nu_N\in \OO_3\}}\right]. \end{multline}

Let us show that
 \begin{equation} \lim_{N\to \infty} \tilde \BE_N\left[\left|\int_{\ppp\in \XX}\Psi(\ppp)\nu_N(d\ppp)-\int_{\ppp\in \XX}\Psi(\ppp)\nu^*(d\ppp)\right|\right]=0. \label{Eq:InitalLowerBound9}\end{equation}

 Fix  $\Psi\in C(\XX)$.   We write
 \begin{equation*} \int_{\ppp\in \XX}\Psi(\ppp)\nu_N(d\ppp)-\int_{\ppp\in \XX}\Psi(\ppp)\nu^*(d\ppp) = \Err^N_1 + \Err^N_2 \end{equation*}
 where
 \begin{align*} \Err^N_1 &\Def \frac{1}{N}\sum_{n=1}^N\lb \Delta_n-\tilde p^{N,n}\rb \Psi(\ppp^{N,n}) \\
 \Err^N_2&\Def \frac{1}{N}\sum_{n=1}^N \frac{d\nu^*}{d\UU}(\ppp^{N,n})\Psi(\ppp^{N,n}) - \int_{\ppp\in \XX}\Psi(\ppp)\nu^*(d\ppp)
 = \int_{\ppp\in \XX}\frac{d\nu^*}{d\UU}(\ppp)\Psi(\ppp)\UU_N(d\ppp) - \int_{\ppp\in \XX}\frac{d\nu^*}{d\UU}(\ppp)\Psi(\ppp)\UU(d\ppp). \end{align*}
From \eqref{E:CDD} we have that $\tilde \BE_N[\Err^N_1]=0$; we also have by independence (Step 2) that
\begin{equation*} \tilde \BE_N\left[\left| \Err^N_1\right|^2\right] \le \frac{\sup_{\ppp\in \XX}|\Psi(\ppp)|^2}{N} \end{equation*}
The requirement that $\tfrac{d\nu^*}{d\UU}\in C(\XX)$ and Assumption \ref{A:LimitExists} ensure that $\lim_{N\to \infty}\Err^N_2=0$.

Combining things together, we have (\ref{Eq:InitalLowerBound9}).
Since $\Psi$ was an arbitrary element of $C(\XX)$ and $\XX$ is Polish, (\ref{Eq:InitalLowerBound9})  indeed implies (see \cite{MR1267569})
\begin{equation}
 \lim_{N\to \infty}\tilde \BP_N\{\nu_N\not \in \OO_3\}=0. \label{Eq:InitalLowerBound10}
\end{equation}
Using (\ref{Eq:InitalLowerBound8}) and (\ref{Eq:InitalLowerBound10}) we get by  (\ref{Eq:InitalLowerBound7}) that $\varlimsup_{N\to \infty}\tilde \BP_N(S^c_N)=0$. This and (\ref{Eq:InitalLowerBound6}) give us the statement of the proposition.
\end{proof}

We can now prove the full lower bound
\begin{proposition}\label{P:LowerBound} Let $G$ be an open subset of $[0,1]\times \SubProb$.  Then
\begin{equation*}\varliminf_{N\to \infty}\frac{1}{N}\ln \BP_N\lb Z_N\in G\rb \ge -\inf_{z\in G}I(z) \end{equation*}
\esquare
\end{proposition}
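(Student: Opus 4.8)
The plan is to deduce the full lower bound from Proposition \ref{P:lowerridom} by a one-dimensional approximation argument in the $\ell$-variable only. First I would note that the asserted inequality is vacuous if $I\equiv+\infty$ on $G$, so it suffices to prove, for every $z=(\ell,\nu)\in G$ with $I(z)<\infty$, that $\varliminf_{N\to\infty}\frac1N\ln\BP_N\{Z_N\in G\}\ge -I(z)$; taking the supremum of the right-hand side over all such $z\in G$ then yields the proposition.

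Fix such a $z=(\ell,\nu)$. Then $H(\nu)<\infty$ and $\Lambda^*_\nu(\ell)<\infty$, so $\ell\in\dom\Lambda^*_\nu$. If moreover $\ell\in\ridom\Lambda^*_\nu$, there is nothing to do: Proposition \ref{P:lowerridom} applies verbatim. Otherwise $\ell$ is an endpoint of the interval $\dom\Lambda^*_\nu$, and the key step is a standard convexity fact. Since $\Lambda_\nu$ is finite and convex on all of $\R$ (Definition \ref{Def:MomentGeneratingFunction}; finiteness because $1-r\in[0,1]$ forces $M_\wprho(\theta,D)\in[-|\theta|,|\theta|]$), its Legendre--Fenchel transform $\Lambda^*_\nu$ is a proper convex lower semicontinuous function on $\R$, and hence there is a sequence $\ell_k\to\ell$ with $\ell_k\in\ridom\Lambda^*_\nu$ and $\Lambda^*_\nu(\ell_k)\to\Lambda^*_\nu(\ell)$. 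Indeed, on the open interval $\ridom\Lambda^*_\nu$ the function is continuous, and when $\ell_k$ approaches the endpoint $\ell$ the convexity inequality gives $\varlimsup_k\Lambda^*_\nu(\ell_k)\le\Lambda^*_\nu(\ell)$, while lower semicontinuity gives the matching lower bound for the $\varliminf$ (cf. \cite{MR1619036}[Appendix A]).

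With this sequence in hand, set $z_k\Def(\ell_k,\nu)$. Because $G$ is open and $z_k\to z\in G$ in $[0,1]\times\SubProb$, we have $z_k\in G$ for all large $k$; moreover $I(z_k)=\Lambda^*_\nu(\ell_k)+H(\nu)<\infty$ and $\ell_k\in\ridom\Lambda^*_\nu$, so Proposition \ref{P:lowerridom} gives $\varliminf_{N\to\infty}\frac1N\ln\BP_N\{Z_N\in G\}\ge -I(z_k)$ for each such $k$. Letting $k\to\infty$ and using $I(z_k)=\Lambda^*_\nu(\ell_k)+H(\nu)\to\Lambda^*_\nu(\ell)+H(\nu)=I(z)$ completes the argument. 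The only substantive ingredient beyond Proposition \ref{P:lowerridom} is the convex-analysis step that lets us reach a boundary point of $\dom\Lambda^*_\nu$ from its relative interior with convergence of the values of $\Lambda^*_\nu$; this is the main (though routine one-dimensional) obstacle. Everything else is the standard passage from a pointwise lower bound to the variational lower bound over $G$, with the measure $\nu$ held fixed throughout, so that no approximation of the empirical-measure coordinate is needed.
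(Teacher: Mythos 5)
Your proposal is correct and follows essentially the same route as the paper: apply Proposition \ref{P:lowerridom} when $\ell\in\ridom\Lambda^*_\nu$, and otherwise approximate the boundary point $\ell$ from within $\ridom\Lambda^*_\nu$ (keeping $\nu$ fixed) using convexity of $\Lambda^*_\nu$ and openness of $G$, so that the values of $\Lambda^*_\nu$ converge. The paper phrases this with a relaxation parameter $\eta$ and a single nearby point $\ell'$ rather than a sequence $\ell_k$, but the argument is the same.
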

\begin{proof}  Fix $z=(\ell,\nu)\in G$.  If $I(z)<\infty$ and $\ell\in \ridom \Lambda^*_\nu$, then we get \eqref{E:lowerridom} from Proposition \ref{P:lowerridom}.
If $I(z)=\infty$, we of course again get \eqref{E:lowerridom}.  Finally, assume that $\ell\in \dom \Lambda^*_\nu \setminus \ridom \Lambda^*_\nu$.  We use
the fact that $\dom \Lambda^*_\nu \subset \overline{\ridom \Lambda^*_\nu}$ and convexity of $\ell \mapsto \Lambda^*_\nu(\ell)$.
Fix a relaxation parameter $\eta>0$.  Then there is an $\ell'\in \ridom \Lambda^*_\nu$ such that $(\ell',\nu)\in G$ and $\Lambda^*_\nu(\ell')<\Lambda^*_\nu(\ell)+\eta$ (see \cite{MR1619036}[Appendix A]).
Using Proposition \ref{P:lowerridom}, we get that
\begin{equation*} \varliminf_{N\to \infty}\frac{1}{N}\ln \BP_N\{ Z_N\in G\} \ge - I(\ell',\nu) \ge -I(z)-\eta. \end{equation*}
Letting $\eta\searrow 0$, we again get \eqref{E:lowerridom}.  Letting $z$ vary over $G$, we get the claim.
\end{proof}

\section{Large Deviations Upper Bound}\label{S:Upper}

The heart of the upper bound is an exponential Chebychev inequality.  We will mimic, as much as possible, the proof of the upper bound of Cram\'er's theorem.
The main result of this section is
\begin{proposition}\label{P:Upper} Fix any closed subset $F$ of $[0,1]\times \SubProb$.  Then
\begin{equation*} \varlimsup_{N\to \infty}\frac{1}{N}\ln \BP_N\lb Z_N\in F\rb \le -\inf_{z\in F}I(z). \end{equation*}
\esquare
\end{proposition}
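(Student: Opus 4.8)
The plan is to follow the standard route for large deviations upper bounds: reduce to compact $F$, prove a \emph{local} upper bound near each point via an exponential Chebyshev inequality modelled on the upper bound of Cram\'er's theorem, and then patch finitely many such estimates.

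\emph{Reduction to compact sets.} Since $\nu_N(A)\le\UU_N(A)$ for every $A\in\Borel(\XX)$ (compare \eqref{Eq:EmpiricalMeasure} with the definition of $\UU_N$) and $\{\UU_N\}_{N\in\N}$ is tight by Assumption \ref{A:LimitExists}, for each $k\in\N$ there is a compact $K_k\subset\subset\XX$ with $\sup_N\UU_N(\XX\setminus K_k)\le1/k$, hence $\nu_N(\XX\setminus K_k)\le1/k$ for every $N$. Thus each $Z_N$ lies in the fixed set $[0,1]\times\mathcal C$, where $\mathcal C\Def\bigcap_{k\ge1}\{\nu\in\SubProb:\nu(\XX\setminus K_k)\le1/k\}$; arguing exactly as in the proof of Proposition \ref{P:LSC} (uniform tightness, Prohorov's theorem, and $\iota$ a homeomorphism) shows that $\mathcal C$ is compact in $\SubProb$. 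Since $\BP_N\{Z_N\in F\}=\BP_N\{Z_N\in F\cap([0,1]\times\mathcal C)\}$ and $F\cap([0,1]\times\mathcal C)$ is compact, it suffices to treat compact $F$.

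\emph{Local bound.} Fix $z_0=(\ell_0,\nu_0)$, set $D_0\Def\nu_0(\XX)$, and fix $\theta\in\R$, $\phi\in C(\XX)$. Conditioning on $\dilt$ and using that, given the defaults, the recoveries of the defaulted names are independent with laws $\wprho^{N,n}(D_N,\cdot)$, one has
\begin{equation*}\BE_N\!\left[\exp\Bigl(N\theta L_N+\sum_{n=1}^N\Delta_n\phi(\ppp^{N,n})\Bigr)\Big|\dilt\right]=\exp\Bigl(\sum_{n=1}^N\Delta_n\bigl(M_{\wprho^{N,n}}(\theta,D_N)+\phi(\ppp^{N,n})\bigr)\Bigr).\end{equation*}
As $\nu\mapsto\nu(\XX)$, $\nu\mapsto\int_\XX\phi\,d\nu$ and $(\ell,\nu)\mapsto\ell$ are continuous, for any $\delta,\eps>0$ one may pick an open neighbourhood $B$ of $z_0$ contained in $\{|D_N-D_0|<\delta\}\cap\{|L_N-\ell_0|<\eps\}\cap\{|\int_\XX\phi\,d\nu_N-\int_\XX\phi\,d\nu_0|<\eps\}$. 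An exponential Chebyshev inequality, restriction to the $\dilt$-measurable event $\{|D_N-D_0|<\delta\}$, replacing $M_{\wprho^{N,n}}(\theta,D_N)$ by $M_{\wprho^{N,n}}(\theta,D_0)$ up to the modulus of continuity of $M_\wprho(\theta,\cdot)$ on $K_\eta\times[0,1]$ for a compact $K_\eta$ with $\sup_N\UU_N([0,1]\times K_\eta^c)<\eta$ (the at most $\eta N$ names with $\wprho^{N,n}\notin K_\eta$ being absorbed into a factor $e^{2\eta N|\theta|}$ since $|M_\wprho(\theta,\cdot)|\le|\theta|$), the factorisation $\BE_N[\exp(\sum_n\Delta_n\psi^{N,n})]=\exp(\sum_n\lambda_{p^{N,n}}(\psi^{N,n}))$, and $N\to\infty$ with $\UU_N\to\UU$, then yield
\begin{equation*}\varlimsup_{N\to\infty}\frac1N\ln\BP_N\{Z_N\in B\}\le-\Bigl(\theta\ell_0+\int_\XX\phi\,d\nu_0-\int_{\ppp=(p,\wprho)\in\XX}\lambda_p\bigl(M_\wprho(\theta,D_0)+\phi(\ppp)\bigr)\UU(d\ppp)\Bigr)+C\eps+\omega(\delta,\eta),\end{equation*}
with $\omega(\delta,\eta)\to0$ as $\delta,\eta\to0$. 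The substitution $\tilde\phi(\ppp)\Def M_\wprho(\theta,D_0)+\phi(\ppp)$ (which ranges over all of $C(\XX)$, $\ppp\mapsto M_\wprho(\theta,D_0)$ being bounded and continuous), the identity $\int_\XX M_\wprho(\theta,D_0)\,\nu_0(d\ppp)=\Lambda_{\nu_0}(\theta)$ from \eqref{E:LambdaI}, and the representation \eqref{E:Hrep} of $H$ give
\begin{equation*}\sup_{\theta\in\R}\,\sup_{\phi\in C(\XX)}\Bigl(\theta\ell_0+\int_\XX\phi\,d\nu_0-\int_{\ppp=(p,\wprho)\in\XX}\lambda_p\bigl(M_\wprho(\theta,D_0)+\phi(\ppp)\bigr)\UU(d\ppp)\Bigr)=\Lambda^*_{\nu_0}(\ell_0)+H(\nu_0)=I(z_0).\end{equation*}
Hence, given $M>0$ and $\eps>0$, choosing $\theta,\phi$ realising a value exceeding $I(z_0)\wedge M-\eps$ and then shrinking $\delta,\eta$ produces a neighbourhood $B_{z_0}$ with $\varlimsup_N\frac1N\ln\BP_N\{Z_N\in B_{z_0}\}\le-(I(z_0)\wedge M)+2\eps$.

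\emph{Patching and obstacle.} For compact $F$, $M>0$, $\eps>0$, cover $F$ by finitely many $B_{z_1},\dots,B_{z_k}$ with $z_i\in F$; then $\BP_N\{Z_N\in F\}\le\sum_i\BP_N\{Z_N\in B_{z_i}\}$ gives $\varlimsup_N\frac1N\ln\BP_N\{Z_N\in F\}\le-\bigl(\inf_{z\in F}I(z)\bigr)\wedge M+2\eps$, and letting $\eps\searrow0$ then $M\nearrow\infty$ finishes the proof. I expect the main obstacle to be the self-interaction of the model: the empirical default rate $D_N=\nu_N(\XX)$ appears inside the recovery kernels $\wprho^{N,n}(D_N,\cdot)$, so the product structure underpinning Cram\'er-type arguments is not directly available; the remedy is the localisation $D_N\approx D_0$ on $B$ together with the tightness control of the recovery kernels lying outside a compact set. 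A secondary, routine nuisance is the case $I(z_0)=\infty$, which the truncation parameter $M$ takes care of.
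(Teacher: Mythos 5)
Your proof is correct, and it shares the paper's overall architecture (reduce to compact sets, cover by finitely many neighbourhoods, exponential Chebyshev on each, Lemma \ref{L:Hrep} to identify the rate), but it differs in two technical devices, both legitimate. At the covering step the paper never freezes the default rate: in Proposition \ref{P:UpperComp} it tests against the functionals $(\ell,\nu)\mapsto\theta\ell+\int_\XX\phi\,d\nu-\Lambda_\nu(\theta)$, whose random centering $\Lambda_{\nu_N}(\theta)$ already contains $D_N=\nu_N(\XX)$, and uses the exact identities $\BE_N\left[\exp\left[N\lb\theta L_N-\Lambda_{\nu_N}(\theta)\rb\right]\mid\dilt\right]=1$ and $\BE_N\left[\exp\left[N\int_\XX\phi\,d\nu_N\right]\right]=\exp\left[N\int_\XX\lambda_p(\phi)\,d\UU_N\right]$; the self-interaction through the recovery kernels is thus handled exactly, with no localisation in $D$, no modulus-of-continuity estimate, and no $\eps,\delta,\eta$ bookkeeping, the open sets $\OO_{(\theta,\phi)}$ being precisely the level-set complements already used in Proposition \ref{P:LSC}. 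You instead use affine test functionals and replace $M_{\wprho^{N,n}}(\theta,D_N)$ by $M_{\wprho^{N,n}}(\theta,D_0)$ on a small neighbourhood, paying for it with the tightness-plus-uniform-continuity argument in the spirit of Lemma \ref{L:LLN} (you should record the joint continuity of $(\wprho,D)\mapsto M_\wprho(\theta,D)$, the same verification the paper makes for $\Gamma$), and you recover $I(z_0)$ afterwards through the substitution $\tilde\phi=\phi+M_{\cdot}(\theta,D_0)$ and Lemma \ref{L:Hrep}; this is more laborious but sound, including your truncation at level $M$ for points with $I(z_0)=\infty$. For the reduction to compact sets, the paper proves exponential tightness of $\nu_N$ (Proposition \ref{P:tightness}) via an exponential Chebyshev bound, whereas you observe the deterministic domination $\nu_N(A)\le\UU_N(A)$, so that $\nu_N$ lies almost surely in a fixed compact subset of $\SubProb$ built from the uniform tightness of $\{\UU_N\}$; this is simpler and in fact stronger than what the paper establishes, and is a nice shortcut.
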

\noindent Not surprisingly, we will first prove the bound for $F$ compact; we will then show enough exponential tightness to get to the full claim.

\begin{proposition}\label{P:UpperComp} Fix any compact subset $F$ of $[0,1]\times \SubProb$.  Then
\begin{equation*} \varlimsup_{N\to \infty}\frac{1}{N}\ln \BP_N\lb Z_N\in F\rb \le -\inf_{z\in F}I(z). \end{equation*}
\esquare
\end{proposition}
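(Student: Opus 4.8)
The plan is to run the standard argument for a compact-set large deviations upper bound: an exponential Chebyshev (Markov) estimate localized near each point of $F$, followed by a finite subcover and a union bound. Concretely, it suffices to prove that for every $z^\star=(\ell^\star,\nu^\star)\in F$, every $M>0$ and every $\delta>0$ there is an open neighbourhood $\OO_{z^\star}$ of $z^\star$ in $[0,1]\times\SubProb$ with
\[ \varlimsup_{N\to\infty}\tfrac1N\ln\BP_N\{Z_N\in\OO_{z^\star}\}\le -\bigl(\min(I(z^\star),M)-4\delta\bigr). \]
Covering $F$ by finitely many such sets $\OO_{z_1},\dots,\OO_{z_k}$, the union bound together with $\varlimsup_N\tfrac1N\ln\sum_{i=1}^k a^{(i)}_N=\max_i\varlimsup_N\tfrac1N\ln a^{(i)}_N$ yields $\varlimsup_N\tfrac1N\ln\BP_N\{Z_N\in F\}\le -(\min(\inf_{z\in F}I(z),M)-4\delta)$, and letting $\delta\searrow0$ and then $M\nearrow\infty$ finishes the proof.

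For the local estimate, fix $z^\star=(\ell^\star,\nu^\star)\in F$ and $M,\delta>0$. By Definition \ref{Def:MomentGeneratingFunction} choose $\theta\in\R$ with $\theta\ell^\star-\Lambda_{\nu^\star}(\theta)\ge\min(\Lambda^*_{\nu^\star}(\ell^\star),M)-\delta$ (note $\Lambda_\nu(\theta)$ is finite for all $\nu,\theta$ since $1-r\in[0,1]$), and by Lemma \ref{L:Hrep} choose $\phi\in C(\XX)$ with $\int_\XX\phi\,d\nu^\star-\int_{\ppp=(p,\wprho)\in\XX}\lambda_p(\phi(\ppp))\,\UU(d\ppp)\ge\min(H(\nu^\star),M)-\delta$; the supremum in Lemma \ref{L:Hrep} is unchanged if one restricts to bounded $\phi\in C(\XX)$ (a standard truncation argument), so we may assume $\phi$ bounded, hence $\ppp=(p,\wprho)\mapsto\lambda_p(\phi(\ppp))$ is bounded and continuous on $\XX$. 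Consider
\[ \Xi(\ell,\nu)\Def\theta\ell-\Lambda_\nu(\theta)+\int_\XX\phi(\ppp)\,\nu(d\ppp)-\int_{\ppp=(p,\wprho)\in\XX}\lambda_p(\phi(\ppp))\,\UU(d\ppp), \]
which is continuous on $[0,1]\times\SubProb$ by the same reasoning used in the proof of Proposition \ref{P:LSC}. Since $\min(a,M)+\min(b,M)\ge\min(a+b,M)$ for $a,b\ge0$, we get $\Xi(z^\star)\ge\min(I(z^\star),M)-2\delta$, so by continuity there is an open neighbourhood $\OO_{z^\star}$ of $z^\star$ on which $\Xi>\min(I(z^\star),M)-3\delta$.

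Now introduce the tilt. For each $N$ put
\[ W_N\Def N\bigl(\theta L_N-\Lambda_{\nu_N}(\theta)\bigr)+\sum_{n=1}^N\bigl(\Delta_n\phi(\ppp^{N,n})-\lambda_{p^{N,n}}(\phi(\ppp^{N,n}))\bigr), \]
i.e.\ $W_N=N(A^{(N)}_1+A^{(N)}_2)$ in the notation of \eqref{Eq:InitalLowerBound2}. The identities \eqref{Eq:InitalLowerBound2a} give $\BE_N[\exp(NA^{(N)}_1)\mid\dilt]=1$ — conditionally on $\dilt$ the recoveries of the defaulting names are independent with law $\wprho^{N,n}(\nu_N(\XX),\cdot)$ and $\Lambda_{\nu_N}(\theta)$ is $\dilt$-measurable — while $A^{(N)}_2$ is $\dilt$-measurable with $\BE_N[\exp(NA^{(N)}_2)]=1$; conditioning on $\dilt$ and using the tower property yields $\BE_N[\exp(W_N)]=1$ for all $N$. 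Since $\tfrac1N\sum_n\Delta_n\phi(\ppp^{N,n})=\int_\XX\phi\,d\nu_N$ and $\tfrac1N\sum_n\lambda_{p^{N,n}}(\phi(\ppp^{N,n}))=\int_\XX\lambda_p(\phi)\,d\UU_N\to\int_\XX\lambda_p(\phi)\,d\UU$ (weak convergence of $\UU_N$ and boundedness of $\lambda_p(\phi)$), for all $N$ large (depending only on $\phi$ and $\delta$) we have $\tfrac1N W_N\ge\Xi(Z_N)-\delta$. Hence on $\{Z_N\in\OO_{z^\star}\}$, for such $N$, $\tfrac1N W_N>\min(I(z^\star),M)-4\delta$, so that
\[ \BP_N\{Z_N\in\OO_{z^\star}\}\le\BE_N\!\left[\chi_{\{Z_N\in\OO_{z^\star}\}}\exp\!\bigl(W_N-N(\min(I(z^\star),M)-4\delta)\bigr)\right]\le e^{-N(\min(I(z^\star),M)-4\delta)}, \]
which is the desired local bound.

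I expect the only real obstacle to be the identity $\BE_N[\exp(W_N)]=1$: one must correctly decouple the recovery tilt, which is only meaningful conditionally on $\dilt$ because it uses the realized default rate $\nu_N(\XX)$, from the $\dilt$-measurable default tilt. This, however, is precisely the structure already exploited in Section \ref{S:Lower} (see \eqref{Eq:InitalLowerBound2}--\eqref{Eq:InitalLowerBound2a}), so no genuinely new input is needed. The remaining ingredients — continuity of $(\ell,\nu)\mapsto\Xi(\ell,\nu)$, the reduction of Lemma \ref{L:Hrep} to bounded $\phi$, the $\min(\cdot,M)$ bookkeeping to accommodate points where $I=\infty$, and the finite subcover — are routine.
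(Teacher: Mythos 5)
Your proposal is correct and follows essentially the same route as the paper: near-optimal $(\theta,\phi)$ from the definition of $\Lambda^*_\nu$ and Lemma \ref{L:Hrep}, an exponential Chebyshev bound built on the tilt $N(A^{(N)}_1+A^{(N)}_2)$ with $\BE_N[\exp(N(A^{(N)}_1+A^{(N)}_2))]=1$ via conditioning on $\dilt$, a finite subcover of the compact set $F$, and a union bound, with your $\min(I(z^\star),M)-4\delta$ bookkeeping playing the role of the paper's fixed level $s<\inf_{z\in F}I(z)$. Your explicit reduction to bounded $\phi$ so that $\int\lambda_p(\phi)\,d\UU_N\to\int\lambda_p(\phi)\,d\UU$ is a small point the paper leaves implicit, and it is handled correctly.
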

\begin{proof}  To begin, fix $s<\inf_{z\in F}I(z)$.  Fix also a relaxation parameter $\eta>0$.
For each $(\theta,\phi)\in \R\times C(\XX)$, define the set
\begin{equation*} \OO_{(\theta,\phi)} \Def \lb (\ell,\nu)\in [0,1]\times \SubProb:\, \theta \ell + \int_{\ppp\in \XX}\phi(\ppp)\nu(d\ppp) - \Lambda_\nu(\theta)> s+ \int_{\ppp=(p,\wprho)\in \XX}\lambda_p(\phi(\ppp))\UU(d\ppp) \rb \end{equation*}
(these open sets were used in the proof of Proposition \ref{P:LSC}).

Fix now a $z\in F$.  By definition of $I$ and Lemma \ref{L:Hrep}, we see that there is a $(\theta_z,\phi_z)\in \R\times C(\XX)$ such that
$z\in \OO_{(\theta_z,\phi_z)}$.  Since $(\ell,\nu)\mapsto \theta \ell + \int_{\ppp\in \XX}\phi_z(\ppp)\nu(d\ppp) - \Lambda_\nu(\theta)$ is continuous,
there is an open neighborhood $\OO^*_z$ of $z$ such that $\OO^*_z\subset \OO_{(\theta_z,\phi_z)}$ and such that
\begin{equation*} \theta_z \tilde \ell + \int_{\ppp\in \XX}\phi_z(\ppp)\tilde \nu(d\ppp) - \Lambda_{\tilde \nu}(\theta_z)> s+\int_{\ppp=(p,\wprho)\in \XX}\lambda_p(\phi_z(\ppp))\UU(d\ppp)
\end{equation*}
for all $(\tilde \ell,\tilde \nu)\in \OO^*_z$.  Thus
\begin{equation*} F \subset \bigcup_{z\in F}\OO^*_z, \end{equation*}
the compactness of $F$ implies that we can extract a finite subset $\ZZ$ of $F$ such that
\begin{equation*} F\subset \bigcup_{z\in \ZZ}\OO^*_z \end{equation*}
and thus
\begin{equation*} \BP_N\lb Z_N\in F\rb \le \sum_{z\in \ZZ}\BP_N\{Z_N\in \OO^*_z\}. \end{equation*}
Fix now $z\in \ZZ$.  We have that
\begin{multline*} \BP_N\{Z_N\in \OO^*_z\}
\le \BP_N\lb \theta_z L_N + \int_{\ppp\in \XX}\phi_z(\ppp)\nu_N(d\ppp) > s+ \Lambda_{\nu_N}(\theta_z) + \int_{\ppp=(p,\wprho)\in \XX}\lambda_p(\phi_z(\ppp))\UU(d\ppp)\rb \\
\le e^{-Ns} \BE_N\left[\exp\left[N\lb \theta_z L_N - \Lambda_{\nu_N}(\theta_z)\rb \right]\exp\left[N\lb \int_{\ppp\in \XX}\phi_z(\ppp)\nu_N(d\ppp) - \int_{\ppp=(p,\wprho)\in \XX}\lambda_p(\phi_z(\ppp))\UU(d\ppp)\rb\right]\right] \\
= e^{-Ns} \exp\left[N \lb \int_{\ppp=(p,\wprho)\in \XX}\lambda_p(\phi_z(\ppp))\UU_N(d\ppp)-\int_{\ppp=(p,\wprho)\in \XX}\lambda_p(\phi_z(\ppp))\UU(d\ppp)\rb\right] \end{multline*}
We have used here the fact that
\begin{equation*}\BE_N\left[\exp\left[N\left\{\theta_z L_N - \Lambda_{\nu_N}(\theta_z)\right\}\right]|\dilt\right]=1 \end{equation*}
(compare with (\ref{Eq:InitalLowerBound2a})) and that
\begin{equation*} \BE_N\left[\exp\left[N\int_{\ppp\in \XX}\phi_z(\ppp)\nu_N(d\ppp)\right]\right] = \exp\left[ N \int_{\ppp=(p,\wprho)\in \XX}\lambda_p(\phi_z(\ppp))\UU_N(d\ppp)\right]. \end{equation*}
Letting $N\to \infty$, we get that
\begin{equation*} \varlimsup_{N\to \infty}\frac{1}{N}\ln \BP_N\{Z_N\in \OO^*_z\} \le -s. \end{equation*}
This gives the claim.
\end{proof}

Let's next show that  $\nu_N$ is in a compact set.
\begin{proposition}[Exponential Tightness]\label{P:tightness}  For each $L>0$ there is a compact subset $\KComp_L$ of $\SubProb$ such that
\begin{equation*} \varlimsup_{N\to \infty}\frac{1}{N}\ln \BP_N\{\nu_n\not\in \KComp_L\} \le -L. \end{equation*}
\esquare
\end{proposition}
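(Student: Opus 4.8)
The plan is to exhibit $\KComp_L$ as a ``uniform tightness'' set, exploiting that, through the isometry $\iota$, compact subsets of $\SubProb$ are precisely the closed sets whose image under $\iota$ is tight in $\Pspace(\XX^+)$ (Prohorov's theorem, since $\XX^+$ is Polish). The key probabilistic observation is that, for a compact $K\subset\subset\XX$, the quantity $N\nu_N(\XX\setminus K)=\sum_{n=1}^N\Delta_n\chi_{\XX\setminus K}(\ppp^{N,n})$ is a sum of independent Bernoulli variables running only over the indices $n$ with $\ppp^{N,n}\notin K$, and there are exactly $N\,\UU_N(\XX\setminus K)$ such indices. Since $\{\UU_N\}_{N\in\N}$ is tight (as in the proof of Lemma \ref{L:LLN}), this count is uniformly small once $K$ is large, and an exponential Chebychev bound then makes $\BP_N\{\nu_N(\XX\setminus K)\ge a\}$ exponentially small with a rate we can make as large as we wish.

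Concretely, using tightness of $\{\UU_N\}$ pick, for each $j\in\N$, a compact $K_j\subset\subset\XX$ with $\sup_{N\in\N}\UU_N(\XX\setminus K_j)<\tfrac1{2j}$, and set
\begin{equation*}
\KComp_L \Def \bigcap_{j\in\N}\lb \nu\in \SubProb:\ \nu(\XX\setminus K_j)\le \tfrac1j\rb.
\end{equation*}
Since $K_j$ is closed in $\XX$, the set $\XX\setminus K_j$ is open in $\XX^+$, so $\nu\mapsto \nu(\XX\setminus K_j)$ is lower semicontinuous on $\SubProb$; hence each set in the intersection is closed and $\KComp_L$ is closed. Given $\eps>0$, picking $j$ with $1/j<\eps$ shows $\nu(\XX\setminus K_j)<\eps$ for all $\nu\in\KComp_L$, while $K_j\cup\{\pt\}$ is a compact subset of $\XX^+$ (as in the proof of Proposition \ref{P:LSC}) and $(\iota\nu)(\XX^+\setminus(K_j\cup\{\pt\}))=\nu(\XX\setminus K_j)$; thus $\iota(\KComp_L)$ is tight, hence relatively compact in $\Pspace(\XX^+)$, and being closed it is compact, so $\KComp_L$ is compact in $\SubProb$.

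For the estimate, note that $\{\nu_N\notin \KComp_L\}\subset\bigcup_{j\in\N}\lb \nu_N(\XX\setminus K_j)\ge \tfrac1j\rb$. For each $j$ and each $\theta>0$, an exponential Chebychev inequality, independence of the $\Delta_n$ under $\BP_N$, and the elementary bound $p e^\theta + 1-p\le e^\theta$ give
\begin{multline*}
\BP_N\lb \nu_N(\XX\setminus K_j)\ge \tfrac1j\rb \le e^{-N\theta/j}\prod_{n:\,\ppp^{N,n}\notin K_j}\left(p^{N,n}e^\theta + 1-p^{N,n}\right)\\
\le e^{-N\theta/j}\,e^{\theta N\UU_N(\XX\setminus K_j)}\le e^{-N\theta(\frac1j-\frac1{2j})} = e^{-N\theta/(2j)}.
\end{multline*}
Taking $\theta=\theta_j\Def 2j(L+j)$ bounds the $j$-th term by $e^{-N(L+j)}$, so a union bound yields $\BP_N\{\nu_N\notin\KComp_L\}\le\sum_{j\in\N}e^{-N(L+j)}\le 2e^{-N(L+1)}$ for $N\ge 1$, whence $\varlimsup_{N\to\infty}\tfrac1N\ln\BP_N\{\nu_N\notin\KComp_L\}\le -(L+1)\le -L$.

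I expect the only subtle points to be topological bookkeeping: checking carefully that $\KComp_L$ is compact in the (nonstandard) topology of $\SubProb$ built in Section \ref{S:lsc}, and ensuring that the compacts $K_j$ can be chosen \emph{uniformly in $N$}, which requires tightness of the entire family $\{\UU_N\}_{N\in\N}$ rather than merely of its limit $\UU$. The probabilistic content is a routine Cram\'er-style computation.
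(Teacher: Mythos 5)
Your proof is correct and follows essentially the same route as the paper: tightness of the whole family $\{\UU_N\}_{N\in\N}$ (via Assumption \ref{A:LimitExists} and Prohorov), a candidate set built from uniform tail bounds $\nu(\XX\setminus K_j)\le$ small, and then a union bound combined with an exponential Chebychev estimate using $p e^\theta+1-p\le e^{\theta}$ (i.e.\ $\lambda_p(\theta)\le\theta$), with only a cosmetic difference in the choice of thresholds and Chernoff parameters and in verifying closedness directly (lower semicontinuity of $\nu\mapsto\nu(\XX\setminus K_j)$) rather than taking a closure as the paper does.
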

\begin{proof}  First note that Assumption \ref{A:LimitExists} implies that $\{\UU_N\}_{N\in \N}$ is tight.  Thus for each $j\in \N$, there is a compact subset $K_j$ of $\XX$
such that
\begin{equation*} \sup_{N\in \N}\UU_N(\XX\setminus K_j)\le \frac{1}{(L+j)^2}. \end{equation*}
We will define
\begin{equation*} \KComp_L \Def \overline{\lb \nu\in \SubProb:\, \text{$\nu(\XX\setminus K_j)\le \frac{1}{L+j}$ for all $j\in \N$}\rb}. \end{equation*}
Then $\KComp_L$ is compact, and we have that
\begin{multline*} \BP_N\{\nu_N\not \in \KComp_L\} \le \sum_{j=1}^\infty \BP_N\lb \nu_N(\XX\setminus K_j)\ge \frac{1}{L+j}\rb \\
\le \sum_{j=1}^\infty \BP_N\lb N(L+j)^2\nu_N(\XX\setminus K_j)\ge N(L+j)\rb\\
\le \sum_{j=1}^\infty \exp\left[-N(L+j)\right]\BE_N\left[\exp\left[N(L+j)^2\nu_N(\XX\setminus K_j)\right]\right]\end{multline*}
We now compute that
\begin{multline*} \BE_N\left[\exp\left[N(L+j)^2\nu_N(\XX\setminus K_j)\right]\right]
= \prod_{n=1}^N \BE_N\left[\exp\left[N(L+j)^2\sum_{n=1}^N \Delta_n \chi_{\XX\setminus K_j}(\ppp^{N,n})\right]\right] \\
= \exp\left[N\sum_{n=1}^N \lambda_{p^{N,n}}\left( (L+j)^2\chi_{\XX\setminus K_j}(\ppp^{N,n})\right)\right]
\le \exp\left[ N(L+j)^2 \UU_N(\XX\setminus K_j)\right]\le e^N. \end{multline*}
We have used here the calculation that for $\theta>0$,
\begin{equation*} \lambda_p(\theta)\le \ln \left(pe^\theta + (1-p)e^\theta\right) = \theta. \end{equation*}
Combining things together, we get that
\begin{equation*} \BP_N\{\nu_N\not\in \KComp_L\} \le \sum_{j=1}^\infty e^{-N(L+j)}e^N = e^{-NL}\sum_{j=1}^\infty e^{-N(j-1)} \le e^{-NL}\sum_{j=1}^\infty e^{-(j-1)} =  \frac{e^{-NL}}{1-e^{-1}}. \end{equation*}

\end{proof}

We can now get the full upper bound.
\begin{proof}[Proof of Proposition \ref{P:Upper}]  Fix $L>0$.  Then
\begin{equation*} \BP_N\{Z_N\in F\} \le \BP_N\{Z_N\in F,\, \nu_N\in \KComp_L\} + \BP_N\{\nu_N\not\in \KComp_L\}. \end{equation*}
We use Proposition \ref{P:UpperComp} on the first term together with the fact that $[0,1]\times \KComp_L$ is compact.
We use Proposition \ref{P:tightness} on the second term.  Noting that (Lemma 1.2.15 in \cite{MR1619036})
\begin{equation*}
\varlimsup_{N\to \infty}\frac{1}{N}\ln\left(A_N+B_N\right)=\max\{\varlimsup_{N\to \infty}\frac{1}{N}\ln A_N, \varlimsup_{N\to \infty}\frac{1}{N}\ln B_N\}
\end{equation*}
we get
\begin{equation*} \varlimsup_{N\to \infty}\frac{1}{N}\ln \BP_N\{Z_N\in F,\, \nu_N\in \KComp_L\} \le -\inf_{\substack{z=(\ell,\nu)\in F\\ \nu\in \KComp_L}}I(z)\le -\inf_{z\in F}I(z). \end{equation*}
\end{proof}

\section{Alternative Expression for the Rate Function}\label{S:AlternateExpression}

In this section, we discuss the alternative expression for the rate function $I'$ of Theorem \ref{T:Main} given by Theorem \ref{T:AlternativeExpression}.  In particular, this alternative representation shows that $I'(\ell)$ has a natural interpretation as the favored way to rearrange recoveries and losses among the different types.  In addition to providing intuitive insight, this alternative expression suggests numerical schemes for computing the rate function.  We will rigorously verify that the alternative expression is correct, but will be heuristic in our discussion of the numerical schemes.

We defer the proof of  Theorem \ref{T:AlternativeExpression} to the end of this section and we first study the variational problem \eqref{E:IDef} using a Lagrange multiplier approach. Even though an explicit expression is usually not
available, one can use numerical optimization
techniques to calculate the quantities involved. In order to do
that, we firstly  recall that we can rewrite $J'$ of \eqref{E:IDef}
as a two-stage minimization problem, see expression \eqref{E:IIDef}.

This naturally suggests an analysis via a Lagrangian.  Define
\begin{multline*} L(\Phi,\Psi,\lambda_1,\lambda_2) = \int_{\ppp=(p,\wprho)\in \XX} \lb \Phi(p)I_\wprho\left(\Psi(\ppp),D\right) + \hbar_p(\Phi(\ppp))\rb \UU(d\ppp) \\
-\lambda_1\lb \int_{\ppp\in \XX} \Phi(\ppp)\Psi(\ppp)\UU(d\ppp) -\ell\rb
-\lambda_2\lb \int_{\ppp\in \XX} \Phi(\ppp)\UU(d\ppp) -D\rb. \end{multline*}
Let's assume that $\Phi^*$ and $\Psi^*$ are the minimizers.  Let's also assume that $I_\wprho(\cdot,D)$ is differentiable for all $\ppp=(p,\wprho)$ in the support of $\UU$.  We
should then have that for every $\eta_1$ and $\eta_2$ in $\BB=B(\XX;[0,1])$,
\begin{align*}
\int_{\ppp=(p,\wprho)\in \XX} \eta_1(\ppp)\lb I_\wprho\left(\Psi^*(\ppp),D\right) + \hbar'_p(\Phi^*(\ppp))-\lambda_1 \Psi^*(\ppp)- \lambda_2 \rb \UU(d\ppp)&=0 \\
\int_{\ppp=(p,\wprho)\in \XX} \eta_2(\ppp)\Phi^*(\ppp)\lb I'_\wprho\left(\Psi^*(\ppp),D\right) - \lambda_1\rb \UU(d\ppp)&=0 \end{align*}
Ignoring any complications which would arise
on the set where $\Phi^*=0$, we should then have that
\begin{align*} I_\wprho\left(\Psi^*(\ppp),D\right) + \hbar'_p(\Phi^*(\ppp))&=\lambda_1\Psi^*(\ppp)+\lambda_2 \\
I'_\wprho\left(\Psi^*(\ppp),D\right) &=\lambda_1 \end{align*}
for all $\ppp=(p,\wprho)\in \XX$.
This is a triangular system; the first equation depends on both
$\lambda_1$ and $\lambda_2$, but the second depends only on $\lambda_1$.
Recalling now \eqref{E:LFT} and the structure of Legendre-Fenchel transforms,
we should have that
\begin{gather*} M'_{\wprho}(\lambda_1,D)=\Psi^*(\ppp)\\
\hbar'_p(\Phi^*(\ppp)) = \lambda_2+\lambda_1M'_\wprho(\lambda_1,D)-I_\wprho(\Psi^*(\ppp),D)=\lambda_2+M_\wprho(\lambda_1,D) \end{gather*}
for all $\ppp=(p,\wprho)\in \XX$.
We can then invert this.  This leads us to the following.  Define
\begin{align*} \Phi_{\lambda_1,\lambda_2,D}(p,\wprho) &\Def \frac{p e^{\lambda_2+M_{\wprho}(\lambda_1,D)}}{1-p+p e^{\lambda_2+M_{\wprho}(\lambda_1,D)}} \qquad \lambda_1,\lambda_2\in \R,\, (p,\wprho)\in \XX\\
\Psi_{\lambda_1,D}(\wprho)&\Def M'_{\wprho}(\lambda_1,D) \qquad \lambda_1\in \R,\, \wprho\in C([0,1];\Pint)
\end{align*}
where $(\lambda_1,\lambda_2)=(\lambda_1(\ell,D,\UU),\lambda_2(\ell,D,\UU))$ is such that
\begin{align*} \int_{\ppp\in\XX}\Phi_{\lambda_1,\lambda_2,D}(\ppp)\UU(d\ppp) &=D\\
\int_{\ppp\in\XX}\Phi_{\lambda_1,\lambda_2,D}(\ppp)\Psi_{\lambda_1,\lambda_2,D}(\ppp)\UU(d\ppp) &=\ell. \end{align*}

We conclude this section with the rigorous proof of the alternate representation.
\begin{proof}[Proof of Theorem \ref{T:AlternativeExpression}]
First, we prove that $J'(\ell)\geq I'(\ell)$. Consider any $\Phi$ and $\Psi \in \BB$ such that
\begin{equation}\label{E:require}\int_{\ppp\in\XX}\Phi(\ppp)\Psi(\ppp)\UU(d\ppp)=\ell.\end{equation}
For any $\theta\in \R$,
\begin{align*}
&\int_{\ppp=(p,\wprho)\in\XX}\lb\Phi(\ppp)
I_\wprho\left(\Psi(\ppp),\int_{\ppp=(p,\wprho)\in \XX}\Phi(\ppp)\UU(d\ppp)\right)+
\hbar_p(\Phi(\ppp)) \rb
\UU(d\ppp)\\
&\qquad =\int_{\ppp=(p,\wprho)\in\XX}\lb\Phi(\ppp) \sup_{\theta'\in\R}\lb
\theta'\Psi(\ppp)-M_{\wprho}\left(\theta', \int_{\ppp=(p,\wprho)\in
\XX}\Phi(\ppp)\UU(d\ppp) \right)\rb + \hbar_p(\Phi(\ppp)) \rb
\UU(d\ppp)\\
&\qquad \geq\int_{\ppp=(p,\wprho)\in\XX}\lb\Phi(\ppp) \lb
\theta\Psi(\ppp)-M_{\wprho}\left(\theta, \int_{\ppp=(p,\wprho)\in
\XX}\Phi(\ppp)\UU(d\ppp) \right)\rb + \hbar_p(\Phi(\ppp)) \rb
\UU(d\ppp)\\
&\qquad =\theta\ell-\int_{\ppp=(p,\wprho)\in\XX}\lb\Phi(\ppp) M_{\wprho}\left(\theta,
\int_{\ppp\in \XX}\Phi(\ppp)\UU(d\ppp) \right) + \hbar_p(\Phi(\ppp)) \rb
\UU(d\ppp).
\end{align*}
Define $\nu\in \SubProb$ as
\begin{equation*} \nu(A)\Def \int_{\ppp\in A}\Phi(\ppp)\UU(d\ppp);\qquad A\in \Borel(\XX) \end{equation*}
then
\begin{multline*}
\int_{\ppp=(p,\wprho)\in\XX}\lb\Phi(\ppp) I_\wprho\left(\Psi(\ppp),\int_{\ppp\in
\XX}\Phi(\ppp)\UU(d\ppp)\right)+ \hbar_p(\Phi(\ppp)) \rb
\UU(d\ppp)\\
\geq\theta\ell-\int_{\ppp=(p,\wprho)\in\XX}
M_{\wprho}(\theta,\nu(\XX))\nu(d\ppp) + \int_{\ppp=(p,\wprho)\in\XX}
\hbar_p\left(\frac{d\nu}{d\UU}(\ppp)\right) \UU(d\ppp).
\end{multline*}
Varying $\theta$, we get that
\begin{equation*}
\int_{\ppp=(p,\wprho)\in\XX}\lb\Phi(\ppp) I_\wprho\left(\Psi(\ppp),\int_{\ppp=(p,\wprho)\in
\XX}\Phi(\ppp)\UU(d\ppp)\right)+ \hbar_p(\Phi(\ppp)) \rb
\UU(d\ppp)\geq \Lambda_\nu^*(\ell)+H(\nu)\geq I'(\ell)
\end{equation*}
and then varying $\Phi$ and $\Psi$  in $\BB$ (such that \eqref{E:require} holds), we get that $J'(\ell)\geq I'(\ell)$.

To show that $I'(\ell)\geq J'(\ell)$, fix $\nu\in \SubProb$ such  that $\nu\ll\UU$.   We want to show that
\begin{equation}\label{E:Lcomp} \Lambda_\nu^*(\ell)+H(\nu)\ge J'(\ell). \end{equation}

For all $\wprho\in C([0,1];\Pint)$, define
\begin{align*}
\alpha_-(\wprho,D)&\Def\inf\{1-r\in[0,1]: r\in\supp \wprho(D,\cdot)\}\\
\alpha_+(\wprho,D)&\Def\sup\{1-r\in[0,1]: r\in\supp \wprho(D,\cdot)\}.\
\end{align*}
Dominated convergence implies that
\begin{align*}
\lim_{\theta\to-\infty}\Lambda'_\nu(\theta)&=\bar{\alpha}_-\Def \int_{\ppp=(p,\wprho)\in\XX} \alpha_-(\wprho,\nu(\XX))\nu(d\ppp)\\
\lim_{\theta\to\infty}\Lambda'_\nu(\theta)&=\bar{\alpha}_+\Def \int_{\ppp=(p,\wprho)\in\XX}\alpha_+(\wprho,\nu(\XX))\nu(d\ppp).
\end{align*}
From \eqref{E:LambdaI} and the monotonicity of moment generating functions,
we can see that $\Lambda_\nu$ is nondecreasing; thus $(\bar{\alpha}_-, \bar{\alpha}_+)\in\Lambda'_\nu(\R)$.
This leads to three possible cases.

\textbf{Case 1}:
Assume that $\ell\in (\bar{\alpha}_-, \bar{\alpha}_+)$, and let $\theta^*\in\R$ be such that $\Lambda'_\nu(\theta^*)=\ell$; i.e.,
\begin{equation}\label{E:lambdader}\int_{\ppp=(p,\wprho)\in \XX} M'_{\wprho}(\theta^*,\nu(\XX)) \nu(d\ppp)=\ell\end{equation}
Then
\begin{align*}
\Lambda^*_\nu(\ell)+H(\nu)&=
\sup_{\theta\in\R}\lb\theta\ell -
\int_{\ppp=(p,\wprho)\in\XX}M_{\wprho}(\theta,\nu(\XX))\nu(d\ppp)\rb+\int_{\ppp=(p,\wprho)\in\UU}\hbar_p\left(\frac{d\nu}{d\UU}(\ppp)\right)\UU(d\ppp)\\
&\geq\theta^*\ell -
\int_{\ppp=(p,\wprho)\in\XX}M_{\wprho}(\theta^*,\nu(\XX))\nu(d\ppp)+\int_{\ppp=(p,\wprho)\in\UU}\hbar_p\left(\frac{d\nu}{d\UU}(\ppp)\right)\UU(d\ppp)\\
&= \theta^*\Lambda'_\nu(\theta^*) -
\int_{\ppp=(p,\wprho)\in\XX}M_{\wprho}(\theta^*,\nu(\XX))\nu(d\ppp)+\int_{\ppp=(p,\wprho)\in\UU}\hbar_p\left(\frac{d\nu}{d\UU}(\ppp)\right)\UU(d\ppp)\\
&=
\int_{\ppp=(p,\wprho)\in\XX}\left(\theta^* M'_{\wprho}(\theta^*,\nu(\XX)) -M_{\wprho}(\theta^*,\nu(\XX))\right)\nu(d\ppp)+\int_{\ppp=(p,\wprho)\in\UU}\hbar_p\left(\frac{d\nu}{d\UU}(\ppp)\right)\UU(d\ppp)\\
&=
\int_{\ppp=(p,\wprho)\in\XX} I_{\wprho}(M'_{\wprho}(\theta^*,\nu(\XX)),\nu(\XX)) \nu(d\ppp)+\int_{\ppp=(p,\wprho)\in\UU}\hbar_p\left(\frac{d\nu}{d\UU}(\ppp)\right)\UU(d\ppp).
\end{align*}
Define now $\Phi(\ppp)\Def \frac{d\nu}{d\UU}(\ppp)$ and
$\Psi(\ppp)\Def M'_{\wprho}(\theta^*,\nu(\XX))$.  By (\ref{E:smalla}) and (\ref{Eq:Duals_M_I}) respectively we have that $\Phi,\Psi\in\BB$. Then \eqref{E:lambdader} is exactly that $\int_{\ppp\in\XX}\Phi(\ppp)\Psi(\ppp)\UU(d\ppp)=\ell$.  Thus
\begin{equation*}
\Lambda^*_\nu(\ell)+H(\nu)\geq
\int_{\ppp=(p,\wprho)\in\XX}\left[\Phi(\ppp)I_{\wprho}\left(\Psi(\ppp),
\int_{\ppp\in \XX}\Phi(\ppp)\UU(d\ppp)
\right)+\hbar_p\left(\Phi(\ppp)\right)\right]\UU(d\ppp)\geq J'(\ell).\end{equation*}
This is exactly (\ref{E:Lcomp}).

\textbf{Case 2}:
Assume next that $\ell\in [\bar\alpha_+, 1]$.  For every $\wprho\in C([0,1];\Pint)$, define
\begin{equation*} \Err^\wprho_+(\theta) \Def M_\wprho(\theta,\nu(\XX))-\theta \alpha_+(\wprho,\nu(\XX))
=\ln \int_{r\in[0,1]}e^{-\theta \left(\alpha_+(\wprho,\nu(\XX))-(1-r)\right)} \wprho(\nu(\XX),dr).
\end{equation*}
for all $\theta\in\R$; thus
\begin{align*} M_\wprho(\theta,\nu(\XX)) &= \theta \alpha_+(\wprho,\nu(\XX)) + \Err^\wprho_+(\theta) \qquad \wprho\in C([0,1];\Pint) \\
\Lambda_\nu(\theta) &= \theta \bar \alpha_+ + \int_{\ppp=(p,\wprho)\in \XX}\Err^\wprho_+(\theta)\nu(d\ppp) \end{align*}
for all $\theta\in \R$.  For all $\wprho\in C([0,1];\Pint)$ and  $(1-r)\in\supp \wprho(\nu(\XX),\cdot)$,  the mapping $\theta\mapsto e^{-\theta \left(\alpha_+(\wprho,\nu(\XX))-(1-r)\right)}$ is decreasing and maps $[0,\infty)$ into $(0,1]$.  Monotone convergence implies that
\begin{equation*} \lim_{\theta\to \infty}\Err^\wprho_+(\theta) = \ln \wprho\{\nu(\XX),1-\alpha_+(\wprho,\nu(\XX))\}. \end{equation*}
If $\ell>\bar\alpha_+$, then we can use the fact that $\int_{\ppp=(p,\wprho)\in \XX}\Err^\wprho_+(\theta)\nu(d\ppp)\le 0$ for all $\theta>0$ to see that
\begin{equation*}
\Lambda^*_\nu(\ell)\geq \varlimsup_{\theta\to \infty}\lb \theta (\ell-\bar\alpha_+)-\int_{\ppp=(p,\wprho)\in \XX}\Err^\wprho_+(\theta)\nu(d\ppp)\rb\geq \varlimsup_{\theta\to \infty}\lb \theta (\ell-\bar\alpha_+)\rb=\infty\geq J'(\ell).
\end{equation*}
If $\ell=\bar\alpha_+$, then by the monotonicity of the $\Err^\wprho_+$'s,
\begin{equation}I_\wprho(\alpha_+(\wprho,\nu(\XX)),\nu(\XX)) = \sup_{\theta\in\R}\lb -\Err^\wprho_+(\theta)\rb= \varlimsup_{\theta\to \infty}\lb -\Err^\wprho_+(\theta)\rb = \ln \left(\frac{1}{\wprho\{\nu(\XX),1-\alpha_+(\wprho,\nu(\XX))\}}\right) \end{equation}
for all $\wprho\in C([0,1];\Pint)$, and
\begin{multline*} \Lambda^*_\nu(\ell)= \sup_{\theta\in\R}\lb -\int_{\ppp=(p,\wprho)\in \XX}\Err^\wprho_+(\theta)\nu(d\ppp)\rb= \varlimsup_{\theta\to \infty}\lb -\int_{\ppp=(p,\wprho)\in \XX}\Err^\wprho_+(\theta)\nu(d\ppp)\rb\\
=\int_{\ppp=(p,\wprho)\in\XX}\ln \left(\frac{1}{\wprho\{\nu(\XX),1-\alpha_+(\wprho,\nu(\XX))\}}\right) \nu(d\ppp) =\int_{\ppp=(p,\wprho)\in\XX}I_\wprho(\alpha_+(\wprho,\nu(\XX)))\nu(d\ppp).
 \end{multline*}
Defining $\Phi(\ppp)\Def \frac{d\nu}{d\UU}(\ppp)$ and $\Psi(\ppp)=\alpha_+(\wprho,\nu(\XX))$, we have that
\begin{equation*} \int_{\ppp\in \XX}\Phi(\ppp)\Psi(\ppp)\UU(d\ppp) = \int_{\ppp=(p,\wprho)\in\XX}\alpha_+(\wprho,\nu(\XX))\nu(d\ppp) =\ell. \end{equation*}
Collecting things together, we see that if $\ell=\bar\alpha_+$, we again get \eqref{E:Lcomp}.

\textbf{Case $3$}:
We finally assume that $\ell\in [0,\bar\alpha_-]$.  The calculations are very similar to those of Case 2.
For every $\wprho\in C([0,1];\Pint)$, define
\begin{equation*} \Err^\wprho_-(\theta) \Def M_\wprho(\theta,\nu(\XX))-\theta \alpha_-(\wprho,\nu(\XX))
=\ln \int_{r\in[0,1]}e^{\theta \left((1-r)-\alpha_-(\wprho,\nu(\XX))\right)} \wprho(\nu(\XX),dr).
\end{equation*}
for all $\theta\in\R$, so that
\begin{align*} M_\wprho(\theta,\nu(\XX)) &= \theta \alpha_-(\wprho,\nu(\XX)) + \Err^\wprho_-(\theta) \wprho\in C([0,1];\Pint) \\
\Lambda_\nu(\theta) &= \theta \bar \alpha_- + \int_{\ppp=(p,\wprho)\in \XX}\Err^\wprho_-(\theta)\nu(d\ppp)  \end{align*}
for all $\theta\in \R$.  For all $\wprho\in C([0,1];\Pint)$ and  $(1-r)\in\supp \wprho(\nu(\XX),\cdot)$,  the mapping $\theta\mapsto e^{\theta \left((1-r)-\alpha_-(\wprho,\nu(\XX))\right)}$ is increasing and maps $(-\infty,0]$ into $(0,1]$.  Monotone convergence implies that
\begin{equation*} \lim_{\theta\to -\infty}\Err^\wprho_-(\theta) = \ln \wprho\{\nu(\XX),1-\alpha_-(\wprho,\nu(\XX))\}. \end{equation*}
If $\ell<\bar\alpha_-$, then we can use the fact that $\int_{\ppp=(p,\wprho)\in \XX}\Err^\wprho_-(\theta)\nu(d\ppp)\ge 0$ for all $\theta<0$ to see that
\begin{equation*}
\Lambda^*_\nu(\ell)\geq \varlimsup_{\theta\to -\infty}\lb \theta (\ell-\bar\alpha_-)-\int_{\ppp=(p,\wprho)\in \XX}\Err^\wprho_-(\theta)\nu(d\ppp)\rb\geq \varlimsup_{\theta\to -\infty}\lb \theta (\ell-\bar\alpha_-)\rb=\infty\geq J'(\ell).
\end{equation*}
If $\ell=\bar\alpha_-$, then by the monotonicity of the $\Err^\wprho_-$'s,
\begin{equation*}I_\wprho(\alpha_-(\wprho,\nu(\XX)),\nu(\XX)) = \sup_{\theta\in\R}\lb -\Err^\wprho_-(\theta)\rb= \lim_{\theta\to -\infty}\lb -\Err^\wprho_-(\theta)\rb = \ln \left(\frac{1}{\wprho\{1-\alpha_-(\wprho,\nu(\XX))\}}\right)\end{equation*}
for all $\wprho\in C([0,1];\Pint)$, and
\begin{multline*} \Lambda^*_\nu(\ell)= \sup_{\theta\in\R}\lb -\int_{\ppp=(p,\wprho)\in \XX}\Err^\wprho_-(\theta)\nu(d\ppp)\rb= \lim_{\theta\to -\infty}\lb -\int_{\ppp=(p,\wprho)\in \XX}\Err^\wprho_-(\theta)\nu(d\ppp)\rb\\
=\int_{\ppp=(p,\wprho)\in\XX}\ln \left(\frac{1}{\wprho\{\nu(\XX),1-\alpha_-(\wprho,\nu(\XX))\}}\right) \nu(d\ppp)=\int_{\ppp=(p,\wprho)\in\XX}I_\wprho(\alpha_-(\wprho,\nu(\XX)))\nu(d\ppp).
\end{multline*}
Defining $\Phi(\ppp)\Def \frac{d\nu}{d\UU}(\ppp)$ and $\Psi(\ppp)=\alpha_-(\wprho,\nu(\XX))$, we have that
\begin{equation*} \int_{\ppp\in \XX}\Phi(\ppp)\Psi(\ppp)\UU(d\ppp) = \int_{\ppp=(p,\wprho)\in\XX}\alpha_-(\wprho,\nu(\XX))\nu(d\ppp) =\ell \end{equation*}
again implying \eqref{E:Lcomp}.

Collecting things together, we have \eqref{E:IDef}.  We get \eqref{E:IIDef} by definng $D\Def\int_{\ppp\in \XX}\Phi(\ppp)\UU(d\ppp)$.  Note that since $\Phi$ and $\Psi$
both take values in $[0,1]$,
\begin{equation*} \int_{\ppp\in \XX}\Psi(\ppp)\Phi(\ppp)\UU(d\ppp)\le \int_{\ppp\in \XX}\Phi(\ppp)\UU(d\ppp). \end{equation*}
This allows us to restrict the minimization in $D$ to the interval $[\ell,1]$.
\end{proof}

\section{Detailed structure of $H$}\label{S:AuxiliaryLemmas}

In this section we  prove Lemmas \ref{L:approximation} and \ref{L:Hrep}. The discussion in this section is somewhat technical.

\noindent
Recall the quantities defined in Definition \ref{E:LFT}.   Note that
\begin{equation}\label{E:hbarcalcs}\hbar_0(x) = \begin{cases} 0 &\text{if $x=0$} \\
\infty &\text{else}\end{cases} \qquad \text{and}\qquad \hbar_1(x) = \begin{cases} 0 &\text{if $x=1$} \\
\infty &\text{else}\end{cases} \end{equation}
Thus if $H(\nu)<\infty$, we can restrict the region of integration to get that
\begin{equation*}\int_{\ppp=(p,\wprho)\in \XX}\chi_{\{0,1\}}(p)\hbar_p\left(\frac{d\nu}{d\UU}(\ppp)\right)\UU(d\ppp)<\infty \end{equation*}
so in fact
\begin{equation}\label{E:smallb}\begin{aligned}\UU\lb \ppp=(p,\wprho)\in \XX:\, \text{$p=0$ and $\frac{d\nu}{d\UU}(\ppp)\not = 0$}\rb &=0 \\
\UU\lb \ppp=(p,\wprho)\in \XX:\, \text{$p=1$ and $\frac{d\nu}{d\UU}(\ppp)\not = 1$}\rb &=0.\end{aligned}\end{equation}

Fix $\nu\in \SubProb$ such that $H(\nu)<\infty$.  The main technical challenges in both proofs is to stay away from the singularities in $\hbar_p$ and $\hbar'_p$.  Note that
\begin{equation*}\hbar'_p(x) = \ln \left(\frac{x}{1-x}\frac{1-p}{p}\right), \qquad x,p\in (0,1)\end{equation*}
and keeping \eqref{E:hbarcalcs} in mind, we thus need to be careful near $p\in \{0,1\}$, and for $(x,p)\in \{0,1\}\times (0,1)$.

To start, let's note some implications of the assumption that $H(\nu)<\infty$.  Clearly $\nu\ll \UU$.  Secondly,
\begin{equation}\label{E:smalla} \UU\lb \ppp\in \XX:\, \frac{d\nu}{d\UU}(\ppp)>1\rb =0. \end{equation}
Let's now do the following.  Fix $N\in \N$.  Define
\begin{equation}\label{E:XiDef} \xi_N(\ppp) \Def \begin{cases} p &\text{if $p\not\in \left[\frac{1}{N},1-\frac{1}{N}\right]$} \\
\frac{d\nu}{d\UU}(\ppp) &\text{if $p\in \left[\frac{1}{N},1-\frac{1}{N}\right]$ and $\frac{d\nu}{d\UU}(\ppp)\in \left(\frac{1}{N},1-\frac{1}{N}\right)$} \\
\frac{1}{N}& \text{if $p\in \left[\frac{1}{N},1-\frac{1}{N}\right]$ and $\frac{d\nu}{d\UU}(\ppp)\le \frac{1}{N}$} \\
1-\frac{1}{N}& \text{if $p\in \left[\frac{1}{N},1-\frac{1}{N}\right]$ and $\frac{d\nu}{d\UU}(\ppp)\ge 1-\frac{1}{N}$} \end{cases}\end{equation}
Clearly $0\le \xi_N\le 1$, so we can define $\nu_N\in \SubProb$ as
\begin{equation*}\nu_N(A) \Def \int_{\ppp\in A}\xi_N(\ppp)\UU(d\ppp). \qquad A\in \Borel(\XX) \end{equation*}
In light of \eqref{E:smalla} and \eqref{E:smallb}, $\lim_{N\to \infty}\xi_N=\frac{d\nu}{d\UU}$ $\UU$-a.s., so it follows that $\lim_{N\to \infty}\nu_N=\nu$.
We next compute that
\begin{equation*} \hbar_p(\xi_N(\ppp)) = \begin{cases} 0 &\text{if $p\not\in \left[\frac1{N},1-\frac{1}{N}\right]$} \\
\hbar_p\left(\frac{d\nu}{d\UU}(\ppp)\right) &\text{if $\frac1{N}\le p\le 1-\frac{1}{N}$ and $\frac{d\nu}{d\UU}(\ppp)\in \left(\frac{1}{N},1-\frac{1}{N}\right)$} \\
\hbar_p\left(\frac{1}{N}\right)& \text{if $\frac1{N}\le p\le 1-\frac{1}{N}$ and $\frac{d\nu}{d\UU}(\ppp)\le \frac{1}{N}$} \\
\hbar_p\left(1-\frac{1}{N}\right)& \text{if $\frac1{N}\le p\le 1-\frac{1}{N}$ and $\frac{d\nu}{d\UU}(\ppp)\ge 1-\frac{1}{N}$} \end{cases}\end{equation*}
Using again \eqref{E:smalla} and \eqref{E:smallb}, we have that $\lim_{N\to \infty}\hbar_p(\xi_N(\ppp))=\hbar_p\left(\frac{d\nu}{d\UU}(\ppp)\right)$
for $\UU$-almost-all $\ppp=(p,\wprho)\in \XX$.  If $p\in \left[\frac1{N},1-\frac{1}{N}\right]$, then $\hbar_p$ is increasing on $[p,1]\supset \left[1-\frac1{N},1\right]$
and decreasing on $[0,p]\supset\left[0,\frac{1}{N}\right]$.  Thus $\hbar_p(\xi_N(\ppp))\le \hbar_p\left(\frac{d\nu}{d\UU}(\ppp)\right)$
for $\UU$-almost-all $\ppp=(p,\wprho)\in \XX$.  Dominated convergence thus implies that $\lim_{N\to \infty}H(\nu_N)=H(\nu)$.

\begin{proof}[Proof of Lemma \ref{L:approximation}]
Fix $N\in N$; we want to approximate $\xi_N$ by ``nice'' elements of $C(\XX)$.  Note that
\begin{equation*} \xi_N(\ppp) = p\chi_{[0,1]\setminus [N^{-1},1-N^{-1}]}(p) + \chi_{[N^{-1},1-N^{-1}]}(p)\xi_N(\ppp). \end{equation*}
Since $\UU$ is regular (recall that $\XX$ is Polish), we can approximate $\ppp=(p,\wprho)\mapsto \chi_{[N^{-1},1-N^{-1}]}(p)\xi_N(\ppp)$
by elements of $C(\XX)$.   From \eqref{E:XiDef}, we have that $N^{-1}\le \xi_N(\ppp)\le 1-N^{-1}$ if $\ppp=(p,\wprho)\in \XX$ is such that $N^{-1}\le p\le 1-N^{-1}$,
so we can truncate these approximations at $N^{-1}$ and $1-N^{-1}$ without any loss.  Namely,
there is a sequence $(\xi^{1,\eps})_{\eps>0}$ in $C(\XX)$ such that
\begin{equation} \label{E:xitilde} \begin{gathered} N^{-1}\le \xi^{1,\eps}\le 1-N^{-1} \\
\lim_{\eps\searrow 0}\int_{\ppp=(p,\wprho)\in \XX}\left|\xi^{1,\eps}(\ppp)-\chi_{[N^{-1},1-N^{-1}]}(p)\xi_N(\ppp)\right|\UU(d\ppp)=0. \end{gathered}\end{equation}
For each $\eps>0$, let $\varphi_\eps\in C([0,1];[0,1])$ be such that $\varphi_\eps(u)=1$ if $u\in [N^{-1},1-N^{-1}]$ and $\varphi_\eps(u)=0$ if $u\in [0,1]\setminus [N^{-1}-\eps,1-N^{-1}+\eps]$.  For each $\eps>0$, define
\begin{equation*} \xi^{2,\eps}(\ppp) \Def p\lb 1- \varphi_\eps(p)\rb + \xi^{1,\eps}(\ppp)\varphi_\eps(p) \end{equation*}
for all $\ppp=(p,\wprho)\in \XX$.  Then $\xi^{2,\eps}\in C(\XX)$ for all $\eps>0$.  We also have that
\begin{multline*} \int_{\ppp=(p,\wprho)\in \XX}\left|\xi^{2,\eps}(\ppp)-\xi_N(\ppp)\right|\UU(d\ppp)\\
\le \UU\lb \ppp=(p,\wprho)\in \XX:  p\in [N^{-1}+\eps,1-N^{-1}+\eps]\setminus [N^{-1},1-N^{-1}]\rb \\
+ \int_{\ppp=(p,\wprho)\in \XX} \chi_{[N^{-1},1-N^{-1}]}(p)\left|\xi^{1,\eps}(\ppp)-\xi_N(\ppp)\right|\UU(d\ppp). \end{multline*}
Dominated convergence and \eqref{E:xitilde} then ensure that
\begin{equation}\label{E:Goofy} \lim_{\eps \to 0}\int_{\ppp=(p,\wprho)\in \XX}\left|\xi^{2,\eps}(\ppp)-\xi_N(\ppp)\right| \UU(d\ppp)=0. \end{equation}
Clearly $0\le \xi^{2,\eps}\le 1$, so we can define $\nu_{N,\eps}\in \SubProb$ as
\begin{equation*} \nu_{N,\eps}(A)\Def \int_{\ppp\in A}\xi^{2,\eps}(\ppp)\UU(d\ppp). \qquad A\in \Borel(\XX) \end{equation*}
Thanks to \eqref{E:Goofy}, we have that $\lim_{\eps \to 0}\nu_{N,\eps}=\nu_N$.  Note next that for $\ppp=(p,\wprho)\in \XX$ such that $p\in [0,1]\setminus [N^{-1}-\eps,1-N^{-1}+\eps]$,
\begin{equation*} \hbar_p(\xi^{2,\eps}(\ppp)) - \hbar_p(\xi_N(\ppp)) = \hbar_p(p)-\hbar_p(p)=0.\end{equation*}
If $\ppp=(p,\wprho)\in \XX$ is such that $p\in [N^{-1}-\eps,1-N^{-1}+\eps]$, then
\begin{equation} \label{E:stuff} N^{-1}-\eps\le \xi^{2,\eps}(\ppp)\le 1-N^{-1}+\eps, \end{equation}
so if $\eps<1/(2N)$,
\begin{equation*} \left|\hbar_p(\xi^{2,\eps}(\ppp))-\hbar_p(\xi_N(\ppp))\right| \le \vkap\left|\xi^{2,\eps}(\ppp)-\xi_N(\ppp)\right| \end{equation*}
where
\begin{equation*} \vkap\Def \sup\lb |\hbar'_p(x)|:\, \text{$\frac{1}{2N}\le x\le 1-\frac{1}{2N}$ and $\frac{1}{2N}\le p\le 1-\frac{1}{2N}$}\rb. \end{equation*}
Thus if $\eps<\frac{1}{2N}$,
\begin{equation*} \left|\hbar_p(\xi^{2,\eps}(\ppp))-\hbar_p(\xi_N(\ppp))\right| \le \vkap\left|\xi^{2,\eps}(\ppp)-\xi_N(\ppp)\right| \end{equation*}
for all $\ppp=(p,\wprho)\in \XX$.  Thanks to \eqref{E:Goofy}, we thus have that $\lim_{\eps \to 0}H(\nu_{N,\eps})=H(\nu)$.

We finally note that $\ppp=(p,\wprho)\mapsto \hbar'_p(\xi^{2,\eps}(\ppp))$ is continuous on $\{\ppp=(p,\wprho)\in \XX: p\in (N^{-1}-\eps,1-N^{-1}+\eps)\}$
(\eqref{E:stuff} ensures that $\tilde \xi_2$ takes values in $(0,1)$ in this case).
On  $\{\ppp=(p,\wprho)\in \XX: p\in (0,1)\setminus (N^{-1}-\eps,1-N^{-1}+\eps)\}$, we have that $\hbar'_p(\xi^{2,\eps}(\ppp))=\hbar'_p(p)=0$.  This finishes the proof.
\end{proof}

\begin{proof}[Proof of Lemma \ref{L:Hrep}]
Assume first that $\nu$ is not absolutely continuous with respect to $\UU$.  We will show that then the right-hand side of \eqref{E:Hrep} is infinite. Then there is an $A\in \Borel(\XX)$ such that $\nu(A)>0$
and $\UU(A)=0$.   Since $\XX$ is Polish, $\nu$ is regular; i.e.,
\begin{equation*} \nu(A) = \sup\lb \nu(F): \text{$F\subset A$, $F$ closed}\rb. \end{equation*}
Thus there is a closed subset $F$ of $A$ such that $\nu(F)>0$.  Fix also now $c>0$.  For each $n\in \N$, define
\begin{equation*} \phi_n(\ppp) \Def c\exp\left[-n \dist(\ppp,F)\right] \qquad \ppp\in \XX \end{equation*}
where $\dist(\ppp,F)$ is the distance (in $\XX$) from $x$ to $F$.  Then $0\le \phi_n\le c$ for all $n\in \N$, and $\phi_n\searrow c\chi_F$.  Recall the definition of $\lambda_p(\theta)$ from (\ref{Eq:Duals_lambda_h}). Since $\theta\mapsto \lambda_p(\theta)$ is
nondecreasing and continuous for each $p\in [0,1]$, we also have that $\lambda_p(\phi_n(\ppp))\searrow \lambda_p(c\chi_F(\ppp))$ for all $\ppp=(p,\wprho)\in \XX$.  Thus
\begin{multline*} \sup_{\phi\in C(\XX)}\lb \int_{\ppp\in \XX}\phi(\ppp)\nu(dp) - \int_{\ppp=(p,\wprho)\in \XX}\lambda_p(\phi(\ppp)) \UU(d\ppp)\rb\\
\ge \varlimsup_{n\to \infty}\lb \int_{\ppp\in \XX}\phi_n(\ppp)\nu(dp) - \int_{\ppp=(p,\wprho)\in \XX}\lambda_p(\phi_n(\ppp)) \UU(d\ppp)\rb
= c\nu(F). \end{multline*}
Let $c\nearrow \infty$ to see that the right-hand side of \eqref{E:Hrep} is infinite.

Assume next that $\nu \ll \UU$.  We use the fact that $\hbar_p$ and $\lambda_p$ are convex duals of each other.  For any $\phi\in C(\XX)$,
\begin{multline*} \int_{\ppp\in \XX}\phi(\ppp)\nu(d\ppp) - \int_{\ppp=(p,\wprho)\in \XX}\lambda_p(\phi(\ppp))\UU(d\ppp)\\
= \int_{\ppp\in \XX}\inf_{x\in \R}\lb \phi(\ppp)\left(\frac{d\nu}{d\UU}(\ppp) - x\right) + \hbar_p(x)\rb \UU(d\ppp)
\le \int_{\ppp=(p,\wprho)\in \XX}\hbar_p\left(\frac{d\nu}{d\UU}(\ppp)\right) \UU(d\ppp). \end{multline*}

To show the reverse inequality, let's write that
\begin{equation*} H(\nu) = \int_{\ppp=(p,\wprho)\in \XX}\sup_{\theta\in \R}\lb \theta \frac{d\nu}{d\UU}(\ppp)-\lambda_p(\theta)\rb \UU(d\ppp)
=\int_{\ppp=(p,\wprho)\in \XX}\lim_{N\to \infty}F_N(\ppp)\UU(d\ppp) \end{equation*}
where
\begin{equation}\label{E:FDef} F_N(\ppp) = \sup_{|\theta|\le N}\lb \theta \frac{d\nu}{d\UU}(\ppp)-\lambda_p(\theta)\rb \end{equation}
for all $N\in \N$ and $\ppp=(p,\wprho)\in \XX$.  We can explicitly solve this minimization problem; for $N\in \N$ and $\ppp=(p,\wprho)\in \XX$, define
\begin{equation*} \phi_N(\ppp) = \begin{cases}  \hbar'_p\left(\frac{d\nu}{d\UU}(\ppp)\right) &\text{if $p\in (0,1)$, $\frac{d\nu}{d\UU}(\ppp)\in (0,1)$, and $-N\le \hbar'_p\left(\frac{d\nu}{d\UU}(\ppp)\right)\le N$} \\
0 &\text{if $p\in \{0,1\}$ and $\frac{d\nu}{d\UU}(\ppp)=p$}\\
N\operatorname{sgn}\left(\frac{d\nu}{d\UU}(\ppp)-p\right) &\text{else}\end{cases}
\end{equation*}
(where $\operatorname{sgn}$ is the standard signum function).
Then
\begin{equation*} F_N(\ppp) = \phi_N(\ppp)\frac{d\nu}{d\UU}(\ppp)-\lambda_p(\phi_N(\ppp)) \end{equation*}
for all $\ppp=(p,\wprho)\in \XX$ and $N\in \N$.  Clearly $F_N$ and $\phi_N$ are measurable, and $\phi_{N}\in B(\XX)$.   From \eqref{E:FDef}, we also see that $F_N$ is nondecreasing in $N$.
Thus by (\ref{E:FDef}) and monotone convergence
\begin{multline*} H(\nu) = \lim_{N\to \infty}\int_{\ppp\in \XX}F_N(\ppp)\UU(d\ppp) = \lim_{N\to \infty}\int_{\ppp=(p,\wprho)\in \XX}\lb \phi_N(\ppp)\frac{d\nu}{d\UU}(\ppp)-\lambda_p(\phi_N(\ppp))\rb \UU(d\ppp)\\
\le \sup_{\varphi\in B(\XX)}\lb \int_{\ppp=(p,\wprho)\in \XX}\phi(\ppp)\nu(d\ppp) -  \int_{\ppp=(p,\wprho)\in \XX}\lambda_p(\phi(\ppp))\UU(d\ppp)\rb. \end{multline*}
Since $\XX$ is Polish, $\nu$ and $\UU$ are regular; and thus we can approximate elements of $B(\XX)$ by elements of $C(\XX)$, completing the proof.
\end{proof}

\bibliographystyle{alpha}

\begin{thebibliography}{DDD04}

\bibitem[ABRS05]{ABRS05}
Edward I. Altman, Brooks Brady, Andrea Resti, and Andrea Sironi.
\newblock The Link between Default and Recovery Rates: Theory, Empirical Evidence, and Implications.
\newblock {\em The Journal of Business}, vol. 78, no. 6, 2203--2227, 2005.

\bibitem[ADEH99]{ADEH99}
Philippe Artzner, Freddy Delbaen, Jean-Marc Eber and  Heath, David.
\newblock  Coherent measures of risk.
\newblock {\em Mathematical Finance}, 9(3), 203--228, 1999.

\bibitem[DDD04]{MR2022976}
Amir Dembo, Jean-Dominique Deuschel, and Darrell Duffie.
\newblock Large portfolio losses.
\newblock {\em Finance Stoch.}, 8(1):3--16, 2004.

\bibitem[DZ98]{MR1619036}
Amir Dembo and Ofer Zeitouni.
\newblock{Large deviations techniques and applications}.
\newblock Springer-Verlag, New York, 1998.

\bibitem[DS03]{DS03}
Darell Duffie and Kenneth J. Singleton,
\newblock {\em Credit risk: Pricing, measurement, and management.}
\newblock Princeton University Press, 2003.

\bibitem[G02]{Gordy02}
Michael B. Gordy.
\newblock Saddle point approximation of CreditRisk.
\newblock {\em Journal of  Business \& Finance.}, 26: 1335--1353, 2002.


\bibitem[G03]{Gordy03}
Michael B. Gordy.
\newblock A risk-factor model foundation for ratings-based bank capital rules.
\newblock {\em Journal of  Financial Intermediation.}, 12(3):199--232, 2003.

\bibitem[GL05]{GL05}
Paul Glasserman and Jinyi Li.
\newblock Importance sampling for portfolio credit risk.
\newblock {\em Management Science} 51, 1643--1656, 2005.

\bibitem[GKS07]{GKS07}
Paul Glasserman, Wanmo Kang and Perwez Shahabuddin.
\newblock Large deviations in multifactor portfolio credit risk.
\newblock {\em Mathematical Finance} 17 (3), 345--379, 2007.

\bibitem[LMS09]{leijdekker-2009}
Vincent Leijdekker, Michel Mandjes, and Peter Spreij.
\newblock Sample-path large deviations in credit risk, {\tt arXiv:0909.5610v1 [math.PR]}, 2009.

\bibitem[MFE05]{MFE05}
 Alexander J.  McNeil, Rudiger Frey, and Paul Embrechts.
 \newblock {\em Quantitative risk management: Concepts, techniques and tools}.
 \newblock Princeton University Press, 2005.

\bibitem[FM02]{FM02}
 Rudiger Frey and  Alexander J. McNeil.
 \newblock VaR and expected shortfall in portfolios of dependent credit risks: Conceptual and practical insights.
 \newblock {\em Journal of Banking and Finance} 26, 1317--1334, 2002.

\bibitem[FM03]{FM03}
 Rudiger Frey and  Alexander J. McNeil.
 \newblock Dependent defaults in models of portfolio credit risk.
 \newblock {\em Journal of Risk} 6(1), 311-329, 2003.



\bibitem[Pha07]{MR2384674}
Huy{\^e}n Pham.
\newblock Some applications and methods of large deviations in finance and
  insurance.
\newblock In {\em Paris-{P}rinceton {L}ectures on {M}athematical {F}inance
  2004}, volume 1919 of {\em Lecture Notes in Math.}, pages 191--244. Springer,
  Berlin, 2007.

\bibitem[SH09]{Das-Hanouna}
Das Sanjiv and Paul Hanouna.
\newblock Implied recovery.
\newblock {\em Journal of Economic Dynamics and Control}, 33(11):1837--1857,
  2009.

\bibitem[Sowa]{SowersCDOI}
Richard~B. Sowers.
\newblock Exact pricing asymptotics of investment-grade tranches of synthetic
  cdo's part i: A large homogeneous pool.
\newblock {\em International Journal of Theoretical and Applied Finance}, 13(3): 367-401, 2010.



\bibitem[Str93]{MR1267569}
Daniel~W. Stroock.
\newblock {\em Probability theory, an analytic view}.
\newblock Cambridge University Press, Cambridge, 1993.

\end{thebibliography}
\def\cprime{$'$} \def\cprime{$'$} \def\cprime{$'$} \def\cprime{$'$}
  \def\polhk#1{\setbox0=\hbox{#1}{\ooalign{\hidewidth
  \lower1.5ex\hbox{`}\hidewidth\crcr\unhbox0}}} \def\cprime{$'$}
  \def\cprime{$'$} \def\cprime{$'$} \def\cprime{$'$} \def\cprime{$'$}

\end{document}